\documentclass[12pt,a4paper]{article}

\usepackage[a4paper, margin=1in]{geometry}

\usepackage{setspace}
\usepackage[T1]{fontenc}
\usepackage[utf8]{inputenc}
\usepackage{float}
\usepackage{authblk}
\usepackage{graphicx}
\usepackage{subcaption} 
\usepackage[thinlines]{easytable}
\usepackage{hyperref}
\usepackage{wrapfig}
\usepackage{dingbat}
\usepackage{rotating}  % in the preamble
\usepackage{color}
\usepackage{refcount}
\usepackage[table]{xcolor}
\usepackage[toc,page]{appendix}
\usepackage{bbm}
\usepackage{enumitem}
\usepackage{amsmath,amssymb,amsthm}
\usepackage{mathtools}
\usepackage[english]{babel}
\usepackage{natbib}
\usepackage{bibunits}
\usepackage{xcolor}
\usepackage{booktabs}   % for \toprule, \midrule, \bottomrule
\usepackage{multirow}   % for \multirow (grouping cells)
\usepackage{caption}    % for better caption formatting (optional)
\usepackage{wrapfig}  % add this in the preamble
\usepackage{setspace}   % for line spacing
\usepackage{etoolbox}   % for patching environments
\AtBeginEnvironment{proof}{\begin{spacing}{1.5}}
\AtEndEnvironment{proof}{\end{spacing}}
\AtBeginEnvironment{lem}{\begin{spacing}{1.5}}
\AtEndEnvironment{lem}{\end{spacing}}
\AtBeginEnvironment{rem}{\begin{spacing}{1.5}}
\AtEndEnvironment{rem}{\end{spacing}}

\theoremstyle{definition}
\newtheorem{theo}{Theorem}[section]
\newtheorem{lem}[theo]{Lemma}
\newtheorem{prop}[theo]{Proposition}

\newtheorem{rem}[theo]{Remark}

\numberwithin{equation}{section}

\newcommand{\cond}{\overset{d}{\to}}
\newcommand{\bb}{b}
\newcommand{\cc}{C}
\newcommand{\R}{\mathbb{R}}
\newcommand{\lb}{\left(}
\newcommand{\rb}{\right)}

\newcommand{\E}{\mathbb{E}}
\newcommand{\N}{\mathbb{N}}
\newcommand{\PR}{\mathbb{P}}
\newcommand{\conp}{\stackrel{P}{\to}}
\usepackage{algorithm}
\usepackage{algpseudocode}
\newcommand{\Id}{\operatorname{Id}}
\newcommand{\MP}{Mar\v cenko--Pastur }
\begin{document}
		
\defaultbibliography{Literature.bib} 
\defaultbibliographystyle{chicago}
\title{
Does PCA Work for Rough Functional Data?
%Time Series
}

%\author{Tim Kutta \and Nina Dörnemann \and Piotr Kokoszka}

 \author[1]{Tim Kutta\thanks{Corresponding author; Mailing address: Department of Mathematics, Aarhus University,
Ny Munkegade 118,
8000 Aarhus C,
Denmark; Email: \texttt{tim.kutta@math.au.dk}.}}
 \author[1]{Nina Dörnemann}
\author[2]{Piotr Kokoszka}
  \affil[1]{Department of Mathematics, Aarhus University, Denmark}
 \affil[2]{Department of Statistics, Colorado State University, USA}

\date{}
\maketitle

\markboth{PCA for Rough Functional Data}{PCA for Rough Functional Data}

\begin{abstract}
Functional data analysis is concerned with the analysis of infinite-dimensional data functions. Functional principal component analysis (FPCA) is a key method to obtain finite-dimensional summaries. Consistency of FPCA has been theoretically established for sufficiently regular data functions. However,  empirical evidence shows that FPCA can become severely inconsistent when the underlying functions are too rough. This paper provides the first theoretical explanation for this phenomenon. We propose a model that explicitly captures the roughness of functional data and allows us to quantify the resulting bias of FPCA, depending on the functional roughness. The model undergoes a phase transition marking the point at which FPCA becomes entirely uninformative. Based on these probabilistic results, we discuss diagnostic tests for  informative principal components. As an additional contribution, we derive results on spectral statistics that may serve as a foundation for goodness-of-fit tests for rough functional data. Mathematically, our approach combines recent advances in random matrix theory and generic chaining with tools from FDA. We illustrate the effects of roughness on FPCA using simulations, as well as climate and environmental datasets.
\end{abstract}

\noindent%
{\it Keywords:} functional data analysis, phase transitions, principal component analysis, rough functional data

\section{Introduction}
\begin{bibunit}
Functional data analysis (FDA) is an area of modern statistics where  data consist of entire random curves or surfaces. Such data naturally arise in finance (a daily stock price curve), hydrology (a discharge curve), meteorology (an annual temperature curve), as well as many other fields. Examples of data curves are displayed in  Figures~\ref{fig:1} and \ref{fig:2}. For introductions to FDA, we refer the reader to the monographs of \cite{bosq:2000,ramsay:silverman:2005,HKbook,hsing:eubank:2015}.
Analyzing an entire function, i.e.,\ the full trajectory of a variable, is often more informative than focusing on any single time point. However, FDA also comes with distinct challenges since the data functions live, mathematically, in an infinite-dimensional space. This means that statistical procedures always need to be based on some finite-dimensional approximations that capture the behavior of the entire functions. 
For such approximations, and more generally to make FDA amenable to the tools of multivariate statistics, FDA relies critically on  functional principal component analysis (FPCA). It is hard to overestimate the importance of FPCA for functional data, and FPCA has early on become "the
most prevalent tool in FDA" \citep{wang:chiou:mueller:2016}. Accordingly, the reliability of statistical analyses depends critically on the stable estimation of principal components.
Consistency of FPCA has been theoretically justified for random objects in Hilbert spaces and we refer to Chapter 3 in \cite{HKbook} or Chapter 9 in \cite{hsing:eubank:2015} for reviews of key mathematical results. Current theory guarantees consistency of the empirical eigensystem, as long as the data functions are sufficiently "smooth". Here, smoothness does not directly refer to continuity or differentiability of a function (even though this is an appropriate intuition), but rather to the size and decay of the eigenvalues of the covariance operator. Existing theory is developed for functions that are "smooth" in the sense that their eigenvalues are rapidly decaying and accordingly, data functions concentrate closely around a very low-dimensional vector subspace. Unfortunately, real data functions
are often quite rough and not that nicely concentrated. This can seriously
impinge on the quality of FPCA in practice, and thereby harm the entire downstream analysis in applications.

\noindent \textbf{Example 1}  
To illustrate the problems of FPCA with rough functional data, we study the annual temperature profiles displayed in Figure \ref{fig:1}. The data has been gathered over  $235$ years at the German meteorological station in Hohenpeissenberg and is publicly available (\cite{dwd:climate:archive}).  
Temperature curves have been widely analyzed in both methodological and empirical FDA studies, and for some examples we refer to \cite{fremdt:horvath:kokoszka:steinebach:2014,aue:rice:sonmez:2018,hadjipantelis:muller:2018:book,dette:kokot:volgushev:2020,shah:dewolf:paul:madden:2024}. This makes them a natural benchmark for evaluating FPCA performance. 
Our aim is to investigate the stability of FPCA results. To this end, we randomly select, without replacement, two sets of temperature curves of size $N$. For each of these samples, we 
compute the empirical eigenfunctions of order $k=1,2,3$ and compare the results. Since we draw at random from an underlying, fairly homogenous population, it is reasonable to expect that estimates for the $k$th order eigenfunctions from the two samples should be nearly identical. Geometrically speaking, this means that the angle between the estimated $k$th eigenfunctions should be close to $0$°. Larger angles would be expected if FPCA were unstable. We repeat this procedure $1000$ times for the first three eigenfunctions and for sample sizes ranging from $N=10$ to $N=110$, i.e., small to moderately large.
Average angles are displayed in the right panel of Figure \ref{fig:1}. They show that even for $N=110$, the first eigenfunctions have a large average angle, indicating a high degree of instability. Angles for second and third eigenfunctions are close to $90$°, implying that estimates are effectively no better than random guesses. \\
\begin{figure}[H]
  \centering
  \begin{subfigure}{0.49\textwidth}
    \centering
    \includegraphics[width=\linewidth]{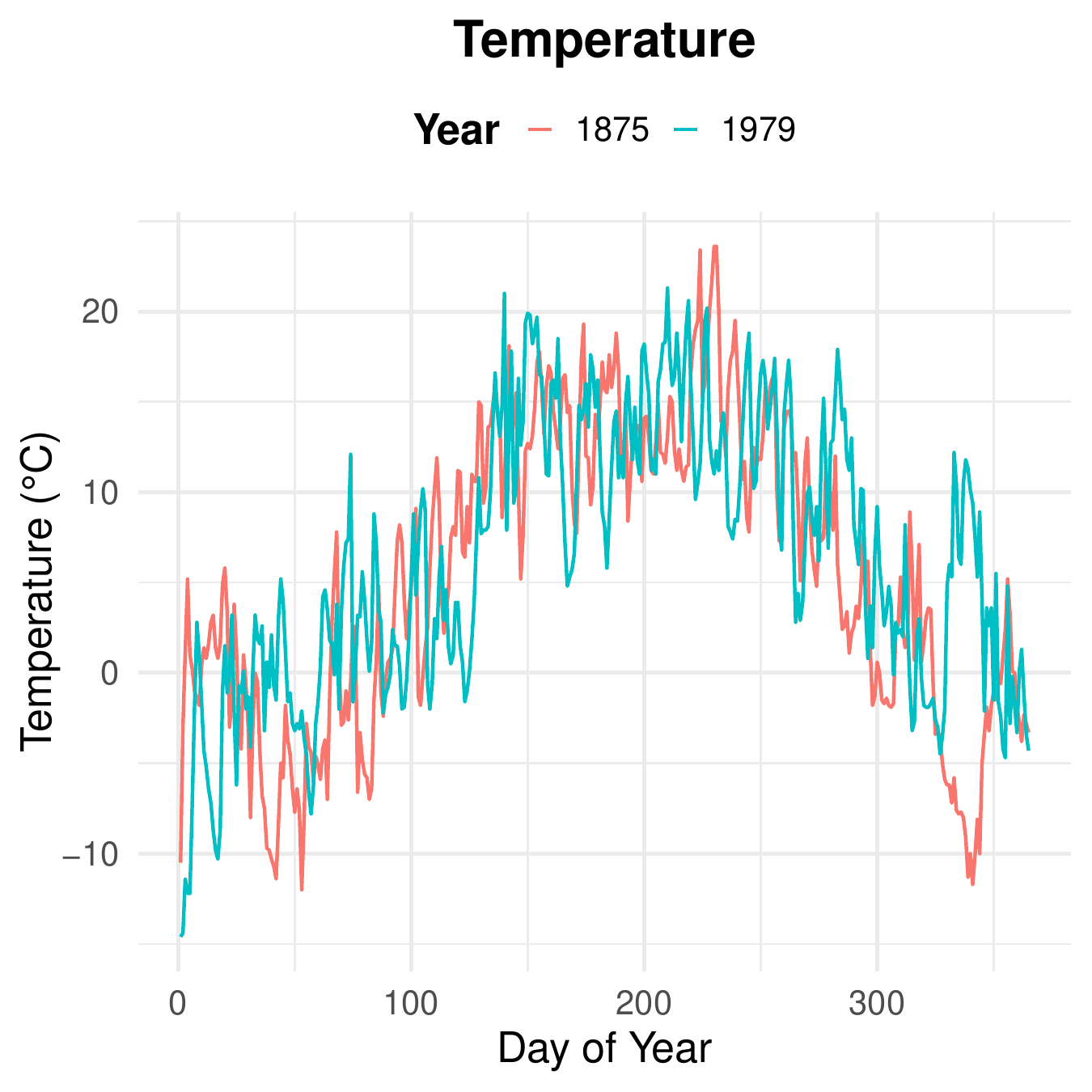} %9x9 pdf
  \end{subfigure}
  \hfill
  \begin{subfigure}{0.49\textwidth}
    \centering
    \includegraphics[width=\linewidth]{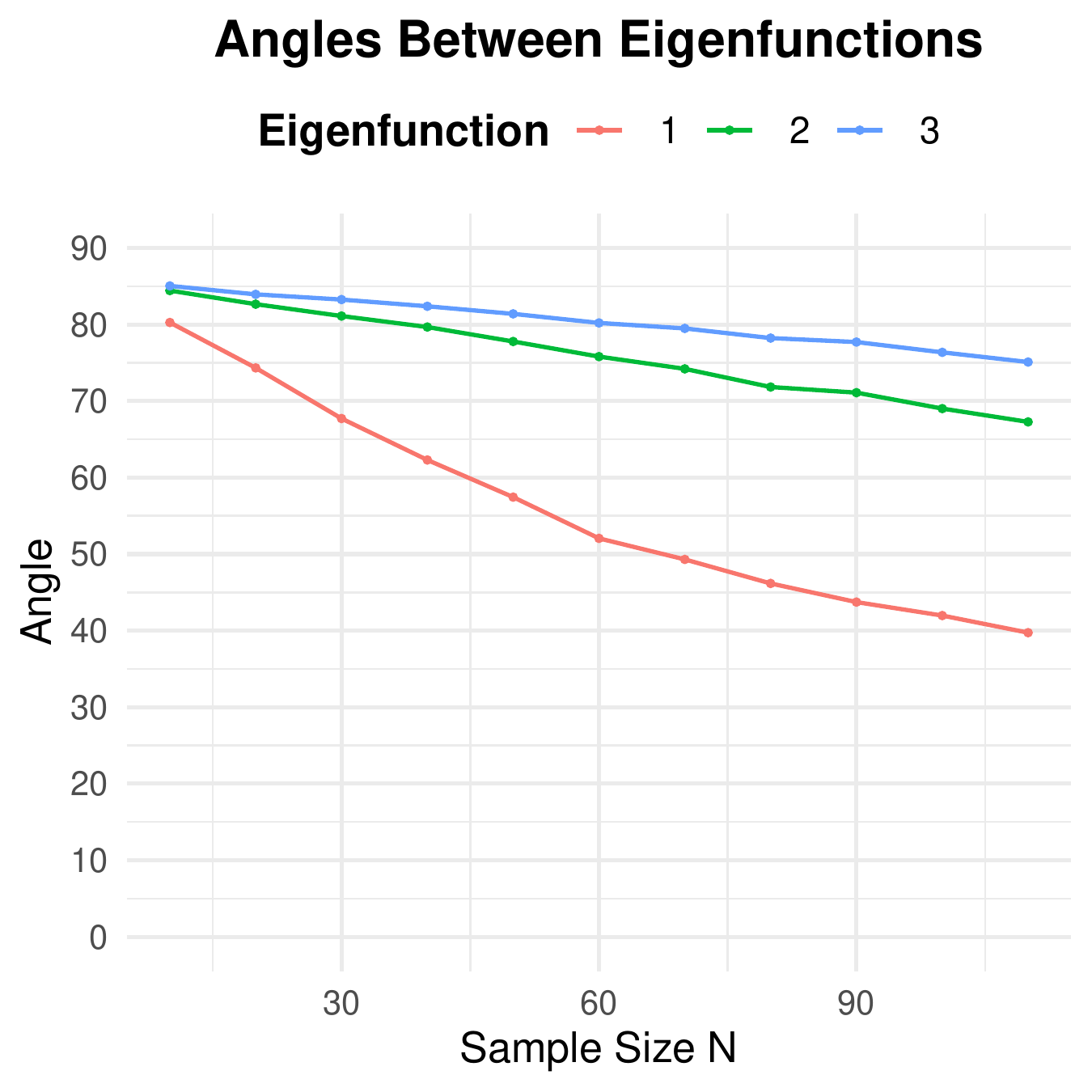}
  \end{subfigure}
  \caption{\label{fig:1} Left: Selected temperature profiles from Hohenpeissenberg. Right:  Average angles between estimated $k$th principal components, $k = 1, 2, 3$, computed from
two random samples of $N$ temperature curves. }
\end{figure}
\noindent \textbf{Example 2} To indicate that the above phenomenon is not unique to temperature curves, we consider a second example. We analyze daily streamflow (discharge) measurements of the Ninnescah River at Peck, Kansas, recorded by the United States Geological Survey (USGS) at site number 07145500. The dataset begins in January 1939 and covers 86 hydrological years. Each observation corresponds to the mean daily discharge in cubic meters per second. For our analysis, each hydrological year is treated as a single curve, starting on November 1 and ending on October 31 of the following year. Once more, we drew at random two datasets of size $N$ from the underlying population and compared the angles between the leading eigenfunctions. Since we only had $86$ curves at our disposal, we used drawing with replacement this time. The picture that emerges is similar to, but even more extreme than, for the temperature data. Average angles between the first eigenfunctions are $>50^\circ$ even for $N=100$. Second and third eigenfunctions respectively are again almost orthogonal. 

\begin{figure}[h]
  \centering
  \begin{subfigure}{0.49\textwidth}
    \centering
    \includegraphics[width=\linewidth]{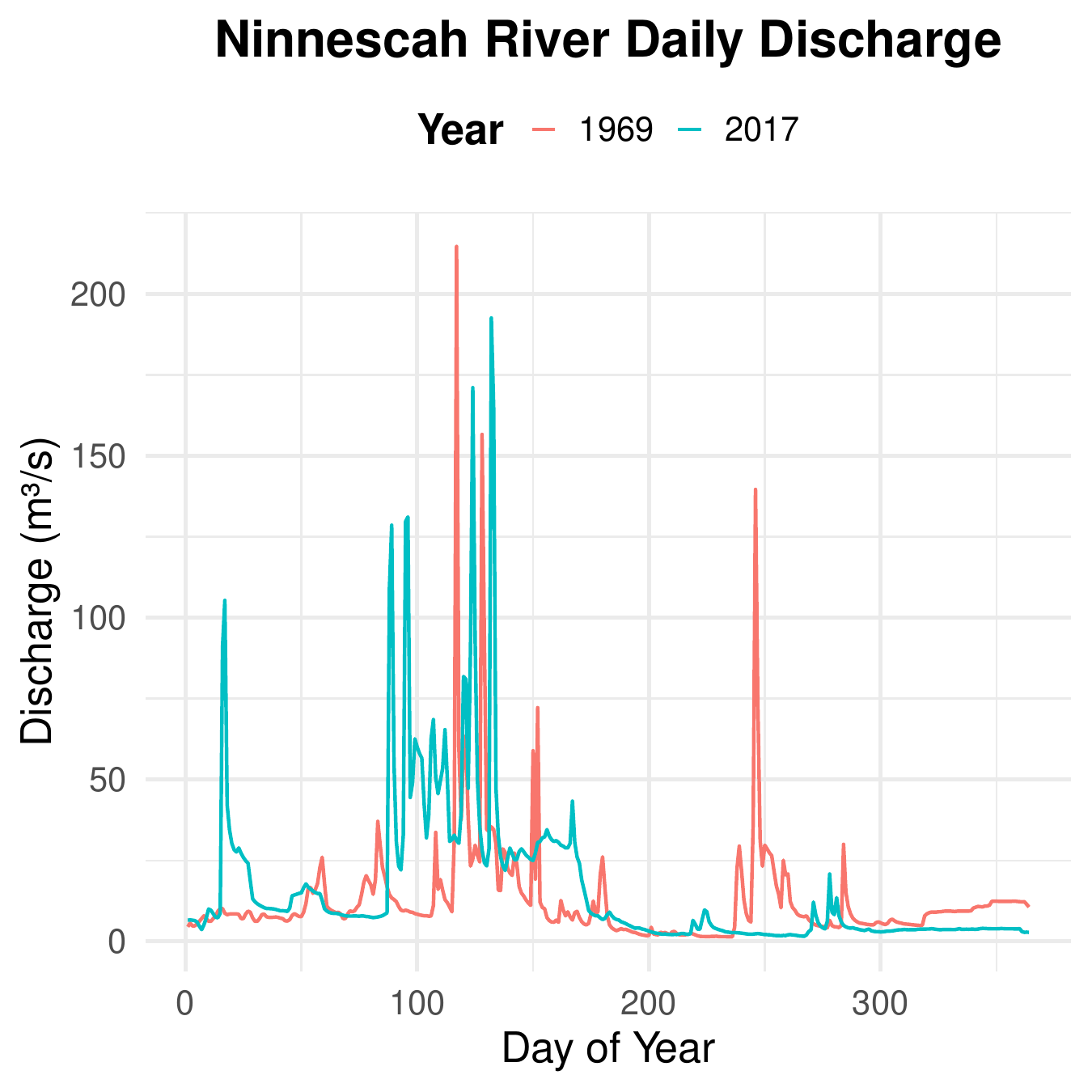}
  \end{subfigure}
  \hfill
  \begin{subfigure}{0.49\textwidth}
    \centering
    \includegraphics[width=\linewidth]{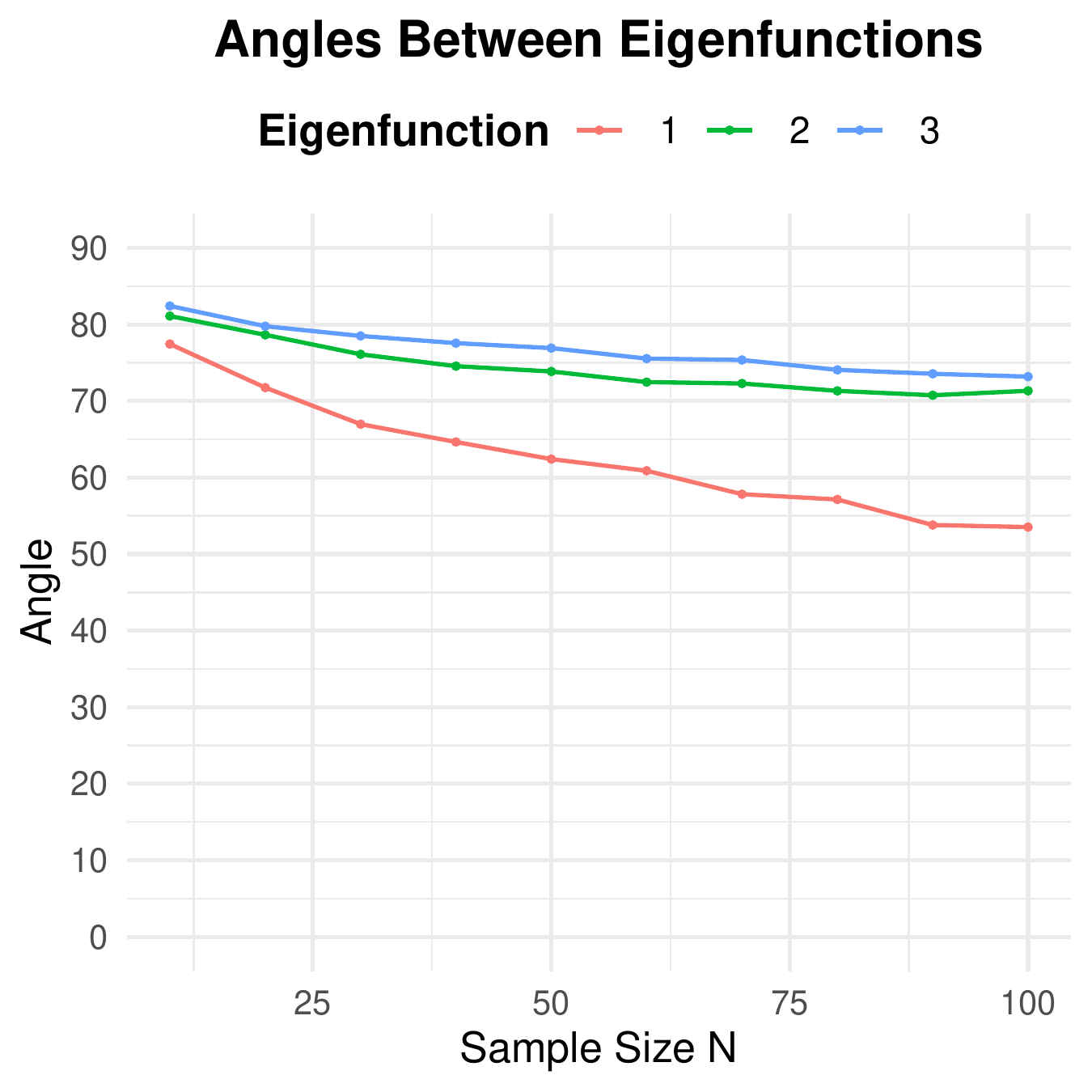}
  \end{subfigure}
  \caption{\label{fig:2} Left: Selected discharge profiles.  Right:  Average angles between estimated $k$th principal components, $k = 1, 2, 3$, computed from
two random samples of $N$ discharge curves. }
\end{figure}

\noindent \textbf{Interpretation} The above examples illustrate that FPCA can suffer extreme breakdowns for rough functional data. In such cases, any downstream statistical analysis that uses the estimated principal components runs a high risk of being affected. To the best of our knowledge, these problems of FPCA have not been systematically studied. 
A reason for this is the reliance of current FDA methodology on probability theory in Hilbert and Banach spaces initiated by \cite{kuelbs:1973,saintflour:probabilities:1976,dehling:1983} and summarized in the influential monograph of \cite{bosq:2000}. While mathematically general, this theory practically covers only fairly regular functions -- exactly those for which FPCA is guaranteed to work. 
%So 
These standard models cannot even in principle describe the breakdown of FPCA that we observe empirically.  The aim of this paper is to fill this gap and describe the effect, both empirically and mathematically, that rough data has on FPCA.

\noindent \textbf{Main contributions}
We introduce a new model that explicitly accounts for functional roughness. \textit{Within this model, we identify a critical phase transition: if the data functions are too rough, empirical principal components and eigenvalues are asymptotically unrelated to their population counterparts. For moderate roughness, a relation exists, but empirical versions remain inconsistent, and we derive explicit asymptotic formulas for the estimation error.}
 Because it is non-trivial in practice to distinguish informative from uninformative components, we develop statistical inference tools that enable such a distinction.  As a byproduct of our theory, we also derive a law of large numbers for spectral statistics of functional data, potentially useful for future goodness-of-fit tests. 
Our analysis combines ideas from FDA, random matrix theory (RMT), and generic chaining. To our knowledge, this is the first work that combines these distinct research fields. 
In an empirical analysis, we study the effects of roughness on the geometric shape of eigenfunctions, parameter projections and hypothesis testing. Summarizing, we find that for rough data, only a few eigenfunctions have a meaningful geometric form, while the rest look similar to a white noise. Projecting smooth parameter functions on these eigenfunctions leads to imprecise and geometrically bizarre approximations. The effect on hypothesis testing is more benign. Our analysis indicates, both theoretically and empirically, that basic hypothesis tests become more conservative for rough functional data -- in particular, they still seem to hold their significance level.

\noindent \textbf{Organization of the paper} The main theoretical results of this work on FPCA are collected in Section \ref{sec:main}. Section \ref{sec:simulations} contains Monte Carlo studies that empirically validate our findings and Section \ref{sec:data} provides a more detailed analysis of the data samples discussed in the Introduction. 
The Supplement consists of three sections. Section \ref{sec_proofs}
is dedicated to the proofs of our main results. In Section \ref{sec:spec}, we introduce some basic spectral statistics for functional data and prove a new law of large numbers for them. 
Section \ref{sec_background_rmt} contains a review of some essential results from RMT. Additional numerical results are presented in Section \ref{sec_supp_numeric}.

\section{Main results} \label{sec:main}

After providing some basics on random functions in Section \ref{sec:basics}, we define our model for rough functional data in Section \ref{sec:model}. The main probabilistic results are given in Section \ref{sec:phaset}. Diagnostic tests for informative principal components are presented in Section \ref{sec:diag}. 

\subsection{Definitions and notations} \label{sec:basics}

We consider the space $L^2([0,1])$, which consists of all measurable functions $g:[0,1] \to \mathbb{R}$ that satisfy $\int_0^1 g^2(x) dx <\infty$. Identifying functions equal a.e., the space $L^2([0,1])$ becomes a Hilbert space, when equipped with the inner product
\[
\langle g,h\rangle := \int_0^1 g(x) h(x) dx.
\]
The corresponding norm is given by $\|g\|^2:=\langle g,g \rangle$. 
A random function in $L^2([0,1])$ is a measurable map $X: \Omega \to L^2([0,1])$, defined on a probability space $(\Omega, \mathcal{A}, \mathbb{P})$. The mean $\mu \in L^2([0,1])$ of $X$ is well defined if $\mathbb{E}\|X\|<\infty$, and is characterized by the identity
\[
\langle \mu, g\rangle = \mathbb{E}[\langle X, g\rangle], \qquad \forall\, g \in L^2([0,1]).
\]
Similarly, if $\mathbb{E}\|X\|^2<\infty$ and $\mathbb{E}X=0$, the covariance operator $C$  of $X$ is well defined and satisfies 
\begin{align} \label{e:C}
\langle C g, h\rangle = \mathbb{E}[\langle X, g\rangle\langle X,h\rangle], \qquad \forall \, g,h \in L^2([0,1]).
\end{align}
The covariance operator $C$ is a trace-class operator and can be represented using a non-increasing sequence of non-negative eigenvalues $(\lambda_n)_n$ and an orthonormal basis of eigenfunctions $(e_n)_n$ as follows
\[
C = \sum_{n \ge 1} \lambda_n (e_n \otimes e_n),
\]
where $\otimes$ refers to the standard outer product for Hilbert spaces. Throughout this work, we denote the operator norm by $\|\cdot\|_\mathcal{L}$, and we recall that $\|C\|_\mathcal{L} = \lambda_1$.

\subsection{A model for rough functional data} \label{sec:model}

We assume that the sample $X_1, \ldots X_N$ consists of i.i.d., centered, Gaussian random variables in $L^2([0,1])$ with covariance operator $C=C_N$ defined by \eqref{e:C}, which determines the distribution of the sample. We usually do not make the dependence of $C$
 and its eigensystem on $N$ explicit for the sake of lighter notation. Unless indicated otherwise, asymptotic results are formulated for $N\to\infty.$
Now, let $(e_n)_n$ be an orthonormal basis of  $L^2([0,1])$,  $\bb:[0,\infty) \to [0,\infty)$ be a non-increasing function, $K\in \N$ be a positive integer and $s_1> \cdots > s_K > \bb(0)$ be positive numbers.
We define
\begin{align} \label{e:sigrep}
C:= \sum_{n \ge 1}\lambda_{n} (e_n \otimes e_n), \quad \textnormal{where}\quad 
\lambda_n = 
\begin{cases}
    s_n, \quad n=1,\ldots,K, \\ \bb\big(\frac{n-K}{N}\big), \,\, n>K.
\end{cases}
\end{align}
In other words, $C$ has eigenfunctions $e_n$ and eigenvalues $\lambda_{n}$. The eigenvalues $\lambda_n, n=1,\ldots,K$, are denoted by $s_n$ because they are \textit{spiked}, and the eigenvalues $\lambda_n, n>K$, are characterized by the function $\bb$, because they are part of the \textit{bulk} of eigenvalues. Accordingly, we sometimes refer to $\bb$ as the \textit{bulk function}. For the bulk to be non-trivial $\bb$ cannot be a.e. $0$, which can be prevented by assuming $b(x)>0$ for some $x>0$. In this work, we will always assume the slightly stronger condition that $b(0)>0$ and that $b$ is Lipschitz continuous at $0$.
Now, for  $C $ to be a trace-class operator, we assume that $\int_0^\infty \bb(x) dx <\infty$, which implies that
\begin{align} \label{e:TC}
    \|C\|_\mathcal{L}=s_1, \qquad \operatorname{Tr}[C] \sim N \int_0^\infty \bb(x) dx,
\end{align}
where $\rm Tr[\cdot]$ denotes the trace.
Finally, we define the empirical covariance
\begin{align} \label{e:empcov}
\widehat{C} = \frac{1}{N}\sum_{i=1}^N X_i \otimes X_i,
\end{align}
and denote its eigenfunctions and eigenvalues, respectively, by $(\hat e_n)_n, (\hat \lambda_n)_n$.
\begin{rem} \label{rem:1}
    We discuss some details of our model.
\begin{itemize}
    \item[1)] A standard complexity measure for linear (covariance) operators is the effective rank; see, e.g., \cite{koltchinskii:lounici:2017b}. It is defined for a trace class operator $A$ as 
    \[
    \mathfrak{r}[A]:= \frac{\operatorname{Tr}[A]}{\|A\|_\mathcal{L}},
    \]
    and we thus conclude that \eqref{e:TC} implies $\mathfrak{r}[C] \sim N$ in our model. It is well-known that under $\mathfrak{r}[C]=o(N)$, consistency results for the empirical covariance operator and its eigensystem hold, see among many others \cite{koltchinskii:lounici:2017b,alghattas:chen:sanzalonso:2025}. Yet the case $\mathfrak{r}[C]\sim N$, that is the focus of this work, remains much less explored. An exception to this is when the (finite-dimensional) space itself grows at the same rate as  $ N$, which is the setting of classical RMT. In the asymptotic framework where the dimension and the sample diverge proportionally, it is well-known that the largest eigenvalue undergoes a phase transition, the so-called BBP phase transition due to the seminal work of \cite{baik:ben:peche:2005}. Roughly speaking, this phenomenon describes the different types and orders of fluctuations of the leading eigenvalue of the sample covariance matrix in the subcritical and supercritical case. 
  As an important statistical consequence, in the subcritical regime, the largest sample eigenvalue and its eigenvector asymptotically carry no information about their population counterparts, whereas in the supercritical regime they do.
  %  In the toy model $\Sigma = (\alpha, 1, \ldots, 1)$, the critical threshold is $1+\sqrt{p/N}$, that is, $\alpha < 1+\sqrt{p/N}$ corresponds to subcritical and $\alpha > 1+\sqrt{p/N}$ to the supercritical regime.
    A review of some related results in random matrix theory can be found in Section \ref{sec_background_rmt} of the supplement. 
    
    \item[2)] In RMT, it is uncommon to specify the eigenvalues directly, as in our model \eqref{e:sigrep}. A more typical approach is, for a $p$-dimensional covariance, to determine an eigenvalue distribution, using the  spectral measure 
\begin{align} \label{e:spectral}
\mathfrak{s}_{N,p}:=\frac{1}{p} \sum_{i=1}^p \delta_{\lambda_i},
\end{align}
where $\delta_{x}$ is the Dirac measure at $x$. Notice that $\mathfrak{s}_{N,p}$ depends on $N$, because the eigenvalues may also depend on $N$.
RMT then usually requires that $\mathfrak{s}_{N,p}$ converge weakly to some limiting measure $\mathfrak{s}$ as $N \to \infty$ and $p=p(N) \to \infty$, where $p/N$ has a limit in $(0,\infty).$ Since spectral measures can only be properly defined for a finite dimension $p$, we have to pursue a different approach for functional data. 
\item[3)] The data in our model follow a Gaussian distribution on the space $L^2([0,1])$. Extensions to non-Gaussian data are likely possible but non-trivial. There are two main technical reasons why the Gaussian assumption is made in our derivations:
First, we exploit recent results from generic chaining, which are best developed for Gaussian data. Some progress has been made in the investigation of subgaussian vectors suggesting that extensions of our results might be possible for that regime (\cite{koltchinskii:lounici:2017b,alghattas:chen:sanzalonso:2025}). A second reason is that in our proofs we use the property that the projection of our data $\Pi_\kappa[X_i]$ on any finite-dimensional subspace spanned by the eigenfunctions $\{e_1,\ldots,e_\kappa\}$ is \textit{separable}. This means that $\Pi_\kappa[X_i] \overset{d}{=} C_{\Pi_\kappa}^{1/2} Z$, where $Z$ has i.i.d. components and $C_{\Pi_\kappa}$ is the covariance matrix of $\Pi_\kappa[X_i]$. This condition is always true for Gaussian data, but not necessarily for other distributions.
\end{itemize}
\end{rem}

\subsection{A phase transition in FPCA} \label{sec:phaset}

We now turn to the theoretical analysis of the model presented in the previous section. Our first aim is to study the behavior of the empirical eigencomponents $\hat \lambda_n, \hat e_n$ for $1 \le n \le K$, i.e. for the spiked components. To set the stage, we cite a well-known result on the consistency of FPCA for functional data that is not rough. 

\begin{prop} \label{prop:trad} (Theorem 2.10, \cite{HKbook})
Let $Y,Y_1,..,Y_N$ be i.i.d. centered random functions in $L^2([0,1])$ with fixed distribution $\mathcal{L}(Y)$. Suppose that $\mathbb{E}\|Y\|^4<\infty$ and that the covariance operator of the data, $C^Y$, has the eigensystem $(\lambda^Y_n)_n, (e^Y_n)_n$. The empirical covariance is defined analogously to \eqref{e:empcov} and has the eigensystem $(\hat \lambda^Y_n)_n, (\hat  e^Y_n)_n$. Then, it holds that
\[
|\lambda_n^Y-\hat \lambda_n^Y|=\mathcal{O}_{\PR}(N^{-1/2}).
\]
Moreover, if $\lambda_1^Y>\cdots >\lambda_{K+1}^Y$ (and formally defining $\lambda_0:=\infty$), then
\[
\|e_n^Y-\hat e_n^Y\|= \frac{\mathcal{O}_{\PR}(N^{-1/2})}{\min(\lambda^Y_{n-1}-\lambda^Y_n, \lambda^Y_{n}-\lambda^Y_{n+1})} = \mathcal{O}_{\PR}(N^{-1/2}).
\]
\end{prop}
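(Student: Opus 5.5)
The plan is to reduce both claims to a single $\sqrt N$-bound for the empirical covariance operator in Hilbert--Schmidt norm, and then apply deterministic perturbation inequalities for eigenvalues and eigenfunctions. Write $\|\cdot\|_{\mathcal S}$ for the Hilbert--Schmidt norm, so that $\|\cdot\|_\mathcal{L}\le \|\cdot\|_{\mathcal S}$. Set $Z_i := Y_i\otimes Y_i - C^Y$. These are i.i.d.\ centered random elements of the separable Hilbert space of Hilbert--Schmidt operators, and
\[
\mathbb{E}\|Z_i\|_{\mathcal S}^2 \le \mathbb{E}\|Y_i\otimes Y_i\|_{\mathcal S}^2 = \mathbb{E}\|Y\|^4<\infty .
\]
Because the $Z_i$ are orthogonal in $L^2(\Omega;\mathcal S)$,
\[
\mathbb{E}\|\widehat C^Y - C^Y\|_{\mathcal S}^2 = \frac{1}{N}\,\mathbb{E}\|Z_1\|_{\mathcal S}^2 .
\]
Chebyshev's (or Markov's) inequality then gives $\|\widehat C^Y - C^Y\|_\mathcal{L} = \mathcal{O}_{\PR}(N^{-1/2})$. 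This step uses only the fourth moment assumption and the fact that $\mathcal{L}(Y)$ does not depend on $N$.

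\textbf{Eigenvalues.} The eigenvalue statement follows from Weyl's inequality for compact self-adjoint operators, which is the min--max characterization of the ordered eigenvalues:
\[
|\hat\lambda_n^Y-\lambda_n^Y|\le \|\widehat C^Y - C^Y\|_\mathcal{L}.
\]
This holds for every $n$, with no gap condition.

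\textbf{Eigenfunctions.} For $n\le K$, first fix the sign of $\hat e^Y_n$ so that $\langle \hat e^Y_n, e^Y_n\rangle\ge 0$; the eigenfunction is only determined up to sign, so this normalization is implicit in the statement. Let $\delta_n := \min(\lambda^Y_{n-1}-\lambda^Y_n, \lambda^Y_{n}-\lambda^Y_{n+1})>0$. I would prove the standard bound
\[
\|\hat e_n^Y - e_n^Y\| \le \frac{2\sqrt2\,\|\widehat C^Y - C^Y\|_\mathcal{L}}{\delta_n}
\]
(Lemma 2.3 in \cite{HKbook}, or the Davis--Kahan $\sin\theta$ theorem). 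The argument goes as follows.
\begin{itemize}
\item Expand $\hat e_n^Y = \sum_j \langle \hat e_n^Y, e_j^Y\rangle e_j^Y$ and write $\Delta := \widehat C^Y - C^Y$.
\item From $(\widehat C^Y - \lambda_n^Y)\hat e_n^Y = (\hat\lambda_n^Y-\lambda_n^Y)\hat e_n^Y$, use $\widehat C^Y = C^Y + \Delta$ and project onto $e_j^Y$ for $j\ne n$. This gives
\[
(\lambda_j^Y-\hat\lambda_n^Y)\langle \hat e_n^Y, e_j^Y\rangle = -\langle \Delta \hat e_n^Y, e_j^Y\rangle .
\]
\item By the eigenvalue step, $|\lambda_j^Y-\hat\lambda_n^Y|\ge \delta_n - \|\Delta\|_\mathcal{L}$, since the ordering makes $|\lambda_j^Y - \lambda_n^Y|\ge\delta_n$ for all $j\neq n$.
\item Summing squares over $j\ne n$ gives $1-\langle \hat e_n^Y,e_n^Y\rangle^2 \le \|\Delta\|_\mathcal{L}^2/(\delta_n-\|\Delta\|_\mathcal{L})^2$.
\item Because of the sign convention, $\|\hat e_n^Y-e_n^Y\|^2 = 2(1-\langle \hat e_n^Y,e_n^Y\rangle) \le 2(1-\langle \hat e_n^Y,e_n^Y\rangle^2)$.
\end{itemize}
On the event $\{\|\Delta\|_\mathcal{L}\le \delta_n/2\}$, whose probability tends to one, this yields the displayed bound. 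Off that event, $\|\hat e_n^Y-e_n^Y\|\le 2$ trivially, which is harmless for an $\mathcal{O}_{\PR}$ statement.

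\textbf{Conclusion and main obstacle.} Combining the two bounds gives $\|e_n^Y-\hat e_n^Y\| = \mathcal{O}_{\PR}(N^{-1/2})/\delta_n$. Since the distribution is fixed, $\delta_n$ is a fixed positive constant, which gives the final $\mathcal{O}_{\PR}(N^{-1/2})$. The only delicate point is the eigenfunction step. One must handle the sign ambiguity and the possibility that the perturbation exceeds half the gap; both are dealt with by the normalization and by restricting to a high-probability event.
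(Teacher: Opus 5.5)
Your proposal is correct and is essentially the argument behind the result the paper cites from \cite{HKbook}, which the paper states without its own proof. That argument combines the Hilbert--Schmidt bound $N\,\mathbb{E}\|\widehat C^Y-C^Y\|_{\mathcal S}^2\le \mathbb{E}\|Y\|^4$ with the perturbation inequalities of Lemmas 2.2 and 2.3 there (which the paper also invokes in its own proofs), and your re-derivation of the $2\sqrt2/\delta_n$ bound is sound. As a small aside, your sign convention gives $\|\hat e_n^Y-e_n^Y\|\le\sqrt2$ always, so the bound $\|\hat e_n^Y-e_n^Y\|\le 2\sqrt2\|\widehat C^Y-C^Y\|_{\mathcal L}/\delta_n$ in fact holds deterministically, and no restriction to a high-probability event is needed.
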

The proposition shows that in FPCA, standard theory implies that empirical eigenvalues are consistent. Empirical eigenfunctions are consistent, too, if there exists a non-vanishing eigengap that identifies them uniquely. Importantly, none of these consistency results distinguishes between the leading eigencomponents, which are intuitively believed to be the components that matter, and the remaining eigencomponents, which are ignored in the applications of FPCA. In contrast, our model and theory provide a quantitative distinction between these two sets of eigencomponents for rough functional data. 
Our subsequent analysis has two aims: 1) We will derive a phase transition that distinguishes spiked eigencomponents that are asymptotically uninformative about their population counterparts (the so-called \textit{subcritical} case) from those that are informative (\textit{supercritical} case). 2) In the supercritical case, we will study  the estimation bias, to measure just how informative the empirical versions are. \\
\textbf{Preliminaries} 
We define the number $\xi(\infty)$ as the solution in $\xi^\star$ to the following equation
\begin{align} \label{e:def:xiinf}
    \int_0^\infty\bigg(\frac{\bb(x)\xi^\star}{1-\bb(x)\xi^\star} \bigg)^2 dx =1, \quad \xi^\star \in (0,1/\bb(0)).
\end{align}
Existence and uniqueness of $\xi(\infty)$ are proven in Lemma \ref{lem:xi}.
The notation $\xi(\infty)$ reflects the fact that it occurs as a limit as $\gamma \to\infty$ of $\xi(\gamma)$, which occurs in random matrix theory. Roughly speaking, $\xi(\gamma)$ captures the noise from the bulk eigenvalues when the data is projected into a finite-dimensional subspace of dimension $\lfloor N\gamma \rfloor $. The noise of the entire bulk, without projection, is then captured by $\xi(\infty)$. Using $\xi(\infty)$, we may now formulate qualitative conditions that characterize when the empirical eigencomponents are informative. More precisely, for $k \in \{1,\cdots,K\}$ we speak of
\begin{itemize}
    \item[] a \textit{subcritical} signal if $ s_k\,\xi(\infty)<1$,
    \item[] a \textit{supercritical} signal if $ s_k\,\xi(\infty)>1$.
\end{itemize}
These notions stem from RMT, where they are also used to distinguish informative from uninformative empirical eigencomponents. In RMT, a number of equivalent characterizations of criticality have been derived, typically involving the spectral measure of eigenvalues (see eq. \eqref{e:spectral} for a definition). For example, in a recent work, \cite{doernemann:lopes:2025} derived a characterization in terms of solutions for eigenvalue equations. This approach inspires the definition of $\xi(\infty)$ in \eqref{e:def:xiinf}. 
An elegant feature of our theory is that the critical threshold $\xi(\infty)$ is the solution to a simple integral equation and does not depend on the sample size $N$ anymore.
To define limits of the sample eigenvalues in both criticality regimes, we define the function
\begin{align}
   \psi(y) =  y\Big(1 + \int_0^\infty \frac{\bb(x)}{y-\bb(x)}dx\Big), \quad y > b(0).
   \label{eq_def_psi}
\end{align}

\noindent \textbf{Mathematical results}  We can now state our main theoretical results. Recall the definition of $K$ in \eqref{e:sigrep}. We define $1 \le M \le K$ as the largest number such that still $s_M \xi(\infty) >1$. If no such number exists, we set $M=0$. 
We gather our assumptions from Section \ref{sec:model}. 

\begin{enumerate}[label=(A-\arabic*)]
\item \label{ass:data} (On the data)  $X_1,\ldots, X_N$ are centered, normally distributed random functions in $L^2([0,1])$, with covariance operator $C$ specified in \eqref{e:sigrep}.
\item \label{ass:bulk:ev} (On the bulk function)  $b: [0,\infty) \to [0,\infty)$ is monotonically decaying with $b(0)>0$, Lipschitz continuous in $x=0$ and satisfies $\int_0^\infty b(x) dx <\infty$.
\item \label{ass:spiked:ev} (On the spiked eigenvalues) If $M<K$, it holds that $s_{M+1} \xi(\infty) < 1.$ 
\end{enumerate}

\begin{theo} \label{e:thm:main1}
    Suppose that conditions \ref{ass:data}-\ref{ass:spiked:ev} are satisfied, and let $k>M$ be a fixed integer. Then, as $N \to \infty$, it holds that
  \begin{align} \label{eq_eigenvals_sub}
      \hat \lambda_k \overset{P}{\to} \psi\big(\xi^{-1}(\infty)\big).
  \end{align}
  If additionally the bulk function $\bb$ is compactly supported, it follows that
    \begin{align}
    \label{eq_delocal_e1}
    \langle \hat e_k , e_k \rangle  \overset{P}{\to} 0, \quad N \to\infty.
   \end{align}
\end{theo}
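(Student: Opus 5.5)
The plan is to reduce the infinite-dimensional problem to a finite-dimensional spiked covariance model in the proportional regime, and then pass to the limit in the truncation level.

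\textbf{Truncation.} For $\gamma>0$ let $p=\lfloor N\gamma\rfloor+K$ and let $\Pi_p$ be the orthogonal projection onto $\mathrm{span}\{e_1,\dots,e_p\}$. Split $\widehat C=\widehat C_p+R_p$, where $\widehat C_p=\Pi_p\widehat C\Pi_p$ and $R_p$ collects the cross and tail terms. Since the data are Gaussian, $\Pi_p X_i\overset{d}{=}C_{\Pi_p}^{1/2}Z_i$ with i.i.d. standard normal coordinates.

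\textbf{The finite-dimensional model.} The matrix $\widehat C_p$ is a sample covariance matrix with $p/N\to\gamma$. Its population spectrum consists of the spikes $s_1,\dots,s_K$ and the bulk values $b((n-K)/N)$, $n=K+1,\dots,p$. The empirical spectral measure of the bulk is the law of $b(U)$ with $U$ uniform on $[0,\gamma]$. Monotonicity of $b$ ensures weak convergence. The Lipschitz property at $0$ ensures that the top bulk eigenvalues converge to $b(0)$, so there are no outliers from the bulk near its edge.

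We can then invoke the known BBP-type results reviewed in Section~\ref{sec_background_rmt}, for instance the characterization of \cite{doernemann:lopes:2025}. For a subcritical spike, and for $k>K$, the $k$-th sample eigenvalue sticks to the right edge of the bulk limit. Define $\xi(\gamma)$ as the solution of
\begin{align*}
\int_0^\gamma\Big(\frac{b(x)\xi}{1-b(x)\xi}\Big)^2dx=1.
\end{align*}
The right edge equals $\psi_\gamma(1/\xi(\gamma))$, where $\psi_\gamma$ is $\psi$ with the integral restricted to $[0,\gamma]$. By monotone convergence and Lemma~\ref{lem:xi}, $\xi(\gamma)\to\xi(\infty)$ and $\psi_\gamma(1/\xi(\gamma))\to\psi(\xi^{-1}(\infty))$ as $\gamma\to\infty$. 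Moreover $s_k\xi(\gamma)<1$ for all large $\gamma$ whenever $s_k\xi(\infty)<1$, since $\xi(\gamma)$ decreases to $\xi(\infty)$. So subcriticality is inherited by the truncations.

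\textbf{The tail (main obstacle).} We need $\limsup_N\|\widehat C-\widehat C_p\|_\mathcal{L}\to0$ in probability as $\gamma\to\infty$. This cannot be true in operator norm naively. The tail part $(\Id-\Pi_p)\widehat C(\Id-\Pi_p)$ has trace about $N^{-1}\sum_{n>p}\lambda_n\cdot N\approx\int_\gamma^\infty b$, which is small. However, its operator norm is only bounded by the effective-rank bounds of \cite{koltchinskii:lounici:2017b}: for the tail covariance $C_{>p}$ with $\|C_{>p}\|\le b(\gamma)$ and $\operatorname{Tr}\approx N\int_\gamma^\infty b$, we get
\begin{align*}
\|\widehat C_{>p}-C_{>p}\|\lesssim b(\gamma)\Big(\sqrt{\mathfrak r/N}+\mathfrak r/N\Big)\approx\int_\gamma^\infty b+\sqrt{b(\gamma)\int_\gamma^\infty b},
\end{align*}
which vanishes as $\gamma\to\infty$. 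The cross terms $\Pi_p\widehat C(\Id-\Pi_p)$ are bounded by Cauchy--Schwarz, namely $\|\widehat C_p\|^{1/2}\|\widehat C_{>p}\|^{1/2}$, and this is not small. The plan is therefore to avoid a direct norm bound. Write $\widehat C=N^{-1}\mathbf X^\top\mathbf X$ and compare the nonzero spectrum with the $N\times N$ Gram matrix $N^{-1}\mathbf X\mathbf X^\top=G_p+G_{>p}$. Here $G_{>p}$ concentrates, via generic chaining, around $N^{-1}\operatorname{Tr}[C_{>p}]\,I\approx(\int_\gamma^\infty b)\,I$ in operator norm, with the error bound above. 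Weyl's inequality then gives $|\hat\lambda_k-\lambda_k(G_p)|\le\int_\gamma^\infty b+o_\PR(1)$, and letting $N\to\infty$ then $\gamma\to\infty$ yields \eqref{eq_eigenvals_sub}.

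\textbf{Eigenvectors.} For \eqref{eq_delocal_e1}, compact support of $b$ on $[0,\gamma_0]$ means the tail vanishes for $\gamma>\gamma_0$. Then $X_i\in\mathrm{span}\{e_1,\dots,e_p\}$ exactly, and the problem is genuinely finite-dimensional with $\widehat C=\widehat C_p$. The standard RMT delocalization result for subcritical spikes, and for non-spike sample eigenvectors near the edge, then gives $\langle\hat e_k,e_k\rangle\to0$. The point $\gamma_0$ may create a jump in the bulk function, but the bulk measure is still fine and no edge issue arises at the top.
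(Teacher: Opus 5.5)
Your proposal is correct, and its skeleton matches the paper's. You truncate at a dimension of order $\gamma N$, apply the finite-dimensional BBP results for fixed $\gamma$ (Lemmas \ref{lem_rmt_eigenvalues} and \ref{lem:s_N}), and let $\gamma\to\infty$ via Lemma \ref{lem:xi}. For \eqref{eq_delocal_e1} you use that compact support makes the problem exactly finite-dimensional, so Lemma \ref{lem_rmt_eigenvector}(a) applies directly.

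The genuine difference is how you control the remainder for the eigenvalues. The paper bounds $\|\widehat{C}-A_1(\lfloor\gamma N\rfloor)\|_{\mathcal{L}}$ directly in Lemma \ref{lem:A234}. It uses the generic-chaining cross-covariance bound of Chen and Sanz-Alonso for $A_2,A_3$ and the Koltchinskii--Lounici bound for $A_4$, and then applies Weyl's inequality. You instead pass to the $N\times N$ Gram matrix, where the cross terms vanish identically because $\langle X_i,X_j\rangle=\langle\Pi X_i,\Pi X_j\rangle+\langle\Pi^c X_i,\Pi^c X_j\rangle$. Both summands are positive semidefinite and nonzero spectra are preserved, so you get the sandwich $\hat\lambda_k(\gamma)\le\hat\lambda_k\le\hat\lambda_k(\gamma)+\|A_4\|_{\mathcal{L}}$. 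Only the tail bound is then needed, which is more economical than the paper. What your route does not give is operator-norm proximity of $\widehat{C}$ and $A_1$ in $L^2$. The paper reuses that proximity in Lemma \ref{lem_eigenvector_approx} for the eigenfunctions in Theorem \ref{e:thm:main2}, but it is not needed for the present theorem.

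Two parts of your reasoning need correcting, though neither is fatal.

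First, you claim the direct operator-norm bound ``cannot be true'' and that the Cauchy--Schwarz bound on the cross terms ``is not small''. Both claims are wrong.
\begin{itemize}
\item Your own estimate gives $\limsup_N\mathbb{E}\|A_4\|_{\mathcal{L}}\lesssim b(\gamma)+\int_\gamma^\infty b+\big(b(\gamma)\int_\gamma^\infty b\big)^{1/2}\to 0$.
\item Meanwhile $\|A_1\|_{\mathcal{L}}\le\|\widehat{C}\|_{\mathcal{L}}=O_{\PR}(1)$ uniformly in $\gamma$, since $\mathfrak{r}[C]/N$ is bounded.
\item Hence $\|A_2\|_{\mathcal{L}}\le\|A_1\|_{\mathcal{L}}^{1/2}\|A_4\|_{\mathcal{L}}^{1/2}$ does vanish in the double limit, which is exactly the content of Lemma \ref{lem:A234}.
\end{itemize}

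Second, two bookkeeping slips:
\begin{itemize}
\item $\operatorname{Tr}[C_{>p}]\approx N\int_\gamma^\infty b$ is not small; only its normalization by $N$ is.
\item The concentration of $G_{>p}$ around $N^{-1}\operatorname{Tr}[C_{>p}]\,I$ is not the bound you derived, which was for $\widehat{C}_{>p}-C_{>p}$. You do not need that concentration anyway, since Weyl only requires $\|G_{>p}\|=\|A_4\|_{\mathcal{L}}$ to be small.
\end{itemize}
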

As we can see, in the subcritical case, $\hat \lambda_k$ converges to a limit that is independent of $\lambda_k$. To see this, notice that the limit of $\hat \lambda_k$ depends on the functions $\psi$, which in turn only depends on the bulk function $\bb$, not on the spiked eigenvalues $\lambda_1,\cdots, \lambda_{K}$. We also see that under subcriticality, there is asymptotically no difference between the $k$th and the $(k+1)$st empirical eigenvalue. The second part of the theorem states that in the subcritical case, the empirical $k$th eigenfunction is asymptotically orthogonal to $e_k$. This is not because $\hat e_k$ converges to any fixed orthogonal element, but rather because $\hat e_k$ is asymptotically non-tight and behaves increasingly like a "uniform distribution" on a cone in $L^2([0,1])$. The assumption of compact support is made for technical reasons and we conjecture that it can be dropped.\\
In the next step, we study the supercritical case.
\begin{theo} \label{e:thm:main2}
  Suppose that conditions \ref{ass:data}-\ref{ass:spiked:ev} are satisfied and let $1 \leq k \leq M$. Then, as $N \to \infty$, it holds that
   \begin{align} \label{eq_eigenvals_super}
      \hat \lambda_k \overset{P}{\to} \psi\big(s_k\big),
  \end{align}
  and 
   \begin{align} \label{e:idev}
   | \langle \hat e_k , e_k \rangle | \overset{P}{\to} \frac{s_k \psi'\big(s_k\big)}{\psi   \big(s_k\big)}. 
  \end{align}

\end{theo}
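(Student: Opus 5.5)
Throughout, the strategy is to reduce to a finite-dimensional spiked covariance model and use known supercritical spike results from random matrix theory. We then let the truncation level grow and show that the discarded tail is negligible in operator norm.

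\textbf{Truncated model and its limits.} For $\gamma>0$ let $p=K+\lfloor N\gamma\rfloor$ and let $\Pi_p$ be the projection onto $\mathrm{span}\{e_1,\ldots,e_p\}$. By Gaussianity (Remark \ref{rem:1}, 3)), $\Pi_p X_i \overset{d}{=} \Sigma_p^{1/2} Z_i$. Here $Z_i$ has i.i.d. standard normal entries, and $\Sigma_p=\mathrm{diag}(s_1,\ldots,s_K,b(1/N),\ldots,b(\lfloor N\gamma\rfloor/N))$. The spectral measure of the bulk part converges weakly to the law of $b(U)$ with $U\sim\mathrm{Unif}[0,\gamma]$, and $p/N\to\gamma$. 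Because $b$ is monotone and Lipschitz at $0$, the largest bulk eigenvalue converges to $b(0)$, so there are no outliers in the population bulk. In this setting the classical results of Baik--Silverstein and Paul, as reviewed in Section \ref{sec_background_rmt}, apply to a spike $s_k$. The spike is supercritical exactly when $s_k\xi(\gamma)>1$, where $\xi(\gamma)$ solves $\int_0^\gamma (b\xi/(1-b\xi))^2dx=1$. In that case the $k$th sample eigenvalue of $\widehat C_p:=\Pi_p\widehat C\Pi_p$ converges to $\psi_\gamma(s_k)=s_k(1+\int_0^\gamma b/(s_k-b)\,dx)$. The squared overlap converges to $s_k\psi_\gamma'(s_k)/\psi_\gamma(s_k)$. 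I will check that the standard formula $1-\gamma\int t^2/(s-t)^2\,dH(t)$ divided by $1+\gamma\int t/(s-t)\,dH(t)$ matches this expression after a change of variables. The problem then reduces to rewriting known Stieltjes-transform formulas in the paper's $b$-parametrization.

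\textbf{Letting $\gamma\to\infty$.} By monotone convergence, $\xi(\gamma)\downarrow\xi(\infty)$ (Lemma \ref{lem:xi} gives existence), $\psi_\gamma\to\psi$, and $\psi_\gamma'\to\psi'$ locally uniformly on $(b(0),\infty)$. Hence, for $k\le M$, the strict inequality $s_k\xi(\infty)>1$ implies $s_k\xi(\gamma)>1$ for all large $\gamma$. The truncated limits then converge to the claimed values.

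\textbf{Main obstacle: the tail.} I will write $\widehat C=\widehat C_p+R_p$, where $R_p$ collects all cross and tail terms involving $(I-\Pi_p)X_i$. The goal is an operator-norm estimate of the form
\begin{align*}
\limsup_{N\to\infty}\PR\big(\|\widehat C-\widehat C_p-D_p\|_{\mathcal L}>\varepsilon(\gamma)\big)=0,\qquad \varepsilon(\gamma)\to 0\ (\gamma\to\infty),
\end{align*}
where $D_p=\big(N^{-1}\sum_{n>p}\lambda_n\big)\,I$ on the data span. A pure scalar shift cannot be ignored, because the tail trace $N^{-1}\sum_{n>p}\lambda_n\approx\int_\gamma^\infty b$ is not small in general. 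For this reason I expect it cleanest to avoid operator decomposition and instead use the dual $N\times N$ Gram matrix. The Gram matrix is $G=N^{-1}(\langle X_i,X_j\rangle)$, and its tail part $G_{\rm tail}=N^{-1}\sum_{n>p}\lambda_n Z_{\cdot n}Z_{\cdot n}^\top$ should be $\approx(\int_\gamma^\infty b)\,I_N$ in operator norm. I plan to prove this deviation bound with the generic chaining / Koltchinskii--Lounici bound. Writing $\mathfrak r$ for the effective rank, that bound gives a deviation of order $\|C_{\rm tail}\|(\sqrt{\mathfrak r/N}+\mathfrak r/N)$ in the $p$-tail. Since $\|C_{\rm tail}\|\le b(\gamma)$ and the tail has effective rank at most $N\int_\gamma^\infty b/b(\gamma)$, this yields a bound of order $\sqrt{b(\gamma)\int_\gamma^\infty b}+\int_\gamma^\infty b\to0$.

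\textbf{Perturbation and identification.} An approximately scalar tail shifts $\hat\lambda_k$ by $\int_\gamma^\infty b$, which also vanishes. Weyl's inequality then transfers the eigenvalue limits. For the eigenvector, $\hat e_k$ is the normalized image under $N^{-1/2}\sum_i (\cdot)_i X_i$ of the corresponding Gram eigenvector. Since the spike eigenvalue is separated from the rest of the spectrum (supercriticality gives $\psi(s_k)>\psi(\xi^{-1}(\infty))$, and distinct $s_k$ give distinct limits), a Davis--Kahan argument controls the error with an $o_{\PR}(1)+\varepsilon(\gamma)$ bound. The overlap $\langle\hat e_k,e_k\rangle$ only involves the $e_k$-coordinate, which lies inside $\Pi_p$. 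Finally, letting $N\to\infty$ and then $\gamma\to\infty$ gives \eqref{eq_eigenvals_super} and \eqref{e:idev}.
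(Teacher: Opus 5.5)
Your proposal is correct in outline and follows the paper's skeleton. You truncate at dimension $\approx\gamma N$ and use the finite-dimensional supercritical spike results for fixed $\gamma$ (the paper's Lemma \ref{lem:A1} and Lemma \ref{lem_rmt_eigenvector}). You control the tail by effective-rank concentration and take the double limit $\lim_{\gamma\to\infty}\limsup_{N\to\infty}$ with Weyl and Davis--Kahan.

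What is genuinely different is that you perturb on the $N\times N$ Gram side. Since $\Pi_p$ and $\Pi_p^c$ are orthogonal, $G=G_p+G_{\rm tail}$ holds exactly, with no cross terms. The nonzero spectra of $G_p$ and $G_{\rm tail}$ are those of $A_1$ and $A_4$, so $0\le\hat\lambda_k-\hat\lambda_k(\gamma)\le\|A_4\|_{\mathcal L}$. The overlap then transfers via $\hat e_k=(N\hat\lambda_k)^{-1/2}\sum_i u_{k,i}X_i$, $\Pi_pe_k=e_k$ and $\frac1N\sum_i\langle X_i,e_k\rangle^2=O_\PR(1)$, with the eigengap coming from Lemma \ref{lem:A1}.

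The paper instead works with the operator decomposition $\widehat C=A_1+A_2+A_3+A_4$. It must therefore control the cross terms $A_2,A_3$ with the Chen--Sanz-Alonso cross-covariance bound. Your route needs only the Koltchinskii--Lounici bound for $A_4$, so it is more elementary. The paper's route gives stronger operator-level statements, $\|\widehat C-A_1\|_{\mathcal L}\to0$ and $\|\hat e_k-\hat e_k(\gamma)\|\to0$, which cover whole eigenfunctions rather than just the overlap with $e_k$.

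\textbf{Tail bound.} The scalar shift $D_p$ is a detour, and the bound you attach to it is not what the cited inequality gives. Because $b$ is integrable, $\int_\gamma^\infty b\to0$, so the shift is itself part of $\varepsilon(\gamma)$ and need not be removed. Koltchinskii--Lounici controls $\|A_4-\mathbb E A_4\|_{\mathcal L}$, not $\|G_{\rm tail}-(\int_\gamma^\infty b)I_N\|$. What it actually yields is $\limsup_{N}\mathbb E\|G_{\rm tail}\|\le b(\gamma)+c'\big(\sqrt{b(\gamma)\int_\gamma^\infty b}+\int_\gamma^\infty b\big)$.

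Your bound omits the term $\|\mathbb E A_4\|_{\mathcal L}\le b(\gamma)$, and it cannot be dropped. If the tail consists of $\delta N$ eigenvalues $\approx b(\gamma)$ with $\delta$ small, then $\|G_{\rm tail}-cI_N\|\gtrsim b(\gamma)$, while your bound is $O(b(\gamma)\sqrt\delta)$. Since $b(\gamma)\to0$ the omission is harmless, and the corrected bound is exactly the paper's estimate for $A_4$ in Lemma \ref{lem:A234}.

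\textbf{Squared overlap.} The spike eigenvector result concerns the squared overlap; for an identity bulk it is Paul's $\cos^2$ formula. So your argument, like the paper's own proof, establishes $|\langle\hat e_k,e_k\rangle|^2\overset{P}{\to}s_k\psi'(s_k)/\psi(s_k)$. As displayed, \eqref{e:idev} lacks the square, and your final sentence should claim the squared version.

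\textbf{Minor point.} For the truncated model to satisfy the RMT hypotheses you also need $s_{M+1}\xi(\gamma)<1$. This holds only for large $\gamma$, because $\xi(\gamma)\downarrow\xi(\infty)$, but that suffices for the double limit.
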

In the supercritical case, the empirical $k$th eigenvalue is informative about $\lambda_k=s_k$, even though it has an upward bias, captured by $\psi(\cdot)$, and there is a non-vanishing angle between the empirical eigenfunction and its population counterpart. Notice that here we do not need to impose any restrictions regarding the support of $\bb$.

\noindent \textbf{Interpretation for FPCA} The above Theorems \ref{e:thm:main1} and \ref{e:thm:main2} demonstrate that for rough functional data, as considered in this work, the success of FPCA is not a given. Indeed, for the $k$th eigenvalue and eigenfunction to have any meaningful relation to their population counterparts, a supercritical signal is necessary. Otherwise, projecting on the $k$th empirical eigenfunction is basically equivalent to projecting onto a random direction, and the corresponding eigenvalue does not reflect the amount of variance explained. In the supercritical case, the empirical estimators are informative, though biased. These results contrast sharply with the benign behavior of FPCA in the traditional smooth setting that we reviewed in Proposition \ref{prop:trad}. The biggest difference is the important role played by the bulk eigenvalues in the rough regime, while for smooth data their effect is asymptotically negligible. \\
Our results are formulated for a number of $K$ spiked eigenvalues $\lambda_k = s_k, \,k=1,\cdots,K$ that are fixed. If one considers, in contrast, a scenario where $\lambda_{k} = s_{k,N} \uparrow \infty$, while still ensuring that $s_{k,N}-s_{k+1,N}$ is bounded away from $0$, it can be shown that the bias vanishes. This result is unsurprising, as it is known that for an effective rank $\mathfrak{r}[C]=o(N)$ consistency of eigencomponents can be ensured. 
%Our findings suggests that FPCA becomes unstable for rough functional data. If data become too rough, a breaking point is reached and empirical estimators are essentially random guesses. Even before that, roughness comes at the cost of a substantial bias.

\noindent \textbf{Proof techniques} The proofs of our main theorems combine results from RMT with state-of-the-art concentration bounds for Hilbert space valued operators. The latter results are based on recent progress in generic chaining (see, e.g., \cite{alghattas:chen:sanzalonso:2025}). The fundamental strategy is to define the projection $\Pi_{\lfloor N \gamma \rfloor}$ on the eigenspace spanned by the population eigenfunctions $\{e_1,e_2,\ldots,e_{\lfloor N \gamma \rfloor}\}$, where $\gamma>0$ is the parameter discussed after eq. \eqref{e:def:xiinf}. Denoting the projection on the orthogonal complement by $\Pi_{\lfloor N \gamma \rfloor}^c$, we can decompose the empirical covariance $\widehat{C}$ into
\[
\widehat{C} = A(\gamma)+B(\gamma)
\]
where
\[
A(\gamma) = \frac{1}{N}\sum_{i=1}^N \Pi_{\lfloor N \gamma \rfloor}[X_i] \otimes \Pi_{\lfloor N \gamma \rfloor}[X_i]
\]
and $B(\gamma)$ is then defined as the remainder. The proofs consist of three key ingredients:\\
1) We show that the eigenvalues(-vectors) of $\widehat{C}$ are close to those of $A(\gamma)$. This follows partly by results from perturbation theory and partly from eigengap derivations that rely on BBP-transitions for high-dimensional covariances. \\
2) The eigenvalues(-vectors) of $A(\gamma)$ are analyzed by tools from RMT for fixed $\gamma$ and as $N \to \infty$. Here, conditions for sub- and supercriticality exist, which "converge" for any fixed $\gamma$ in our model to an integral condition involving $b$. 
We then study separately the limit as $\gamma \to \infty$.\\
3) Handling the remainder $B(\gamma)$ is non-trivial because it can be shown not to vanish as $N \to \infty$. So, we derive a result of the following form, for any $\epsilon>0$:
\[
\lim_{\gamma \to \infty} \limsup_{N \to \infty} \mathbb{P}(\|B(\gamma)\|_\mathcal{L}>\epsilon)=0.
\]
Analyzing the inner limit (and obtaining bounds that decay in $\gamma$) requires the use of concentration bounds for sums of random tensors that explicitly depend on the effective rank. Here the recent results from generic chaining are used, and especially those by \cite{chen:sanzalonso:2025}.

\begin{rem}
Pre-smoothing has been used for decades to convert visually rough functional data to
smooth curves, to which various FDA methods would then be applied. We briefly comment on this strategy in light of the results of this section. 
\begin{itemize}[leftmargin=*]
    \item[1)] From a practical perspective, pre-smoothing the random functions $X_i$ can help to concentrate them more closely in a low-dimensional space of (smooth) functions. This can lead to more stable estimates in FPCA.
    However, it is important to notice that a reasonable amount of smoothing does not automatically eliminate the phase transitions and biases described by our theory. Indeed, in the Supplement, we have made versions of Figure \ref{fig:1} using different strengths of pre-smoothing for the data. There, we see essentially the same picture with biases that are only very slowly receding as more smoothing is applied. Very aggressive smoothing may change this picture and lead to strong FPCA consistency expected from classical theory. However, applying a lot of smoothing also entails statistical costs.  
    First, a smoothed curve will have a systematically different structure from the true curve $X_i$, making it harder to interpret. Notice that for the examples in the Introduction, as in many others, roughness is not an artifact of measurement errors that should be removed, but a true feature of the observed scientific variable.
    Second, since smoothing is a deliberate attempt to filter out "noisy frequencies", it runs the risk of filtering out relevant high-frequency events too as, e.g., short-duration extreme events in weather or flood data.
    \item[2)] From a mathematical perspective, smoothing a function in a geometric sense is not the same as smoothing it in an $L^2$-sense. From an $L^2$-perspective, the ideal pre-smoothing of the data $X_i$ for eigenanalysis would be projecting it on the space spanned by the leading eigenfunctions $\{e_1,\ldots,e_K\}$. Practical smoothing will, however, look different from that. For example, it might involve some sort of local averaging, or more mathematically speaking, convolving the function $X_i$ with some kernel function to get a visually smooth curve. A somewhat unpleasant finding of our data analysis is that these two notions of smoothing are not only theoretically but also practically different; see the end of Section \ref{sec:data}. Indeed, the relation between more geometric smoothing to supercriticality of eigencomponents seems to be highly nonlinear. Typically a small amount of smoothing leads to stronger supercriticality, but beyond that smoothing may increase or decrease supercriticality in a rather unpredictable way. 
\end{itemize}
\end{rem}

\subsection{Diagnostic tools} \label{sec:diag}

In the previous section, we have determined theoretical criteria for informative and uninformative FPCA. We now discuss a statistical tool for users to practically determine whether FPCA is informative or not. For this purpose, let $K_1 \ge 1$ be a fixed, natural number and define the eigengap ratio statistic as 
\begin{align} \label{e:def:Lh}
\widehat{\Lambda}(K_1) := \max_{1 \le k \le K_1} \widehat{R}_k\quad \textnormal{where} \quad \widehat{R}_k := \frac{\hat\lambda_k - \hat\lambda_{k+1}}{\hat\lambda_{k+1} - \hat\lambda_{k+2}}. 
\end{align}
We also define the limiting distribution 
\[
\Lambda(K_1) := \max_{1 \le k \le K_1} R_k\quad \textnormal{where} \quad R_k := \frac{\zeta_k - \zeta_{k+1}}{\zeta_{k+1} - \zeta_{k+2}},
\]
and where the vector
$(\zeta_1,\dots,\zeta_{K_1+2})$ follows a $(K_1+2)$-dimensional Tracy-Widom distribution. For details on that distribution we refer to Section \ref{sec_background_rmt} in the Supplement.
The next result discusses the convergence behavior of $\widehat{\Lambda}(K_1)$ in sub- and supercritical regimes. We impose the assumption of a finitely supported bulk function for technical convenience. Since data need to be saved as finite-dimensional discretizations, in practice $\bb$ will always have compact support.

\begin{prop} \label{prop:onat}
    Suppose that conditions \ref{ass:data}-\ref{ass:spiked:ev} are satisfied and that $\bb$ is compactly supported. 
    \begin{itemize}
        \item[(i)] If $M=0$, the weak convergence $ \widehat{\Lambda}(K_1) \overset{d}{\to} \Lambda(K_1)$ follows.
        \item[(ii)] If $1 \le M \le K_1$ it follows that $ \widehat{\Lambda}(K_1) \overset{P}{\to} \infty$.

    \end{itemize}
\end{prop}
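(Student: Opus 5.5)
Since $\bb$ has compact support, say on $[0,L]$, the bulk eigenvalues $\lambda_n$ vanish for $n > K+ LN$. Hence $\widehat C$ lives almost surely on the finite-dimensional space spanned by $e_1,\dots,e_p$ with $p=K+\lfloor LN\rfloor$, and $p/N\to L$. By Gaussianity (Remark~\ref{rem:1}, 3)), the coordinates of $X_i$ in this basis are $\Sigma_p^{1/2}Z_i$, where $Z_i$ is standard normal in $\R^p$ and
\[
\Sigma_p=\operatorname{diag}\bigl(s_1,\dots,s_K,\bb(1/N),\dots,\bb(\lfloor LN\rfloor/N)\bigr).
\]
So the nonzero spectrum of $\widehat C$ equals that of a $p\times p$ sample covariance matrix in the proportional regime, with a finite-rank spiked population. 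The empirical spectral distribution of $\Sigma_p$ converges weakly to the law of $\bb(U)$, $U\sim\mathrm{Unif}[0,L]$. The plan is to reduce both parts of Proposition~\ref{prop:onat} to known RMT results for this matrix (reviewed in Section~\ref{sec_background_rmt}).

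\textbf{Part (i).} I would first check that the subcriticality condition $s_k\xi(\infty)<1$ coincides with the standard BBP threshold for this limiting spectral measure. In the finite-$p$ picture, $\xi(\gamma)$ with $\gamma=L$ already equals $\xi(\infty)$, because the integrand in \eqref{e:def:xiinf} vanishes beyond $L$. The RMT threshold condition $\int \bigl(\frac{t\xi}{1-t\xi}\bigr)^2 dH(t)=1/c$, with $c=L$ and $dH$ the law of $\bb(U)$, rewrites exactly as \eqref{e:def:xiinf}; this is the characterization of \cite{doernemann:lopes:2025}.

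Next, the right edge must be regular. Since $\bb$ is Lipschitz at $0$ and non-increasing, $H$ has a density near its upper endpoint $\bb(0)$ that does not vanish too fast. I would verify the edge-regularity condition (typically $\bb(0)\xi<1$ strictly at the edge solution), which follows because $\xi(\infty)<1/\bb(0)$ by Lemma~\ref{lem:xi}.

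With $M=0$ all spikes are subcritical. Edge universality for sample covariance matrices with general diagonal population and subcritical spikes (Bao–Pan–Zhou, Lee–Schnelli, Knowles–Yin) then gives
\[
\bigl(\alpha N^{2/3}(\hat\lambda_k-\lambda_+)\bigr)_{k\le K_1+2}\cond(\zeta_1,\dots,\zeta_{K_1+2}),
\]
where $\lambda_+=\psi(\xi^{-1}(\infty))$ and $\alpha>0$ is a constant. The ratios $\widehat R_k$ are invariant under the centering and scaling, and the Tracy–Widom vector has almost surely distinct coordinates. The continuous mapping theorem then yields $\widehat\Lambda(K_1)\cond\Lambda(K_1)$.

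\textbf{Part (ii).} Take $k=M\le K_1$.
\begin{itemize}
\item The numerator $\hat\lambda_M-\hat\lambda_{M+1}$ converges in probability to a positive limit. By Theorem~\ref{e:thm:main2}, $\hat\lambda_M\conp\psi(s_M)$. By Theorem~\ref{e:thm:main1}, $\hat\lambda_{M+1}\conp\psi(\xi^{-1}(\infty))$. The gap is positive because $\psi$ is strictly increasing on $(\xi^{-1}(\infty),\infty)$ and $s_M>\xi^{-1}(\infty)$.
\item The denominator $\hat\lambda_{M+1}-\hat\lambda_{M+2}$ is $\mathcal O_{\PR}(N^{-2/3})$. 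The eigenvalues from index $M+1$ on are governed by the edge of a matrix with the $M$ supercritical spikes removed. Spike separation and eigenvalue sticking results (Bloemendal–Knowles–Yau–Yin type) give Tracy–Widom fluctuations for $\hat\lambda_{M+1},\hat\lambda_{M+2}$.
\end{itemize}
Hence $\widehat R_M\geq c\,N^{2/3}\to\infty$ in probability, and $\widehat\Lambda(K_1)\ge\widehat R_M$. The possibly large ratios $\widehat R_k$ for $k<M$ only increase the maximum and so are harmless.

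\textbf{Main obstacle.} The hard step is justifying the edge universality or sticking results with a population spectrum that depends on $N$ through the sampled $\bb(j/N)$. This requires checking the regularity conditions of the cited theorems uniformly in $N$: convergence of $\Sigma_p$'s spectral measure at a rate fast enough that $\lambda_+$ may be replaced by its $N$-dependent version, which cancels in the ratios anyway, and a uniformly nonvanishing gap between $s_{M+1}\xi$ and $1$, guaranteed by \ref{ass:spiked:ev}. I would handle this by using the finite-$N$ edge $\lambda_{+,N}$, defined through the discrete analogue of \eqref{e:def:xiinf}, as the centering, and invoking the theorems in their non-asymptotic, $N$-dependent form.
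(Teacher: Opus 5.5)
Your proof is correct. Part (i) follows the paper's argument, and Part (ii) reaches the same conclusion by a heavier route than the paper needs.

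\textbf{Part (i).} The paper argues the same way. Since $\bb$ is compactly supported, $\widehat C$ coincides with $A_1(\lfloor \gamma N\rfloor)$ for $\gamma$ beyond the support. One then checks \ref{ass_rmt1}--\ref{ass_rmt3}, with \ref{ass_rmt3} coming from Lemma \ref{lem:s_N}, and applies Lemma \ref{lem_tw}. Your remarks on the affine invariance of the ratios and the almost sure distinctness of the Tracy--Widom coordinates fill in the continuous-mapping step that the paper leaves implicit. One small point: $H$ need not have a density near $\bb(0)$, since $\bb$ may be flat near $0$. What is actually needed is $\xi(\infty)<1/\bb(0)$, and that is what you verify.

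\textbf{Part (ii).} The difference is in how you treat the denominator of $\widehat R_M$. The paper uses only first-order information. By Theorem \ref{e:thm:main1}, both $\hat\lambda_{M+1}$ and $\hat\lambda_{M+2}$ converge in probability to the same value $\psi(\xi^{-1}(\infty))$. So the almost surely positive denominator is $o_{\PR}(1)$. The numerator tends to $\delta:=\psi(s_M)-\psi(\xi^{-1}(\infty))>0$. This already gives $\widehat R_M\conp\infty$, because for every $C>0$,
\[
\PR(\widehat R_M<C)\le \PR(\hat\lambda_M-\hat\lambda_{M+1}<\delta/2)+\PR\bigl(\hat\lambda_{M+1}-\hat\lambda_{M+2}>\delta/(2C)\bigr)\to 0.
\]

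Your eigenvalue-sticking argument yields the sharper statement $\widehat R_M\gtrsim N^{2/3}$, which could matter for quantifying power. However, it imports Bloemendal--Knowles--Yau--Yin-type results. Verifying their hypotheses for an $N$-dependent population spectrum with supercritical spikes is exactly what you flag as your main obstacle. The paper's argument shows this obstacle never arises for part (ii).

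On the other hand, you justify why $\delta>0$, which the paper only asserts. $\psi'$ is increasing and $\psi'(\xi^{-1}(\infty))=0$, so $\psi$ is strictly increasing on $(\xi^{-1}(\infty),\infty)$, and $s_M>\xi^{-1}(\infty)$.
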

Following the construction by \cite{onatski:2009}, we can use the eigengap ratio statistic $\widehat{\Lambda}(K_1) $ to test the null hypothesis of no supercritical components, i.e. of $H_0: M=0$. To be precise, for a significance level $\alpha \in (0,1)$ let $q_{1-\alpha}$ be the  $(1-\alpha)$-quantile of the distribution $\Lambda(K_1)$. It then follows under $H_0$ that
  \[
    \mathbb{P}(\widehat{\Lambda}(K_1)>q_{1-\alpha}) \to \alpha. 
\]
In contrast, if $1 \le M \le K_1$, the second part of Proposition \ref{prop:onat} entails that
    \[
    \mathbb{P}(\widehat{\Lambda}(K_1)>q_{1-\alpha}) \to 1. 
    \]
Notice that the threshold $K_1$ needs to be chosen by the user. Since $M$ is practically unknown, it is typically safest to err on the side of a larger $K_1$.

\section{Simulations} \label{sec:simulations}
To illustrate the theory of Section \ref{sec:main},
we study related finite sample properties of FPCA for rough functional data.\footnote{Codes for simulations and data analyses are available at our repository \url{https://github.com/timkutta/PCA-for-rough-functional-data}}
To make our presentation compact, we will focus on the case  $K=1$, i.e., of a single spike that can be either super- or subcritical. Throughout this section, the sample size is fixed at $N=100$, unless stated otherwise. Random functions are generated using the Fourier basis $(f_i)_{i \in \mathbb{N}}$ of $L^2([0,1])$. Theoretically, functional data can be infinite dimensional, but for our simulations we need to truncate at some point. Therefore, we consider a $p$-dimensional basis expansion with $p=5 \,N$. Non-reported results for $p=10\,N$ look almost identical. Any function $X_i$ is then generated by drawing iid standard Gaussian variables $(Z_{i,j})_{1 \le j \le p}$ and setting
\[
X_i(t) = \sum_{j=1}^p Z_{i,j} \sqrt{\lambda_j} f_j(t), \qquad t \in [0,1].
\]
The first eigenvalue $\lambda_1$ is spiked, and the remaining eigenvalues are characterized by a bulk function $b$ introduced in Section \ref{sec:model}. We consider three bulk functions, namely
\begin{align*}
    b_1(x) = & (1+ax)^{-3},\\
    b_2(x) = & \exp(-ax),\\
    b_3(x) = & \max(1-ax,0).
\end{align*}
All bulk functions satisfy $b(0)=1$, which corresponds to the largest non-spiked eigenvalue. The constant $a>0$ is different in each function and adjusted to yield the same value for the criticality threshold $1/\xi(\infty)=1.3$, which makes our subsequent results more comparable.

\begin{figure}[h]
  \centering
  \begin{subfigure}{0.49\textwidth}
    \centering
    \includegraphics[width=\linewidth]{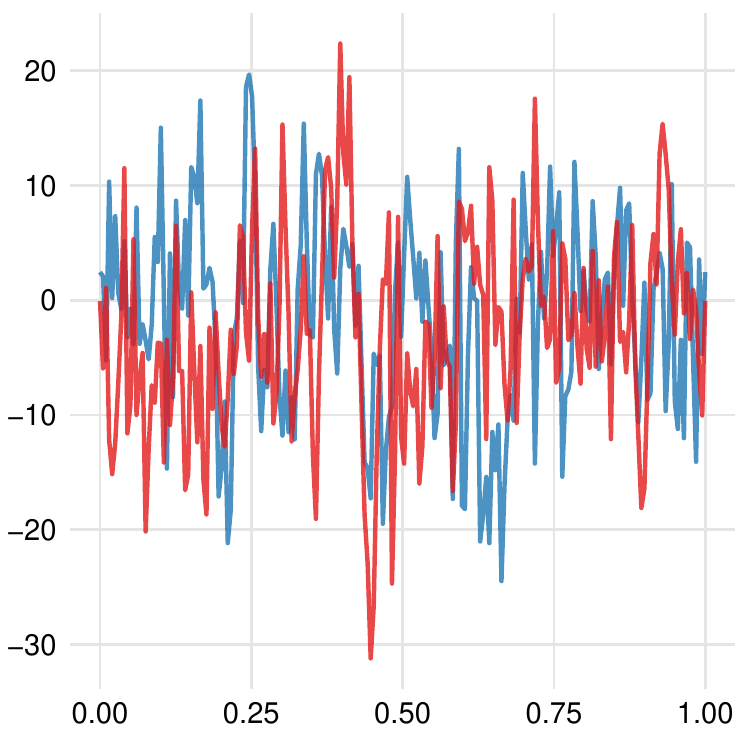}
  \end{subfigure}
  \hfill
  \begin{subfigure}{0.49\textwidth}
    \centering
    \includegraphics[width=\linewidth]{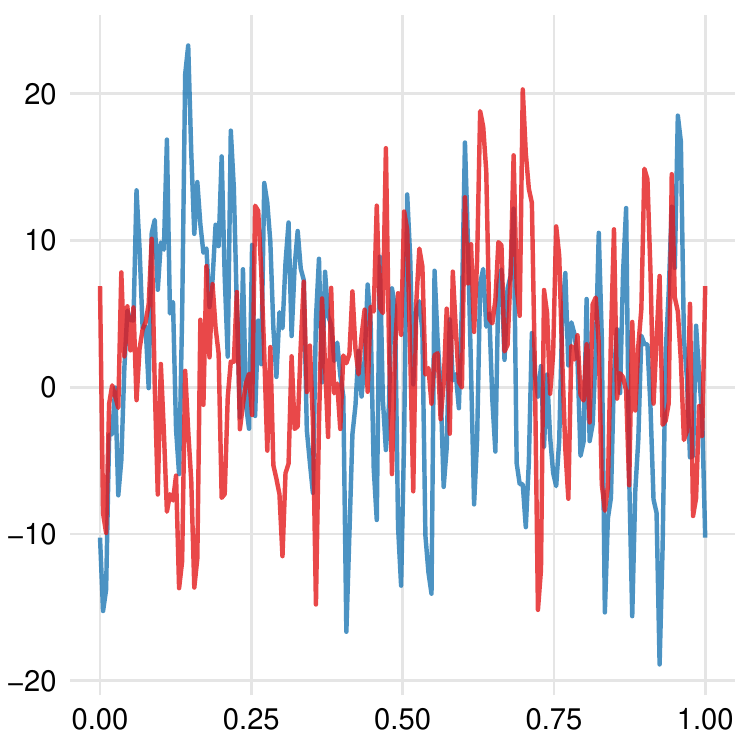}
  \end{subfigure}
  \caption{\label{fig:EMPEF}  Two realizations of $X_i$ for $b_1$ (left) and $b_2$ (right).}
\end{figure}

\noindent \textbf{Phase transitions for eigenvalues} For the subcritical case, we consider the value $\lambda_1=1.1$ and for the supercritical value $\lambda_1=2$. 
In Figure \ref{Fig:eigenW1}, we display histograms  of the largest empirical eigenvalue $\hat \lambda_1$ from $500$ simulation runs for bulk function $b_1$. Corresponding plots for bulk functions $b_2, b_3$ are provided in the Supplement. In each plot, the blue vertical line marks the median value of $\hat \lambda_1$ across the simulations and the red vertical line the theoretical limit from Theorems \ref{e:thm:main1} and \ref{e:thm:main2} respectively. 

\begin{figure}[H]
  \centering
  \begin{subfigure}{0.49\textwidth}
    \centering
    \includegraphics[width=\linewidth]{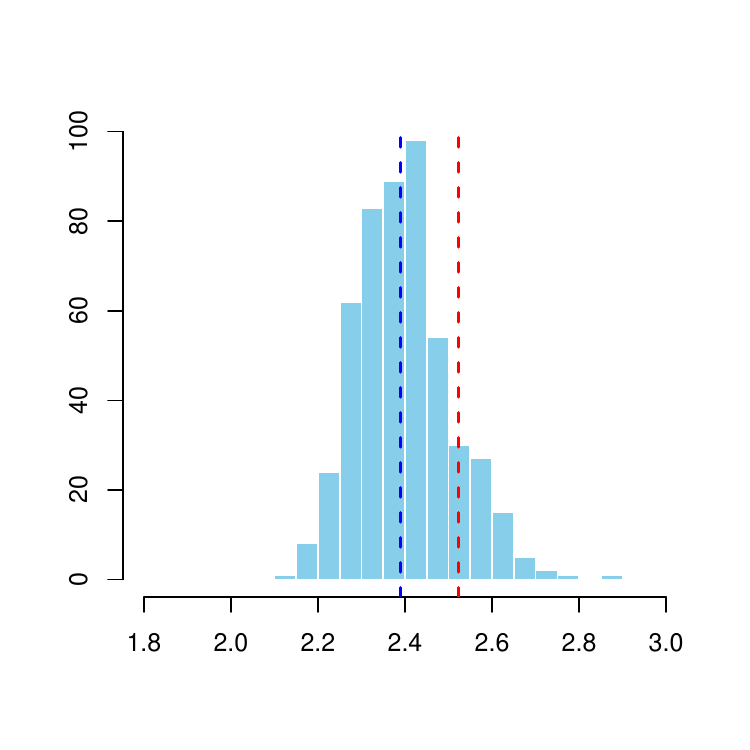}
  \end{subfigure}
  \hfill
  \begin{subfigure}{0.49\textwidth}
    \centering
    \includegraphics[width=\linewidth]{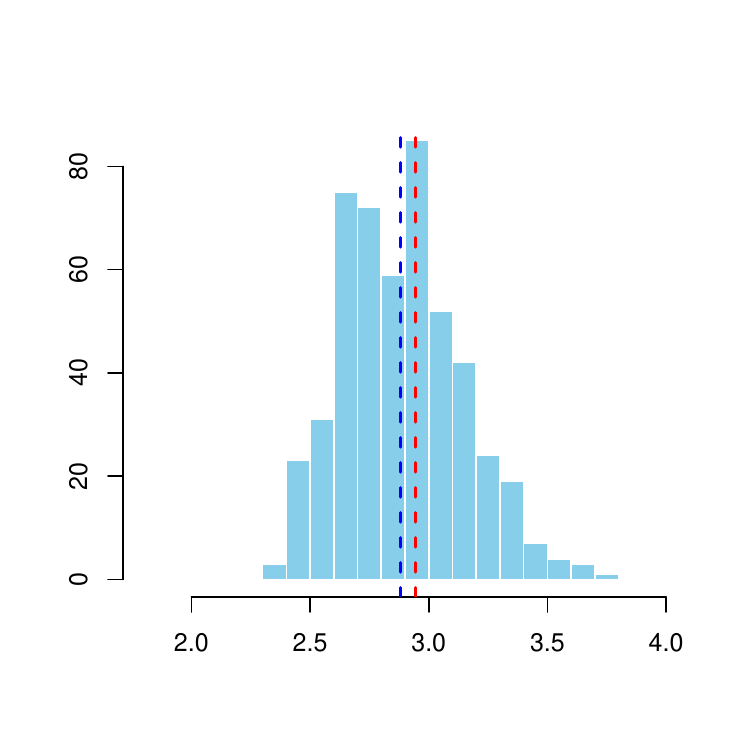}
  \end{subfigure}
  \caption{\label{Fig:eigenW1}   Distribution of largest eigenvalue $\hat \lambda_1$ in $500$ simulation runs.
  Blue vertical line indicates median of the values and red vertical line the theoretical limit. The subcritical case ($\lambda=1.1$) is left, supercritical case ($\lambda=2$)  is right. The bulk function is $b_1$.}
\end{figure}

\begin{figure}[H]
  \centering
  \begin{subfigure}{0.49\textwidth}
    \centering
    \includegraphics[width=\linewidth]{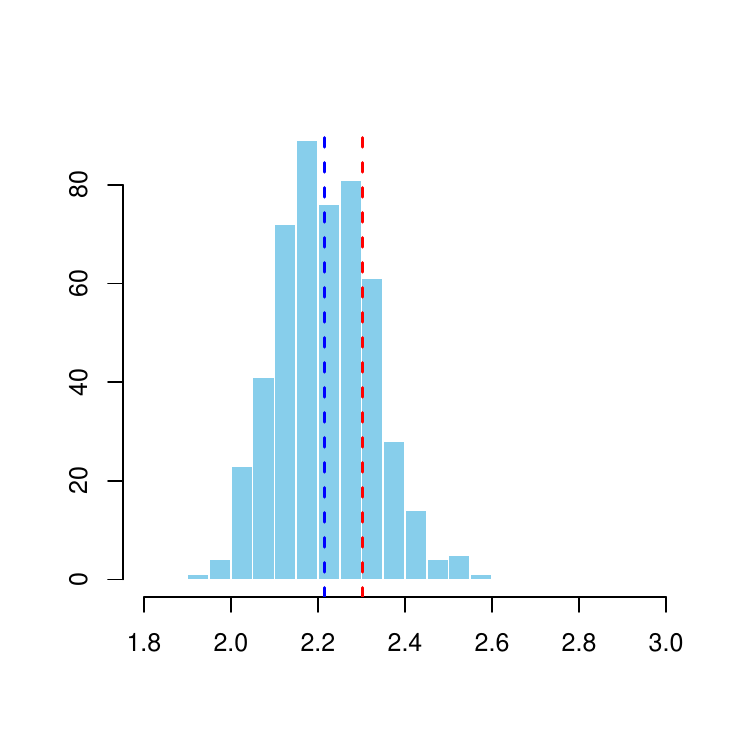}
  \end{subfigure}
  \hfill
  \begin{subfigure}{0.49\textwidth}
    \centering
    \includegraphics[width=\linewidth]{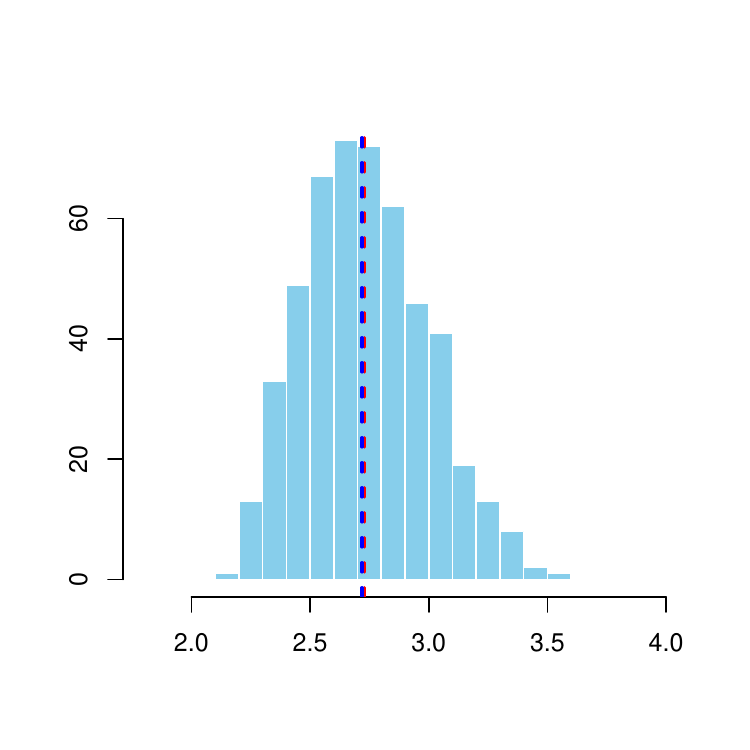}
  \end{subfigure}
  \caption{\label{Fig:eigenW2}   Distribution of largest eigenvalue $\hat \lambda_1$ in $500$ simulation runs for empirically centered data.
  Blue vertical line indicates median of the values and red vertical line the theoretical limit. The subcritical case ($\lambda=1.1$) is left, supercritical case ($\lambda=2$) is right. The bulk function is $b_1$.}
\end{figure}

We see that in both cases the median value of $\hat \lambda_1$ is much higher than the population eigenvalue $\lambda_1$ ($1.1$ and $2$ respectively), which is congruent with our theory. In both scenarios we see alignment of the median value of $\hat \lambda_1$ with its theoretical limit, yet the approximation seems substantially tighter in the supercritical regime. In the subcritical case, we see that the theoretical limit is systematically  larger than the median value of $\hat \lambda_1$, though by a small margin relative to the difference $\hat \lambda_1-\lambda_1$.
This phenomenon has only a weak connection to the specific shape of the bulk function $b$ and non-reported results confirm that the distance of the median to the theoretical limit diminishes for larger $N$ at a moderate pace. 
In the simulations for Figure \ref{Fig:eigenW1}, we used the covariance estimator \eqref{e:empcov}, for which we have developed our theory and which does not involve empirical centering of the data. In most applications, centering the data will be required and we therefore also study the eigenvalue distribution for the data where $X_i$ is replaced by $X_i-\frac{1}{N}\sum_{j=1}^N X_j$. We only report histograms for the case $b_1$ in Figure \ref{Fig:eigenW2}, with the same setting as before. We cannot see a systematic difference to the  case without centering.

\noindent \textbf{Phase transitions for eigenfunctions} For our investigation of empirical eigenfunctions, we use the same simulation setups as before and report the average angle between $\hat e_1$ and $e_1$. Notice that eigenfunctions are only determined up to sign and thus we always specify the sign that yields an angle $\le 90$°.  Again we run $500$ simulations in each scenario, using the bulk functions $b_1$ (results for $b_2, b_3$ in Supplement) and the sub- and supercritical eigenvalues of $\lambda_1=1.1$ and $\lambda_1=2$ respectively. Our results are displayed in Figure \ref{Fig:eigenV1}. In the subcritical case, we see that angles are on average large, concentrating close to the maximum possible value of $90$° that Theorem \ref{e:thm:main1} predicts. Of course values are, in finite samples, not precisely $90$°, but the qualitative result of strong unreliability of the estimate $\hat e_1$ becomes apparent. In the supercritical case, angles between empirical and true eigenfunctions do not degenerate (Theorem \ref{e:thm:main2}) and this is apparent on the right-hand side of Figure \ref{Fig:eigenV1}. Again we see that real angles enclosed by $\hat e_1$ and $e_1$ are smaller than their theoretical limit, and the distance of the theoretical prediction to the median simulated angle is about $13$° - a relatively large deviation. In non-reported simulations, we observed that increasing $\lambda_1$ leads to smaller angles between $\hat e_1$ and $e_1$ and also to a narrowing of the gap between theoretical prediction and median angle.\\
%Finally, the second part of Theorem \ref{e:thm:main1} states that in the subcritical case empirical and population eigenfunctions are asymptotically orthogonal, assuming that $b$ has compact support. In our simulations, we truncate data at dimension $p=5\,N$, effectively ensuring that the support of $b$ is inside the compact interval $[0,5]$. We do, however, conjecture that this assumption can be dropped. We investigate the effect of larger $p$, by setting $p=10\, N$ and $15\,N$ respectively in the subcritical case, and for $b_1$. Results are reported in Figure \ref{Fig:eigenV2}. As expected, there is no discernible effect of raising $p$.

\begin{figure}[H]
  \centering
  \begin{subfigure}{0.49\textwidth}
    \centering
    \includegraphics[width=\linewidth]{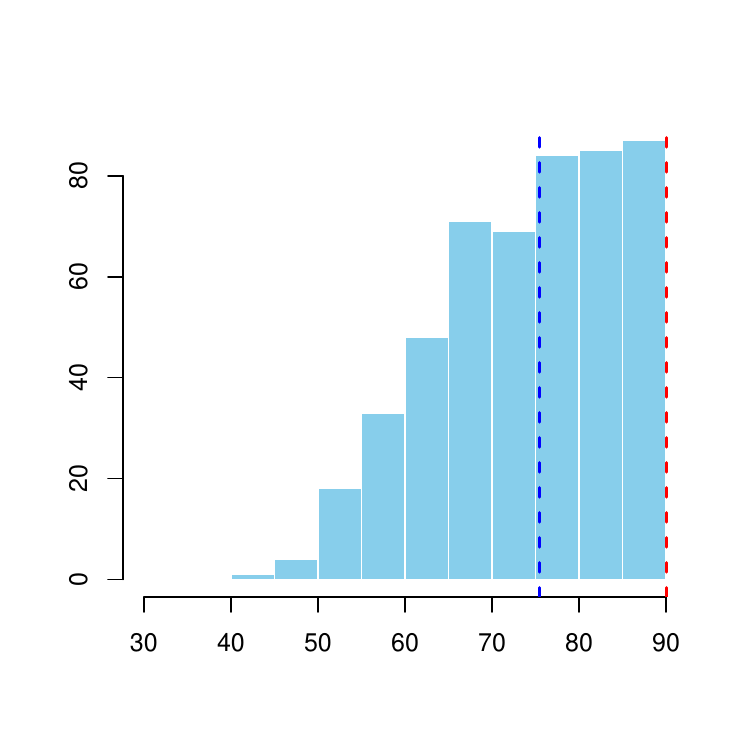}
  \end{subfigure}
  \hfill
  \begin{subfigure}{0.49\textwidth}
    \centering
    \includegraphics[width=\linewidth]{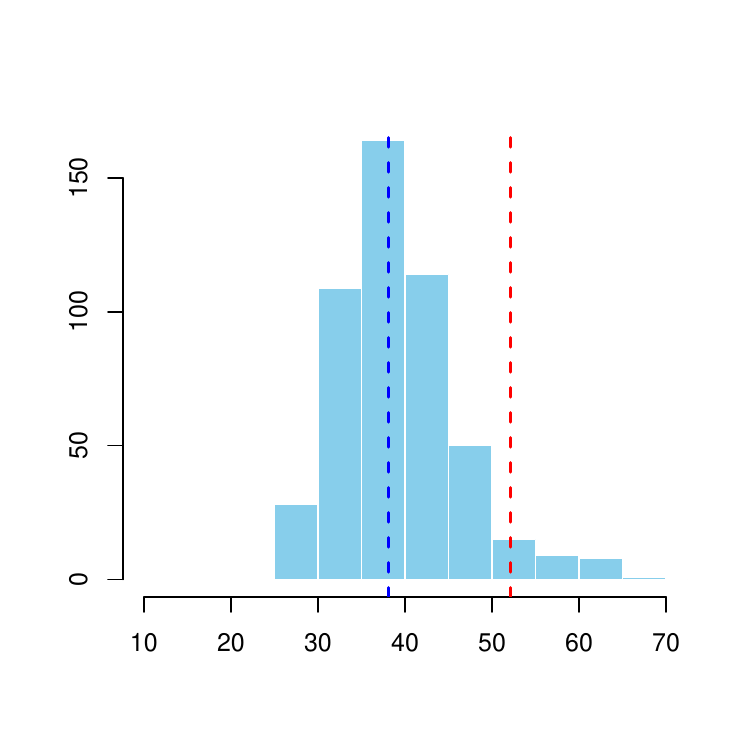}
  \end{subfigure}
  \caption{ \label{Fig:eigenV1} Distribution of the angle between $\hat e_1$ and $e_1$ in $500$ simulation runs. Blue vertical line indicates median of the values and red vertical line the theoretical limit. The subcritical case is left, the supercritical case is right. The bulk function is $b_1$.}
\end{figure}

\noindent \textbf{Testing for supercriticality}  Finally, we want to consider the eigenvalue ratio test statistic, presented in Section \ref{sec:diag}. This method is used to statistically test whether there are any supercritical components.   For the application of the test, we need to select a 
\begin{wraptable}{r}{0.45\textwidth}
\centering
\vspace{-10pt} % reduces vertical gap; adjust if needed
\caption{\label{Tab:1} Empirical rejection probabilities of the eigenvalue ratio statistic in percent.}
\begin{tabular}{c | c c c}
 & \multicolumn{3}{c}{bulk function} \\
\hline
\(K_1\) & $b_1$ & $b_2$ & $b_3$ \\
\hline
%1 & 7.0 & 6.6 & 8.0 \\
2 & 6.0 & 5.2 & 5.0 \\
3 & 4.8 & 6.0 & 6.0 \\
\end{tabular}
\end{wraptable}
hyper-parameter $K_1$ and we consider the choices $K_1=2,3$ which should lead to asymptotically consistent tests according to our discussion in Section \ref{sec:diag}.
The significance level is set to $\alpha=5\%$ and the theoretical quantile $q_{1-\alpha}$ (which depends on $K_1$) is computed using $1000$ replications from a parametric bootstrap. Empirical rejection rates are based on $500$ simulation runs. We consider the setting from this section, where the threshold that separates subcritical from supercritical eigenvalues is $1.3$.
Under the null-hypothesis we consider $\lambda_1=1.1$ and for alternatives of varying size $\lambda_1= 2 +i/4$ with $i=0,1,\ldots,12$. Under the null-hypothesis we consider bulk functions $b_i, i=1,2,3$. Under the alternative we focus exemplarily on $b_1$. Empirical rejection rates under $H_0$ are gathered in Table \ref{Tab:1} and well approximate $5\%$ in all scenarios. Power, depending on $K_1$ and $\lambda_1$ is displayed in Figure \ref{Fig:power}. Power is non-trivial for the supercritical value of $\lambda_1=2$ and increases rapidly for stronger spikes. The larger value $K_1=3$ is associated with a mild power loss compared to $K_1=2$, as expected by our theory.

\begin{figure}[h!]
\centering
\includegraphics[width=0.9\textwidth]{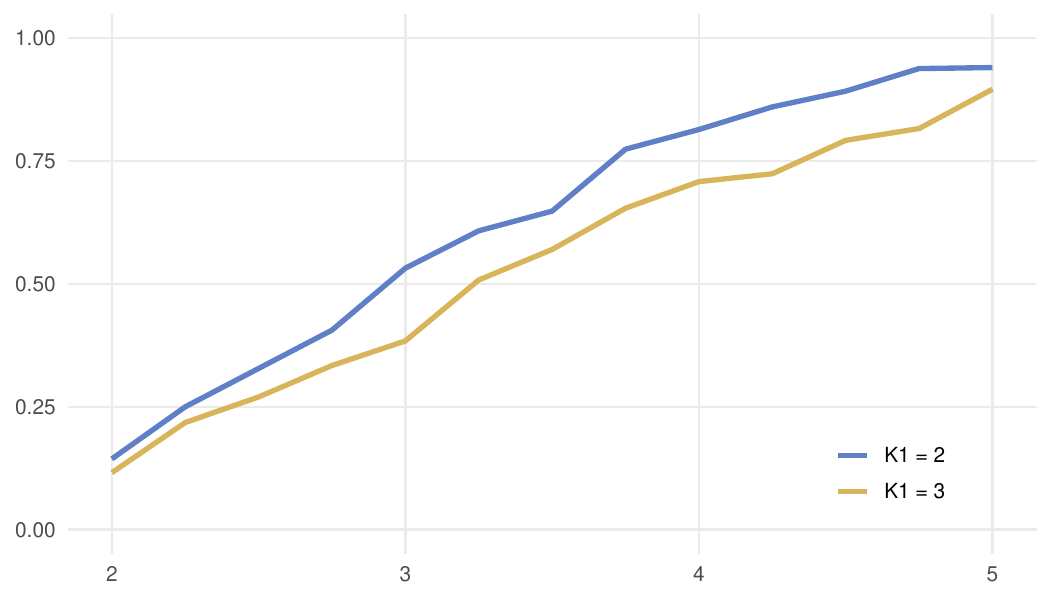}
\caption{\label{Fig:power} Empirical power of the eigengap ratio statistic, depending on the size of $\lambda_1$ ($x$-axis) and the choice of $K_1$.}
\end{figure}

\section{Data analysis} \label{sec:data} 

Recall the two datasets (Examples 1 and 2) from the Introduction. The datasets consist of annual profiles of temperature curves and river discharge data. Both datasets consist of fairly rough functional data, and in the Introduction, we presented evidence for highly unstable FPCA. In this section, we analyze these datasets using the background of our new mathematical theory.

\noindent \textbf{Geometric properties} In a first step, we consider the geometric properties of the eigenfunctions, and we confine our presentation to the temperature data in Example 1 to avoid redundancy. Theorems \ref{e:thm:main1} and \ref{e:thm:main2} suggest that while the most spiked (supercritical) empirical eigenfunctions may have some meaningful relation to their population counterparts, subsequent empirical eigenfunctions should be close to orthogonal to their population counterparts and basically be random guesses. For the temperature data, Figure \ref{fig:1} suggests that the first eigenfunction is stable enough to be truly informative, while for the remaining empirical eigenfunctions we expect uninformativeness.
% In the temperature data, we saw that (perhaps) the first eigenfunction is stable enough to be truly informative (see Figure \ref{fig:1}), while for the remaining empirical eigenfunctions we expect uninformativeness. 
A first interesting question is how informative versus uninformative eigenfunctions look like, geometrically compared to each other. For this purpose, we consider the first and the tenth empirical eigenfunctions, based on the entire 235 years of temperature data. The first and the tenth eigenfunctions are depicted in the upper panels of Figure \ref{fig:stacked_plots}. Visually, a feature of more informative eigenfunctions seems to be a visible macro structure (as e.g. the U-shape in panel (a)), while higher order eigenfunctions look more like a weakly dependent, stationary time series (as in panel (b)). This impression gets stronger, when discretizing the eigenfunctions into daily values and calculating the empirical autocorrelation function (ACF). In the case of the first eigenfunction it decays much slower than for the tenth eigenfunction, corresponding to a clearer macro structure. Eigenfunctions of very high order visually look increasingly like a white noise, and the illustrations given here are replicated for eigenfunctions of order  $k=50, 100$ in the Supplement. 

\begin{figure}[htbp]
    \centering
    % Upper row
    \begin{subfigure}[b]{0.48\textwidth}
        \centering
        \includegraphics[width=\linewidth]{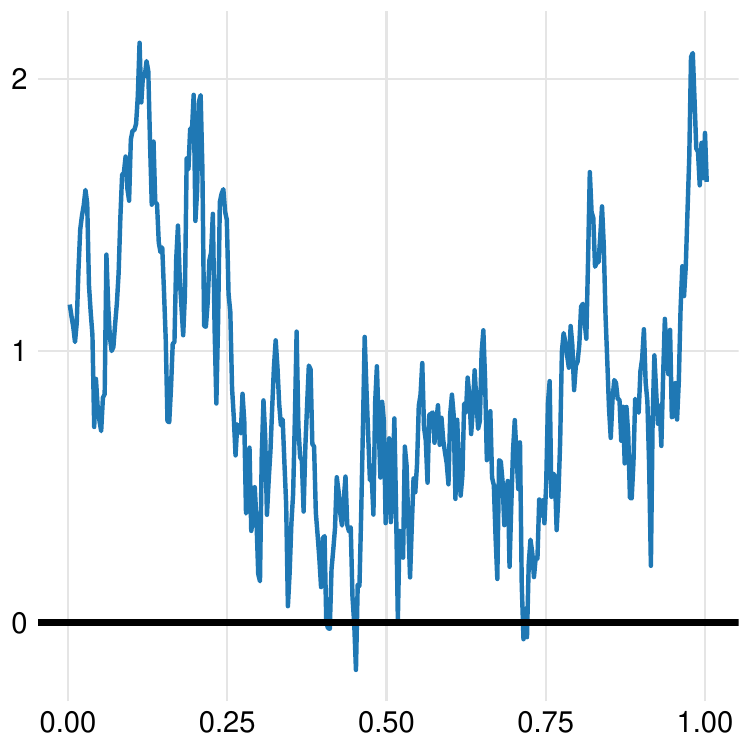}
        \caption{First empirical eigenfunction for temperature data.}
    \end{subfigure}
    \hfill
    \begin{subfigure}[b]{0.48\textwidth}
        \centering
        \includegraphics[width=\linewidth]{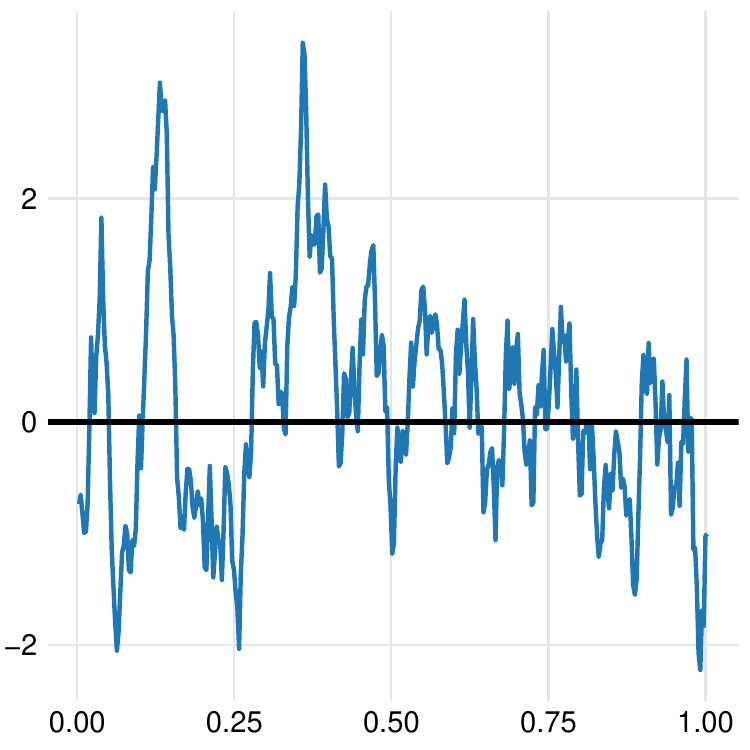}
        \caption{Tenth empirical eigenfunction for temperature data.}
    \end{subfigure}
    
    \vspace{0.5cm} % optional vertical spacing between rows
    
    % Lower row
    \begin{subfigure}[b]{0.48\textwidth}
        \centering
        \includegraphics[width=\linewidth]{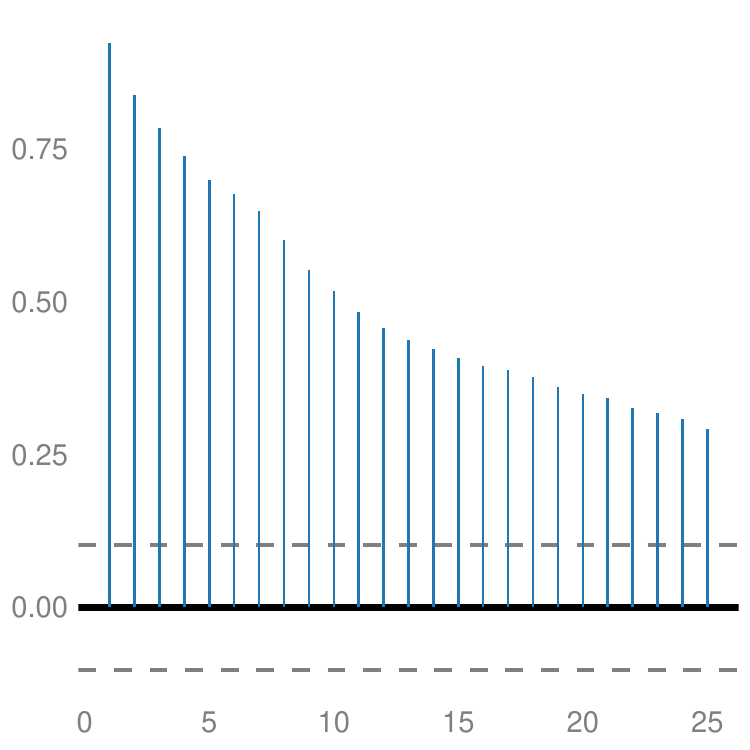}
        \caption{ACF for first eigenfunction.}
    \end{subfigure}
    \hfill
    \begin{subfigure}[b]{0.48\textwidth}
        \centering
        \includegraphics[width=\linewidth]{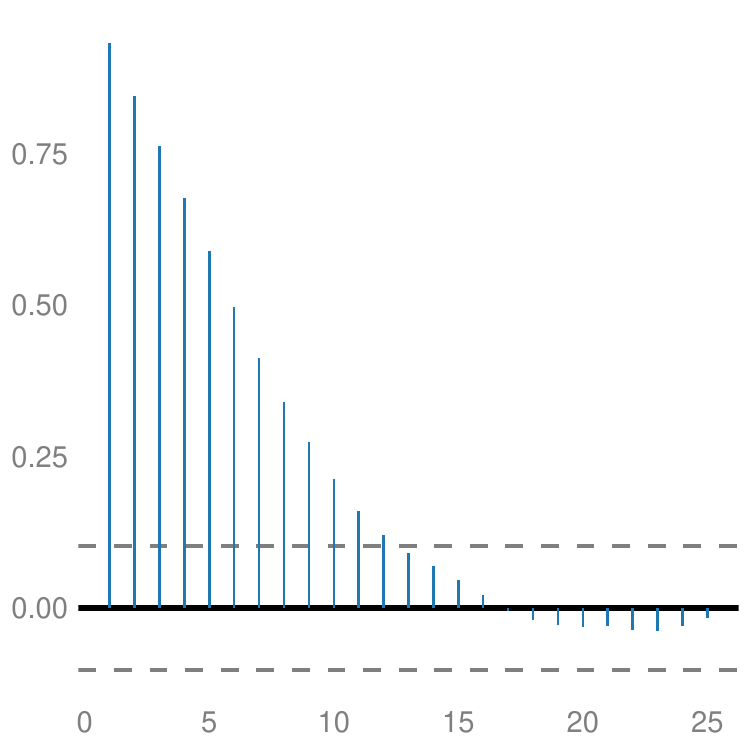}
        \caption{ACF for tenth eigenfunction.}
    \end{subfigure}
    \caption{Comparison of eigenfunctions for first (left) and tenth (right) component. ACFs are calculated (lower) for a daily discretization of these functions. Grey horizontal lines indicate the standard $95\%$ confidence interval around a correlation of $0$. \label{fig:stacked_plots}}

\end{figure}

Next, it is of interest to look at the variability and instability of empirical eigenfunctions. For that purpose, we take random subsamples of size $N=100$ from the temperature curves (with replacement) and calculate the empirical eigenfunctions. We display $50$ realizations of the  eigenfunctions for $k=1$ and $k=10$ in Figure \ref{fig:EMPEF} as thin blue lines. The bold blue line indicates the average over the eigenfunctions. We observe a high degree of variability in both cases. Yet, for $k=1$, the U-shape of the estimates is visible for individual curves and their average. For $k=10$ no such shape is visible, the distribution is more symmetrical around the line  $y=0$, and the mean is close to this (uninformative) line as well.

\begin{figure}[h]
  \centering
  \begin{subfigure}{0.49\textwidth}
    \centering
    \includegraphics[width=\linewidth]{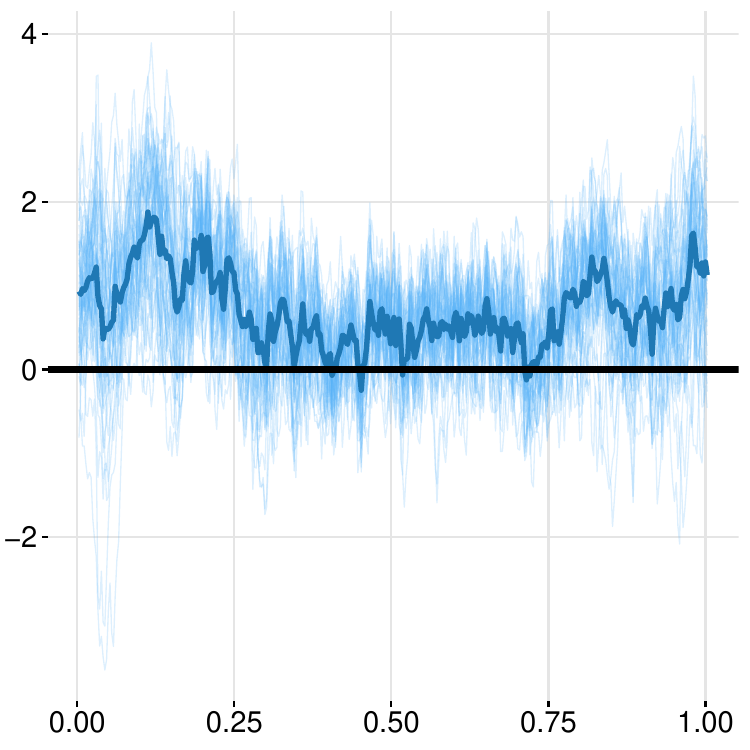}
  \end{subfigure}
  \hfill
  \begin{subfigure}{0.49\textwidth}
    \centering
    \includegraphics[width=\linewidth]{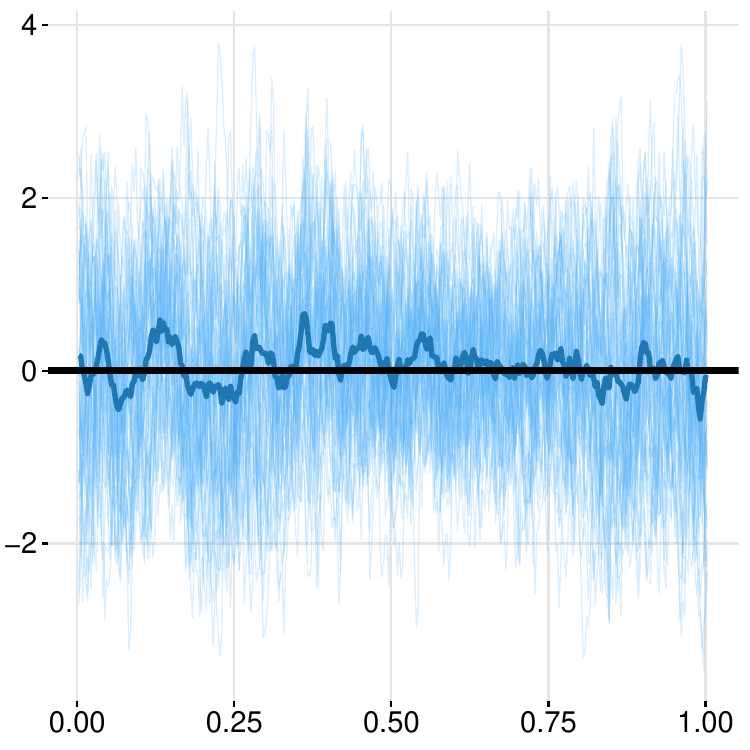}
  \end{subfigure}
  \caption{\label{fig:EMPEF} $50$ realizations of empirical eigenfunctions of order $k=1$ (left) and order $k=10$ (right). Individual estimates are depicted as light blue lines and the mean as a bold blue line.
  All estimates are based on a sample size of $N=100$.}
\end{figure}

\noindent \textbf{Parameter approximation} FPCA is often used to provide low-dimensional subspaces for parameter estimation. Given the highly irregular structure of empirical eigenfunctions, we might wonder how effective this strategy is. For both datasets, we consider the empirical mean and compare it to a projection on $k=5,10$ empirical eigenfunctions. To have a baseline, we compare to a projection on the first $k$ Fourier basis elements. Results are displayed in Figure \ref{fig:means} with the solid blue line the mean before projection, the dotted green line the mean after projecting on the Fourier basis and the red dashed line the mean after projecting on the empirical eigenfunctions. The projection on the Fourier basis is often so close to the original that it is visually hard to distinguish. In panel (b) we see that the Fourier projection is visibly smoother than the original mean. 
The projections on the empirical eigenfunctions provide extremely poor approximations, both in an $L^2$- and, more visibly, in a uniform sense. From an $L^2$-perspective, the very rough eigenfunctions have a large angle with any smooth function (such as the temperature mean) and are therefore bad at representing it by a basis expansion. 
This could be a serious problem, because often parameter functions in FDA are assumed to be smooth. However, even for the much rougher discharge mean, approximations are poor. This should remind us that there are so many ways for a function to be rough, that representing a rough function by a rough basis is often unsuccessful (here too, the smooth Fourier basis performs much better).
It will be interesting to study these effects in other settings such as functional linear regression, where parameter estimates are often made using the space of eigenfunctions; however, this study goes beyond the scope of this work. 

\begin{figure}[htbp]
    \centering
    % Upper row
    \begin{subfigure}[b]{0.49\textwidth}
        \centering
        \includegraphics[width=\linewidth]{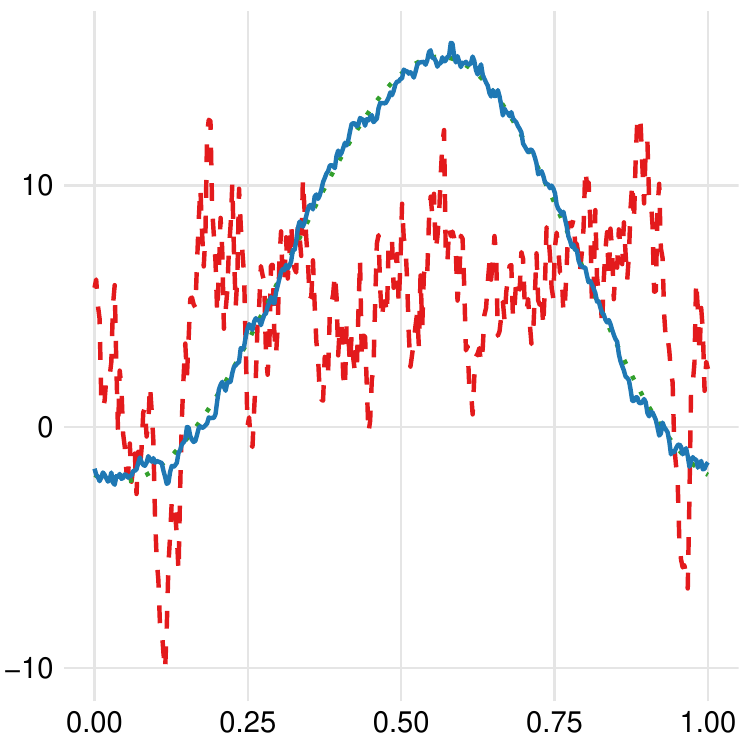}
        \caption{Temperature mean and projections for $k=5$ basis functions.}
    \end{subfigure}
    \hfill
    \begin{subfigure}[b]{0.49\textwidth}
        \centering
        \includegraphics[width=\linewidth]{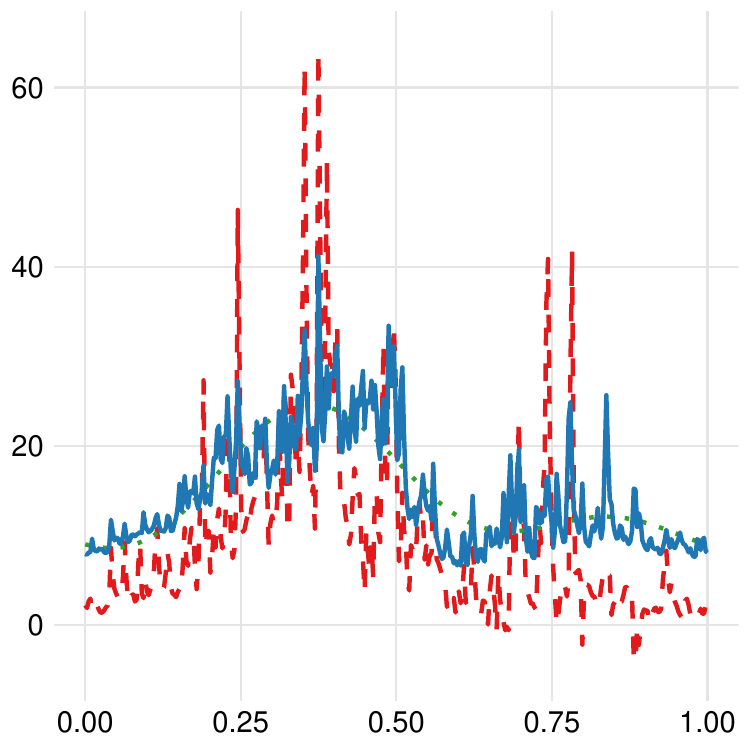}
        \caption{River discharge mean and projections for $k=5$ basis functions.}
    \end{subfigure}
    
    \vspace{0.5cm} % optional vertical spacing between rows
    
    % Lower row
    \begin{subfigure}[b]{0.48\textwidth}
        \centering
        \includegraphics[width=\linewidth]{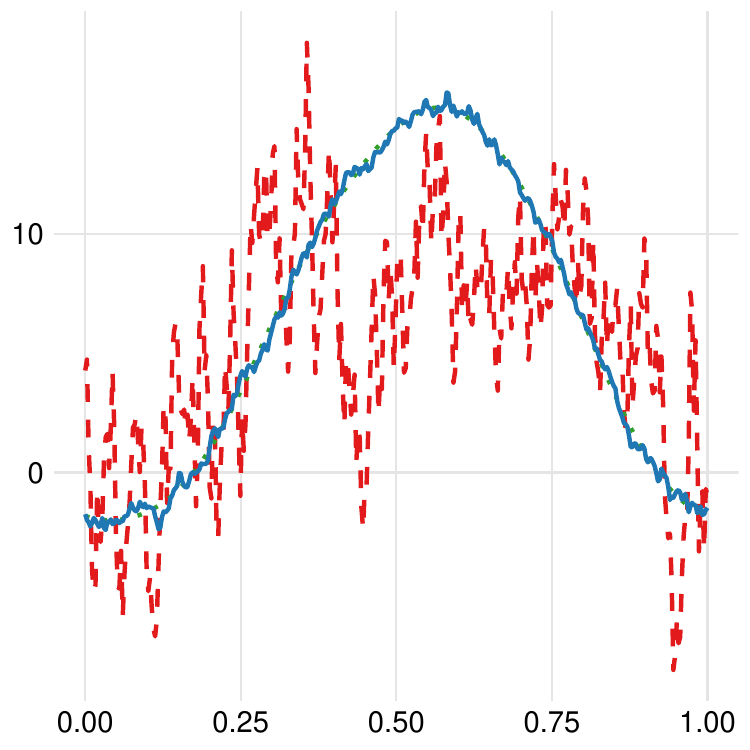}
        \caption{Temperature mean and projections for $k=10$ basis functions.}
    \end{subfigure}
    \hfill
    \begin{subfigure}[b]{0.48\textwidth}
        \centering
        \includegraphics[width=\linewidth]{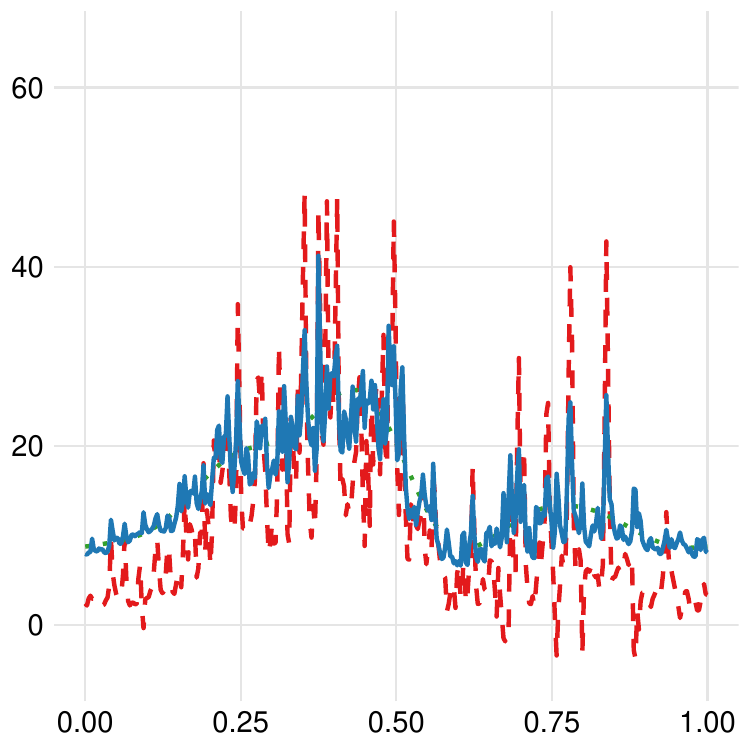}
        \caption{River discharge mean and projections for $k=10$ basis functions.}
    \end{subfigure}
    \caption{Mean function and its projections for temperature data (left) and river discharge data (right), using a basis expansions with $k=5$ (upper) and $k=10$ (lower) components. The unprojected mean is the blue solid line. The projection on the Fourier basis is the dotted green line and the projection on the empirical eigenfunctions is the red dashed line. \label{fig:means}}

\end{figure}

From the perspective of uniform convergence, it is well known that the $L^2$-projection of a function may be a very poor approximation if the basis is not in some sense "well-adapted". For smooth functions, projections on the Fourier basis can be shown to converge rapidly in a uniform sense -- however, this property cannot be expected from a basis of eigenfunctions which geometrically looks like sawteeth. In the Supplement, we have plotted projections for higher orders of $k$ and see that even for $k=50, 100$, the projections look poor. Moreover, we can see that there is only very slow progress when increasing $k$.

\noindent \textbf{Implications for statistical inference} Another important application of functional principal components in practice is hypothesis testing. We will here explore the impact of instabilities on one-sample tests for the functional mean parameter. The central question is whether FPCA instability inflates or deflates the empirical type-I-error. The lesser of two evils would be type-I-error deflation, so that significance levels  still hold, though possibly at a cost to statistical power. Fortunately, this is indeed what we are observing, and we can even provide some mathematical intuition why this is an expected outcome beyond our particular analysis.

We focus on the temperature data from Example 1 and we consider the empirical measure across all temperature curves and call it $\mathbb{P}_{temp}$. The mean function of this measure will be denoted by $\mu_{temp}$. In the following, we will draw i.i.d. samples of size $N$ from $\mathbb{P}_{temp}$, say $X_1,\cdots,X_N$ to test the (true) hypothesis
\[
H_0: \mu_0 = \mu_{temp}.
\]
The test will be based on FPCA with $k$
 components. More precisely, we will calculate in each run, for the sample $\{X_1,\cdots X_N\}$ the empirical eigenfunctions $\hat e_1,\cdots, \hat e_k  $ and eigenvalues $\hat \lambda_1,\cdots,\hat \lambda_k$ and therewith compute the statistic
 \[
 \widehat{S}_k:=\sum_{j=1}^k \frac{\Big\langle  \frac{1}{\sqrt{N}}\sum_{i=1}^N[X_i-\mu_0], \hat e_j\Big\rangle^2}{\hat \lambda_j}.
 \]
It is standard in FDA to use as a large-sample approximation for $\widehat{S}_k$ the $\chi^2$-distribution with $k$ degrees of freedom. Thus, we reject if $\widehat{S}_k$ surpasses the corresponding $95\%$ quantile. Empirical type-I-errors based on $1000$ runs are displayed in Table \ref{tab:empirical_levels}. 
\begin{table}[ht]
  \centering
  \caption{Empirical levels in percent for different sample sizes $N$ and dimensions $k$. The nominal level is $\alpha = 5\%$.}
  \label{tab:empirical_levels}
  \renewcommand{\arraystretch}{1.2}
\begin{tabular}{c ccccccc}
  \toprule
  \multirow{2}{*}{$N$} & \multicolumn{7}{c}{$k$} \\ 
  \cmidrule(lr){2-8}
   & 1 & 2 & 3 & 4 & 5 & 6 & 7 \\ 
  \midrule
   50  & 0.6 & 0.0 & 0.0 & 0.0 & 0.0 & 0.0 & 0.0  \\
  100  & 1.9 & 0.4 & 0.6 & 0.2 & 0.2 & 0.1 & 0.0 \\
  150  & 2.4 & 1.6 & 1.0 & 0.7 & 0.7 & 0.4 & 0.4 \\
  200  & 3.4 & 2.0 & 1.6 & 0.8 & 1.4 & 0.6 & 0.9 \\
  \bottomrule
\end{tabular}
\end{table}

Our results clearly show that the empirical nominal level is lower than the targeted $5\%$. The approximation gets better for larger $N$ and worse for larger $k$. Overall, this means that power for the corresponding tests will be reduced but significance levels remain valid. The reason for our outcomes is presumably that according to Theorem \ref{e:thm:main2} empirical eigenvalues overestimate the true eigenvalues. The effect is smaller for larger $N$, and applies more strongly to later eigenvalues (larger $k$). Accordingly, $ \widehat{S}_k$ is stochastically smaller than a $\chi^2$ with $k$ degrees of freedom, particularly if $k$ is larger. While the overestimation of eigenvalues unambiguously should make tests more conservative (and seems to dominate the testing outcomes), it is harder to theoretically assess the effect of unstable eigenfunctions on rejection rates, and more research is needed.

\noindent \textbf{Testing for supercritical components}  Finally, we use the diagnostic tools developed in Section \ref{sec:diag}, to test for the existence of supercritical components at a controlled level. We use the eigengap ratio statistic described there and fix $K_1=3$, which is the maximum number of components under consideration. The motivation here is that from the plots in Figures \ref{fig:1} and \ref{fig:2} it seems that the second and third eigenfunctions have highly unstable estimators and are thus probably (without any pre-smoothing of the data at least) subcritical. Recall that the sample sizes are for the temperature data  $N=235$ and for the hydrological data $N=86$. The level is held constant at $\alpha=0.05$ and we notice that our results are similar for $\alpha=0.1$. The limiting quantile is obtained using a parametric bootstrap with  $1000$ simulation runs. 
\begin{figure}[h]
  \centering
  \begin{subfigure}{0.49\textwidth}
    \centering
    \includegraphics[width=\linewidth]{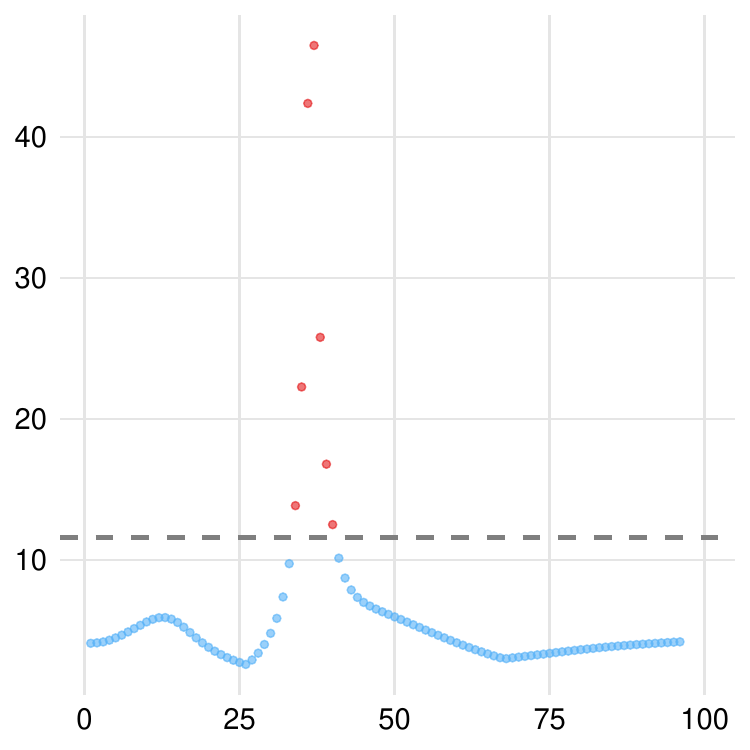}
  \end{subfigure}
  \hfill
  \begin{subfigure}{0.49\textwidth}
    \centering
    \includegraphics[width=\linewidth]{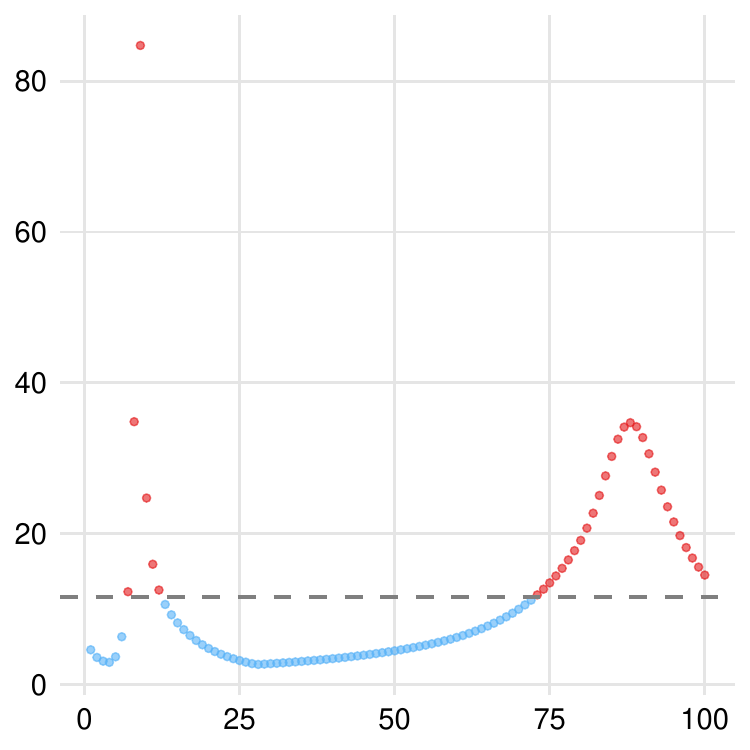}
  \end{subfigure}
  \caption{\label{fig:ondou}  Values of the eigengap ratio statistic $\widehat{\Lambda}^{(t)}(K_1)$ for $K_1=3$, depending on the degree of smoothing $t$, where $t=1$ corresponds to no smoothing. 
  Large values of $\widehat{\Lambda}^{(t)}(K_1)$ provide stronger evidence for the existence of supercritical components. The horizontal line marks the $95\%$ quantile of the limiting distribution and (red) values above that line are significant. Results for temperature data are left and for river discharge data right.}
\end{figure}
Applied to the raw data, we obtain for both datasets that no supercritical components are identified.  This finding corresponds to our observations of instability discussed in the Introduction (see Figures \ref{fig:1} and \ref{fig:2}). 
It is now of natural interest to ask whether pre-smoothing can change this. An expected outcome might be that more pre-smoothing results in data functions that are more closely concentrated around a low-dimensional subspace and thus in higher supercriticality. However, the real picture is more complicated.
For pre-smoothing, we use a sliding average over a total of $t$ days, where the raw data corresponds to $t=1$. We then study the eigengap statistic $\widehat{\Lambda}^{(t)}(K_1)$ (defined in \eqref{e:def:Lh}) depending on $t$. Reasonable amounts of smoothing might be somewhere between averaging a few days up to a few weeks. We have calculated averages of up to $100$ days, which would be excessive in practice, but may help us to get a fuller picture of the behavior of the eigengap statistic. The resulting values of the eigengap statistic relative to the smoothing parameter $t$ are displayed in Figure \ref{fig:ondou}, with the plot for temperature data on the left and for river discharge data on the right. The $95\%$ quantile is displayed as a horizontal line, and to make significantly supercritical results more distinguishable, we have colored them in red. 
The two most important observations are: First, that for $t=1$ (without smoothing) we cannot find any supercritical components in either dataset. Second, the size of the eigengap ratio statistic is highly nonlinear in $t$. This is best visible in the right panel of Figure \ref{fig:ondou}, where a small amount of smoothing $t=2,3,4$ actually decreases the size. Afterwards, the size rises sharply,  giving for $7 \le t \le 11$ significantly supercritical values before a long and extended drop. It seems like there is an ideal amount for pre-smoothing from an FPCA-perspective here ($t=8$) but this is a non-obvious finding. Third, the curves for both datasets look very different -- the ideal amount of smoothing for one dataset is not automatically the ideal amount for another dataset. Overall, we therefore gain the impression that, while smoothing may help to arrive at a more stable FPCA, it is not a universal remedy, and results of a statistical procedure may strongly depend on the level of smoothing.

\section*{Acknowledgments}
The work of Nina Dörnemann was partially supported by Aarhus University Research Foundation (AUFF), project numbers 47221 and 47388.  Tim Kutta’s work was partially
funded by AUFF grants 47331 and 47222. Piotr Kokoszka was partially supported by the United States National Science Foundation grant DMS-2412408. 
\putbib  % uses the default .bib file
\end{bibunit}

%\bibliographystyle{chicago}
%\small\renewcommand{\baselinestretch}{0.95}
%\bibliography{Literature}

\newpage
\appendix
\setcounter{page}{1}
\begin{bibunit}
\begin{center}
\LARGE \textbf{Supplement}
\end{center}

\section{Proofs and technical details} \label{sec_proofs}

The proofs of our results rely mainly on a projection technique that we will briefly explain. Recall that $e_1, e_2,\ldots$ denote the eigenfunctions of the true covariance. For $\kappa \in \mathbb{N}$, we define 
$\Pi_{\kappa} := \sum_{k=1}^{\kappa} e_k \otimes e_k$, which is the projection on the space spanned by $\{e_1,\ldots,e_{\kappa}\}$.
$\Pi_{\kappa}^c:=\Id-\Pi_\kappa$ is the projection on the orthogonal complement. We will frequently use the following decomposition of the empirical covariance operator
%Nina changed K to $\kappa$, as $K$ is already used
\begin{align} \label{e:sig:dec}
    \widehat{\cc} = A_1(\kappa)+A_2(\kappa)+A_3(\kappa)+A_4(\kappa),
\end{align}
where
\begin{align*}
    A_1(\kappa) = & \frac{1}{N}\sum_{i=1}^N \Pi_{\kappa}[X_i] \otimes \Pi_{\kappa}[X_i],\\
    A_2(\kappa) = &  \frac{1}{N}\sum_{i=1}^N \Pi_{\kappa}^c[X_i] \otimes \Pi_{\kappa}[X_i],\\
    A_3(\kappa) = &  \frac{1}{N}\sum_{i=1}^N \Pi_{\kappa}[X_i] \otimes \Pi_{\kappa}^c[X_i],\\
     A_4(\kappa) = &  \frac{1}{N}\sum_{i=1}^N \Pi_{\kappa}^c[X_i] \otimes \Pi_{\kappa}^c[X_i].
\end{align*}
Throughout the proofs, we will set for some $\gamma>0$ the dimension ${\kappa}=\lfloor \gamma N \rfloor$ and then consider a double asymptotic $\lim_{\gamma \to \infty} \lim_{N \to \infty}$. We will start by analyzing the behavior of $A_1({\kappa})$, which is a finite-dimensional covariance matrix of the projected data $\Pi_{\kappa}[X_i].$ 

\subsection{Analysis of $A_1$} 

To analyze $A_1(\lfloor \gamma N \rfloor)$, we can use results from RMT for covariance matrices of data with sample size $N$ and of dimension $\lfloor \gamma N \rfloor$. For background results on random matrices, we refer to Section \ref{sec_background_rmt} below. For ease of reference, we define the $k$th largest eigenvalue $\hat \lambda_k(\gamma)$ of $A_1(\lfloor \gamma N \rfloor)$. 
To analyze it, we notice that $\hat \lambda_k(\gamma) $ as a real-valued random variable has the same distribution as 
\[
\hat \lambda_k(\gamma) \overset{d}{=}\tilde \lambda_k(\gamma),
\]
where $\tilde \lambda_k(\gamma)$ is the $k$th largest eigenvalue of 
\[
\tilde A_1(\lfloor \gamma N \rfloor) =  \frac{1}{N}\sum_{i=1}^N Z_iZ_i^T.
\]
Here $Z_1,\ldots,Z_N \in \mathbb{R}^{\lfloor \gamma N \rfloor}$ are i.i.d. Gaussian random vectors with covariance matrix 
$$\tilde \cc:= \operatorname{diag}( \lambda_1, \lambda_2 , \ldots ,\lambda_{\lfloor N\gamma \rfloor}),$$
where the eigenvalues $\lambda_i$ are defined in \eqref{e:sigrep}.
To analyze $\tilde A_1(\lfloor \gamma N \rfloor) $, we need to introduce a key object to describe the structure of covariance. 
For $\gamma>0$, we define $\xi(\gamma)$ as the solution in $(0,1/b(0))$ to
    \begin{align} \label{e:xi:def}
    \int_0^\gamma\bigg(\frac{b(x)\xi(\gamma)}{1-b(x)\xi(\gamma)} \bigg)^2 =1.   
    \end{align}
Next, we state some fundamental results about $\xi(\gamma)$ and its limit  $\xi(\infty) \in (0,1/\bb(0))$, which was introduced in \eqref{e:def:xiinf}. Also recall the definition of the function $\psi$ in \eqref{eq_def_psi}.

\begin{lem} \label{lem:xi}
    Suppose that Assumptions \ref{ass:bulk:ev} and \ref{ass:spiked:ev} are satisfied. Then, the following results hold.
    \begin{enumerate}[label=(\alph*)]
    \item 
    \label{lem_xi_part_b}
    For any $\gamma>0$, the solution $\xi(\gamma)$ exists and is unique. As a function $\xi: (0,\infty) \to (0,\infty) $ is monotonically decaying and continuous. 
    \item 
    \label{lem_xi_part_c}
    It holds that 
    $\lim_{\gamma \to \infty}\xi(\gamma) =\xi(\infty)$.
    \item \label{lem:psi} It holds that
$
        \lim_{\gamma \to \infty}\psi \big(1/\xi(\gamma) \big)= \psi \big(1/\xi(\infty) \big).
        $
    \end{enumerate}
\end{lem}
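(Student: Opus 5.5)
The plan is to study the map
\[
F(\gamma,\xi):=\int_0^\gamma\Big(\frac{b(x)\xi}{1-b(x)\xi}\Big)^2dx ,\qquad \xi\in[0,1/b(0)),\ \gamma\in(0,\infty],
\]
and to read off all three statements from monotonicity and continuity of $F$.

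\textbf{Properties of $F$.} For fixed $x$ with $b(x)>0$, the integrand $\xi\mapsto (b(x)\xi/(1-b(x)\xi))^2$ is continuous and strictly increasing on $[0,1/b(0))$, since $b(x)\le b(0)$. It vanishes at $\xi=0$. For $\xi\le \xi_0<1/b(0)$ the integrand is bounded by $b(x)^2\xi_0^2/(1-b(0)\xi_0)^2$. This bound is integrable over $[0,\infty)$ because $b\le b(0)$ and $\int b<\infty$ give $\int b^2\le b(0)\int b$.

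Dominated convergence then shows that $F(\gamma,\cdot)$ is continuous and strictly increasing on $[0,1/b(0))$ for every $\gamma\in(0,\infty]$. Strictness uses $b>0$ on a neighbourhood of $0$, which holds because $b(0)>0$ and $b$ is Lipschitz at $0$. Moreover $F(\gamma,0)=0$.

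\textbf{Blow-up at the endpoint.} The key step is $F(\gamma,\xi)\to\infty$ as $\xi\uparrow 1/b(0)$, for every $\gamma>0$. Lipschitz continuity at $0$ gives $b(x)\ge b(0)-Lx$ for small $x$. Hence $1-b(x)\xi\le 1-b(0)\xi+Lx\xi$, while the numerator stays bounded below near $x=0$. So for $\delta=1-b(0)\xi$ the integral is at least of order
\[
\int_0^{\epsilon}\frac{dx}{(\delta+L'x)^2}\asymp \frac1\delta ,
\]
which diverges as $\delta\to0$. This is the only delicate point, and it is exactly where the Lipschitz assumption enters. If $b$ is constant near $0$ the bound only improves.

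\textbf{Part (a).} By the intermediate value theorem and strict monotonicity, $F(\gamma,\xi)=1$ has a unique solution $\xi(\gamma)\in(0,1/b(0))$ for every $\gamma\in(0,\infty]$. This also proves the existence claim for \eqref{e:def:xiinf}. Since $F$ is nondecreasing in $\gamma$, we get $F(\gamma',\xi(\gamma))\ge 1$ for $\gamma'>\gamma$, and therefore $\xi(\gamma')\le\xi(\gamma)$, so $\xi$ is monotonically decaying.

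For continuity of $\xi$, note that $F$ is jointly continuous on $(0,\infty)\times[0,1/b(0))$. Continuity in $\gamma$ comes from absolute continuity of the integral, locally uniformly in $\xi$ on compacts. A standard implicit-function or monotone argument then gives continuity: for $\epsilon>0$ we have $F(\gamma,\xi(\gamma)\pm\epsilon)\gtrless1$ strictly, and these strict inequalities persist for $\gamma'$ near $\gamma$.

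\textbf{Part (b).} By monotonicity the limit $\bar\xi:=\lim_{\gamma\to\infty}\xi(\gamma)$ exists, and $\bar\xi\ge\xi(\infty)$ because $F(\infty,\cdot)\ge F(\gamma,\cdot)$. Also $\bar\xi\le\xi(1)<1/b(0)$. On the compact range $[\xi(\infty),\xi(1)]$ the integrand is uniformly dominated, so
\[
F(\gamma,\xi(\gamma))-F(\infty,\xi(\gamma))\to0 .
\]
Hence $F(\infty,\xi(\gamma))\to1$, and continuity of $F(\infty,\cdot)$ gives $F(\infty,\bar\xi)=1$. Uniqueness then yields $\bar\xi=\xi(\infty)$.

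\textbf{Part (c).} Since $\xi(\gamma)\le\xi(1)$, the points $y_\gamma=1/\xi(\gamma)$ stay in a compact subset of $(b(0),\infty)$. There the integrand $b(x)/(y-b(x))$ is dominated by $b(x)/(y_{\min}-b(0))$, which is integrable. So $\psi$ is continuous on $(b(0),\infty)$ by dominated convergence, and (c) follows from (b).
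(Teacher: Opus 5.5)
Your proposal is correct and follows essentially the same route as the paper. Both arguments rest on the monotonicity and continuity of $\xi\mapsto\int_0^\gamma\bigl(b(x)\xi/(1-b(x)\xi)\bigr)^2dx$ in $\xi$ and in $\gamma$, combined with dominated convergence for parts (b) and (c). Your version is more self-contained in two respects. You prove existence and uniqueness directly, including the blow-up as $\xi\uparrow 1/b(0)$; this is exactly where the Lipschitz condition at $0$ is needed, and the paper instead delegates this step to El Karoui (2007). You also supply the argument for (b), which the paper omits.
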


The proof of this lemma mainly consists in the analysis of certain integral convergences. We therefore defer it to Section \ref{sec:ana} below. 

Next, we establish the weak limit of the empirical spectral measure $\mathfrak{s}_N(\gamma)$. More precisely, $\mathfrak{s}_N$ is the empirical spectral measure of the first $\lfloor N\gamma \rfloor$  eigenvalues defined as
\begin{align} \label{eq_def_s_n}
\mathfrak{s}_N(\gamma):= 
\frac{1}{\lfloor \gamma N \rfloor } \sum_{i=1}^{\lfloor N \gamma \rfloor}  \delta_{\lambda_i}
=
\frac{1}{\lfloor \gamma N \rfloor }\lb
\sum_{i=1}^K \delta_{s_i} + 
\sum_{i=K+1}^{\lfloor \gamma N \rfloor} \delta_{b([i-K]/N)} \rb.
\end{align}
\begin{lem}\label{lem:s_N}
    Suppose that Assumptions \ref{ass:bulk:ev} and \ref{ass:spiked:ev} are satisfied. Then, it holds for any fixed $\gamma>0$
    \begin{align*}
        \mathfrak{s}_N(\gamma) \stackrel{d}{\to} \mathfrak{s}(\gamma), \quad N\to\infty,
    \end{align*}
   where $\mathfrak{s}(\gamma)$ is the push-forward measure by $b$ of the uniform distribution on $[0,\gamma].$ In particular, $\mathfrak{s}(\gamma)$ is a non-degenerate probability measure with compact support. 
\end{lem}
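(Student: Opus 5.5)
We prove weak convergence by testing against bounded continuous functions, and the main step is a Riemann-sum argument. Fix $\gamma>0$ and a bounded continuous $f:[0,\infty)\to\R$, and write $m:=\lfloor \gamma N\rfloor$. From \eqref{eq_def_s_n} we have
\begin{align*}
\int f\, d\mathfrak{s}_N(\gamma) = \frac{1}{m}\sum_{i=1}^K f(s_i) + \frac{1}{m}\sum_{i=K+1}^{m} f\big(b([i-K]/N)\big).
\end{align*}
The spike contribution is at most $K\|f\|_\infty/m \to 0$, since $K$ is fixed. For the bulk part, reindex with $j=i-K$, which runs over $1,\ldots,m-K$. Because $m/N\to\gamma$, it equals
\begin{align*}
\frac{N}{m}\cdot \frac{1}{N}\sum_{j=1}^{m-K} f\big(b(j/N)\big).
\end{align*}
This is, up to a vanishing correction, a Riemann sum for $\gamma^{-1}\int_0^\gamma f(b(x))\,dx$. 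The latter is precisely $\int f\, d\mathfrak{s}(\gamma)$, where $\mathfrak{s}(\gamma)$ is the push-forward by $b$ of the uniform distribution on $[0,\gamma]$.

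The main obstacle is justifying Riemann-sum convergence, since $b$ is only assumed monotone, not continuous. A monotone function on $[0,\gamma]$ has at most countably many discontinuities, so $b$ is Riemann integrable there. Hence $f\circ b$ is bounded, and continuous wherever $b$ is, so it is continuous except on a countable, hence Lebesgue-null, set. By Lebesgue's criterion $f\circ b$ is Riemann integrable on $[0,\gamma]$. Right-endpoint Riemann sums over the mesh $1/N$ therefore converge to $\int_0^{\gamma} f(b(x))\,dx$. The truncation of the sum at $m-K$ instead of $\lfloor \gamma N\rfloor$ changes it by at most $(K+1)\|f\|_\infty/N$, which vanishes. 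Multiplying by $N/m\to 1/\gamma$ gives the claim, so $\int f\,d\mathfrak{s}_N(\gamma)\to\int f\,d\mathfrak{s}(\gamma)$ for all bounded continuous $f$, which is weak convergence.

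The remaining properties of $\mathfrak{s}(\gamma)$ are as follows. It is a probability measure as the push-forward of one. Its support lies in $[b(\gamma),b(0)]\subset[0,b(0)]$ because $b$ is non-increasing and nonnegative, so the support is compact. For non-degeneracy, note that $b(0)>0$ and $b$ is Lipschitz at $0$, so $b(x)\ge b(0)/2$ on some interval $[0,\delta]$. Together with integrability this rules out a point mass at $0$ of full mass. If a stronger non-degeneracy (not a Dirac measure) is intended, we would argue as follows. Constancy of $b$ on $[0,\gamma]$ is the only way $\mathfrak{s}(\gamma)$ can be a Dirac mass; we would exclude it by the assumptions, or else interpret non-degenerate as not identically $\delta_0$, which is what later RMT results require.
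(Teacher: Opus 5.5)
Your proof is correct and is the paper's argument: test against a bounded continuous function, discard the $K$ spike atoms as $O(1/N)$, and recognize the bulk part as a Riemann sum of $f\circ b$. That sum converges because monotonicity of $b$ makes $f\circ b$ continuous off a countable set; you make the Lebesgue-criterion and endpoint bookkeeping explicit where the paper is terse. In your last paragraph, drop the option of excluding a Dirac limit ``by the assumptions'': they cannot exclude it, since e.g.\ $b(x)=\min\{1,\max\{0,6-x\}\}$ satisfies \ref{ass:bulk:ev} and gives $\mathfrak{s}(\gamma)=\delta_1$ for all $\gamma\le 5$. So non-degenerate must mean $\mathfrak{s}(\gamma)\neq\delta_0$, which your argument via $b(0)>0$ and Lipschitz continuity at $0$ establishes, and which is what the paper's bare appeal to \ref{ass:bulk:ev} implicitly uses.
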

Once more, we have deferred  the proof to Section \ref{sec:ana}. We will now study the weak convergence of the leading eigenvalues of $A_1(\lfloor N \gamma\rfloor)$, which as we recall are denoted by $\hat \lambda_k(\gamma)$. Again, recall the definition of $\psi$ in \eqref{eq_def_psi}.

\begin{lem} \label{lem:A1}
  Suppose that Assumptions \ref{ass:data}-\ref{ass:spiked:ev} are satisfied.
\begin{enumerate}[label=(\alph*)]
\item \label{lem_part_a} 
Let $k>M$ be a fixed integer.
Then, it holds that 
 \[
    \lim_{\gamma \to \infty}\lim_{N \to \infty} \mathbb{P}(|\hat \lambda_{k}(\gamma) - \psi\big( 1/\xi(\infty)\big)|>\epsilon)=0.
    \]
    % \item \label{lem_part_b} Suppose that  $ s_{k}\xi(\infty)<1$ holds for some $1 \leq k \leq K.$
    % Then, it holds that
    % \[
    % \lim_{\gamma \to \infty}\lim_{N \to \infty} \mathbb{P}(|\hat \lambda_k(\gamma)  - \psi\big( 1/\xi(\infty)\big)|>\epsilon)=0.
    % \]
\item \label{lem_part_c}
    Let $1 \leq k \leq M.$
    Then, it holds that
        \[
    \lim_{\gamma \to \infty}\lim_{N \to \infty} \mathbb{P}\Big(\Big|\hat \lambda_k(\gamma)  - \psi\big(s_k\big)\Big|>\epsilon\Big)=0.
    \]
    \end{enumerate}
\end{lem}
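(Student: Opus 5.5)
The plan is to reduce Lemma \ref{lem:A1} to the classical spiked-covariance (BBP) results for the finite-dimensional matrix $\tilde A_1(\lfloor \gamma N\rfloor)$, at fixed $\gamma$, and only afterwards let $\gamma\to\infty$ using Lemma \ref{lem:xi}. Fix $\gamma>0$ and set $p=\lfloor \gamma N\rfloor$, so $p/N\to\gamma$. By Lemma \ref{lem:s_N}, the population spectral measure $\mathfrak{s}_N(\gamma)$ converges weakly to $\mathfrak{s}(\gamma)$, the push-forward of $\mathrm{Unif}[0,\gamma]$ under $b$. Its support is contained in $[b(\gamma),b(0)]$. The Lipschitz continuity of $b$ at $0$ makes $b(0)$ the right edge of the support. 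The $K$ spikes $s_1,\dots,s_K>b(0)$ are outliers that do not affect the limiting measure. The non-spiked population eigenvalues are $b((i-K)/N)\le b(0)$, so there are no further outliers. Hence the model is a generalized spiked covariance model in the sense of Bai--Yao / Baik--Silverstein, reviewed in Section \ref{sec_background_rmt}. We may then invoke the results stated there:
\begin{itemize}
\item A spike $s_k$ is supercritical at level $\gamma$ iff $\psi_\gamma'(s_k)>0$, where
\[
\psi_\gamma(y)=y\Big(1+\gamma\int \frac{t}{y-t}\,d\mathfrak{s}(\gamma)(t)\Big)=y\Big(1+\int_0^\gamma \frac{b(x)}{y-b(x)}\,dx\Big).
\]
In that case $\hat\lambda_k(\gamma)\to\psi_\gamma(s_k)$ almost surely.
\item Otherwise $\hat\lambda_k(\gamma)$ sticks to the right edge of the bulk. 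That edge equals $\psi_\gamma(y_\gamma)$, where $y_\gamma>b(0)$ is the point with $\psi_\gamma'(y_\gamma)=0$.
\end{itemize}

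The next step identifies these quantities with the paper's $\xi(\gamma)$. Differentiating gives
\[
\psi_\gamma'(y)=1-\int_0^\gamma \Big(\frac{b(x)}{y-b(x)}\Big)^2dx .
\]
Substituting $y=1/\xi$ shows that $\psi_\gamma'(y)=0$ is exactly equation \eqref{e:xi:def}, so $y_\gamma=1/\xi(\gamma)$. Moreover, $\psi_\gamma'$ is increasing on $(b(0),\infty)$. Therefore $s_k$ is supercritical at level $\gamma$ iff $s_k\xi(\gamma)>1$, and the bulk edge is $\psi_\gamma(1/\xi(\gamma))$. One point needs care here: the edge from Section \ref{sec_background_rmt} should be the supremum of the support of the limiting sample distribution. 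That requires $y_\gamma>b(0)$, i.e.\ that the integral diverges as $y\downarrow b(0)$. This is where Lipschitz continuity of $b$ at $0$ enters, since it gives $\int_0^\gamma (b(x)/(y-b(x)))^2dx\ge c\int_0^\epsilon (y-b(0)+Lx)^{-2}dx\to\infty$.

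The argument is then finished in two cases, using Lemma \ref{lem:xi}, which says $\xi(\gamma)$ is decreasing with $\xi(\gamma)\downarrow\xi(\infty)$.
\begin{itemize}
\item For $k\le M$: $s_k\xi(\gamma)\ge s_k\xi(\infty)>1$ for every $\gamma$, so $\hat\lambda_k(\gamma)\to\psi_\gamma(s_k)$ in probability. Since $s_k>b(0)$ is fixed, dominated convergence with $\int b<\infty$ gives $\psi_\gamma(s_k)\to\psi(s_k)$, which proves part (b).
\item For $k>M$: by \ref{ass:spiked:ev}, $s_{M+1}\xi(\infty)<1$, so by continuity $s_k\xi(\gamma)<1$ for all large $\gamma$. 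The first $M$ eigenvalues separate from the bulk, and every fixed $k>M$, whether a subcritical spike or a bulk index, has $\hat\lambda_k(\gamma)\to\psi_\gamma(1/\xi(\gamma))$. For indices beyond $K$ this uses that finitely many top sample eigenvalues converge to the edge, which follows from weak convergence of the ESD together with the no-outlier result.
\end{itemize}
To conclude part (a) it remains to show $\psi_\gamma(1/\xi(\gamma))\to\psi(1/\xi(\infty))$. This follows from Lemma \ref{lem:xi}\ref{lem:psi} combined with uniform convergence $\psi_\gamma\to\psi$ on compacts of $(b(0),\infty)$. The relevant points $1/\xi(\gamma)$ stay in the compact set $[1/\xi(1),1/\xi(\infty)]$ once $\gamma\ge1$, which lies strictly above $b(0)$.

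I expect the main obstacle to be matching the paper's setting to the hypotheses of the cited spiked-model theorems. These results typically require the population spectrum to converge with no extra eigenvalues outside the support except the designated spikes, and the spikes to lie outside the support. Both hold here because $b$ is non-increasing. The triangular-array dependence of $\lambda_i$ on $N$ needs to be checked against those assumptions. In addition, the edge-sticking statement must be available for every fixed index $k>M$, not only for spikes, and may need an eigenvalue interlacing argument.
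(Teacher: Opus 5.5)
Your proposal is correct and follows the paper's proof. For fixed $\gamma$ you apply the spiked-model results (Lemma \ref{lem_rmt_eigenvalues}) to $\tilde A_1(\lfloor\gamma N\rfloor)$ with $H=\mathfrak{s}(\gamma)$ and $y=\gamma$, identify the critical threshold with $1/\xi(\gamma)$ and the limits with the truncated $\psi_\gamma$, and then let $\gamma\to\infty$ via Lemma \ref{lem:xi} and dominated convergence. Your explicit bridge from $\psi_\gamma(1/\xi(\gamma))$ to $\psi(1/\xi(\infty))$, via uniform convergence on $[1/\xi(1),1/\xi(\infty)]$, usefully tightens a step the paper glosses over by writing $\psi(1/\xi(\gamma))$ for the truncated quantity.
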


\begin{proof}

\textbf{Proof of \ref{lem_part_a}.} 
We first verify that 
\begin{align} \label{eq_lambda_k_sub}
    \sup_{N\in\N} \lambda_k \xi(\infty) <1,
\end{align}
where we note that $\lambda_k$ may depend on $N$ if $k>K$; see \eqref{e:sigrep}.
 For this purpose, we distinguish the cases $M+1 \leq k \leq K$ and $k>K.$ In the case $M+1 \leq k \leq K$, we have $\lambda_k = s_k$ and \eqref{eq_lambda_k_sub} follows from assumption \ref{ass:spiked:ev}. If $k>K$ is fixed, then $\lambda_k = b((k-K)/N) \leq b(0) < 1/\xi(\infty)$, which shows \eqref{eq_lambda_k_sub}.  

 Due to $\xi(\infty) = \lim_{\gamma \to \infty}\xi(\gamma)$ (Lemma \ref{lem:xi} \ref{lem_xi_part_c}) and \eqref{eq_lambda_k_sub}, there exists a fixed $\tau\in(0,1)$ such that $\lambda_k(A_1(\lfloor N \gamma\rfloor) \xi(\gamma) < 1 - \tau$ for all large $N, \gamma.$
This implies by Lemma \ref{lem_rmt_eigenvalues} and Lemma \ref{lem:s_N} for any fixed, large enough $\gamma$
 \begin{align*}
   & \lambda_k( A_1(\lfloor \gamma N \rfloor)) \stackrel{d}{=}  \lambda_k( \tilde{A}_1(\lfloor \gamma N \rfloor)) \\
   \conp &
    \frac{1}{\xi(\gamma)}\bigg(1+\gamma \int \frac{x}{1/\xi(\gamma)-x}d\mathfrak{s}(\gamma)(x) \bigg) 
    = \psi\big( 1/\xi(\gamma)\big), \quad N \to\infty.
 \end{align*}
     The right side is deterministic, and since $\psi\big( 1/\xi(\gamma)\big) \to \psi\big( 1/\xi(\infty)\big)$ as $\gamma \to \infty$ (Lemma \ref{lem:xi} \ref{lem:psi}), the claim follows.
\\
    \textbf{Proof of \ref{lem_part_c}.} Since $\xi$ is monotonically decreasing (Lemma \ref{lem:xi} \ref{lem_xi_part_b}), it follows from assumption \ref{ass:spiked:ev} that for any $\gamma>0$ we have   $
    s_k \xi(\gamma)>1.$ This means that  $\lambda_k (\tilde{A}_1(\lfloor N \gamma\rfloor))$ is in the supercritical regime for $\tilde{A}_1(\lfloor N \gamma\rfloor)$. 
    According to Lemma \ref{lem_rmt_eigenvalues} and Lemma \ref{lem:s_N}, it holds for $N \to \infty$ (and $\gamma$ fixed) that 
    \[
    \lambda_k( A_1(\lfloor \gamma N \rfloor)) \stackrel{d}{=}
    \lambda_k(\tilde A_1(\lfloor \gamma N \rfloor))\overset{P}{\to} s_k\bigg(1+\gamma \int \frac{x}{s_k-x}d\mathfrak{s}(\gamma)(x) \bigg) =RS.
    \]
    Plugging in the definition of $\mathfrak{s}(\gamma)$  on right side, $RS$ satisfies as $\gamma \to \infty$
    \[
     RS = s_k\bigg(1+ \int_0^\gamma \frac{b(x)}{s_k-b(x)}dx \bigg)\to  s_k\bigg(1+ \int_0^\infty \frac{b(x)}{s_k-b(x)}dx \bigg) =\psi\big( s_k \big),
    \]
where the last equality follows from the definition of $\psi$ in \eqref{eq_def_psi}.

\end{proof}

\subsection{Analysis of $A_2,A_3,A_4$}

Up to this point, we have focused on the leading term $A_1(\lfloor N\gamma \rfloor)$ in the decomposition \eqref{e:sig:dec}. We now turn to the remainders $A_i(\lfloor N\gamma \rfloor)$ for $i=2,3,4$. These remainders are for any fixed $\gamma>0$ and as $N \to \infty$ non-negligible. However, as we demonstrate in the next lemma, for larger $\gamma$, we can show that these terms become stochastically smaller and in the double limit of first $N \to \infty$ and then $\gamma \to \infty$ they do vanish.

\begin{lem}\label{lem:A234}
Suppose that Assumptions \ref{ass:data}-\ref{ass:spiked:ev} are satisfied.
    For any of the terms $A_{i}, \, i=2,3,4$ and any $\epsilon>0$, it holds that
    \begin{align} \label{e:A234}
    \lim_{\gamma \to \infty}\limsup_{N \to \infty} \mathbb{P}(\|A_i(\lfloor N \gamma\rfloor) \|_\mathcal{L}>\epsilon)=0.  
    \end{align}
\end{lem}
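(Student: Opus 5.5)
We propose to bound all three terms through the tail operator $A_4$. Write $Y_i:=\Pi_\kappa^c[X_i]$ and $W_i:=\Pi_\kappa[X_i]$ with $\kappa=\lfloor \gamma N\rfloor$. For $N$ large we have $\kappa>K$, so under \ref{ass:data} the vectors $Y_i$ and $W_i$ are independent centered Gaussians. The covariance of $Y_i$ is $C^c_\kappa:=\sum_{n>\kappa}\lambda_n e_n\otimes e_n$. Its eigenvalues satisfy $\|C^c_\kappa\|_\mathcal{L}=\bb((\kappa+1-K)/N)\le \bb(\gamma-K/N)$ and
\[
\operatorname{Tr}[C^c_\kappa]=\sum_{n>\kappa}\bb\big(\tfrac{n-K}{N}\big)\le N\int_{\gamma-1}^\infty \bb(x)\,dx
\]
for large $N$, by monotonicity of $\bb$. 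Hence the effective rank obeys $\mathfrak r[C^c_\kappa]\lesssim N$ as long as $\bb(\gamma)>0$. If $\bb$ vanishes beyond some point, $C^c_\kappa=0$ for large $\gamma$ and the claim is trivial.

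\textbf{Term $A_4$.} Write $A_4=(A_4-C^c_\kappa)+C^c_\kappa$. The deterministic part has norm at most $\bb(\gamma-1)$. This tends to $0$ as $\gamma\to\infty$, because $\bb$ is monotone and integrable. For the fluctuation we use the generic chaining bound for Gaussian sample covariances (Koltchinskii--Lounici, or in the refined form of \cite{chen:sanzalonso:2025}):
\[
\mathbb E\|A_4-C^c_\kappa\|_\mathcal{L}\lesssim \|C^c_\kappa\|_\mathcal{L}\Big(\sqrt{\mathfrak r[C^c_\kappa]/N}+\mathfrak r[C^c_\kappa]/N\Big)\lesssim \sqrt{\|C^c_\kappa\|_\mathcal{L}\operatorname{Tr}[C^c_\kappa]/N}+\operatorname{Tr}[C^c_\kappa]/N .
\]
This is at most a constant times $\sqrt{\bb(\gamma-1)\int_0^\infty \bb}+\int_{\gamma-1}^\infty \bb$. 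The bound is uniform in $N$ and vanishes as $\gamma\to\infty$, and Markov's inequality then gives \eqref{e:A234} for $i=4$. The key point, and the main obstacle, is that a trace bound alone is insufficient here. The tail trace is $\asymp N$, so the crude estimate $\mathbb E\|A_4\|_\mathcal{L}\le \operatorname{Tr}[C^c_\kappa]$ diverges. We need the sharp effective-rank form of the bound, in which $\operatorname{Tr}/N$ appears additively rather than multiplied by $\|C\|_\mathcal{L}$, together with the observation that $\operatorname{Tr}[C^c_\kappa]/N\to\int_\gamma^\infty \bb$.

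\textbf{Terms $A_2,A_3$.} Since $A_3=A_2^*$, it suffices to treat $A_2$. By Cauchy--Schwarz for operator norms of sums of rank-one tensors (write $A_2=\frac1N \mathbf Y^*\mathbf W$ with data matrices $\mathbf Y$ and $\mathbf W$),
\[
\|A_2\|_\mathcal{L}\le \|A_4\|_\mathcal{L}^{1/2}\,\|A_1\|_\mathcal{L}^{1/2}.
\]
The factor $\|A_1\|_\mathcal{L}^{1/2}$ is tight, uniformly in $N$ and $\gamma$. Indeed $\|A_1\|_\mathcal{L}\le\|\widehat C\|_\mathcal{L}$, and the same chaining bound applied to the full covariance $C$ gives
\[
\mathbb E\|\widehat C\|_\mathcal{L}\lesssim s_1+\sqrt{s_1 \operatorname{Tr}[C]/N}+\operatorname{Tr}[C]/N=O(1)
\]
by \eqref{e:TC}. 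Alternatively, Lemma \ref{lem:A1} shows that $\hat\lambda_1(\gamma)$ converges to a finite limit. Combining this tightness with the $A_4$ result yields \eqref{e:A234} for $i=2,3$: split on the event $\{\|A_1\|_\mathcal{L}\le L\}$, let $\gamma\to\infty$, and then let $L\to\infty$.
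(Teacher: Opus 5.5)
Your proof is correct. For $A_4$ it is essentially the paper's argument. The fluctuation $A_4-\mathbb{E}A_4$ is controlled by the Koltchinskii--Lounici effective-rank bound, and $\|\mathbb{E}A_4\|_\mathcal{L}$ is the largest tail eigenvalue. Your bookkeeping with $\bb(\gamma-1)$ and $\int_{\gamma-1}^\infty \bb$ is slightly more careful than the paper's passage to $\bb(\gamma)$, which tacitly uses continuity of $\bb$ at $\gamma$.

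For $A_2,A_3$ you take a genuinely different route. The paper applies a separate cross-covariance concentration inequality (Theorem 2.1 of Chen and Sanz-Alonso) to $A_2-\mathbb{E}A_2$. That bound involves the effective ranks of both $C_\Pi$ and $C_\Pi^c$, and the paper then uses $\mathbb{E}A_2=0$. You instead use the deterministic inequality $\|A_2\|_\mathcal{L}\le\|A_4\|_\mathcal{L}^{1/2}\|A_1\|_\mathcal{L}^{1/2}$, together with $\|A_1\|_\mathcal{L}\le\|\widehat{C}\|_\mathcal{L}$ and $\mathbb{E}\|\widehat{C}\|_\mathcal{L}=O(1)$. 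This way a single sample-covariance bound handles all three terms, with tightness uniform in $\gamma$ for free, and the cross term needs no centering or independence argument.

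What the paper's route buys is a sharper rate. Its $G(\gamma)$ is of order $\bb(\gamma)^{1/2}+(\int_\gamma^\infty \bb)^{1/2}$, whereas yours gives roughly $\bb(\gamma-1)^{1/4}+(\int_{\gamma-1}^\infty \bb)^{1/2}$. This difference is irrelevant for the qualitative double limit in the lemma.

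Your splitting over $\{\|A_1\|_\mathcal{L}\le L\}$ can also be skipped. Cauchy--Schwarz in expectation gives $\mathbb{E}\|A_2\|_\mathcal{L}\le(\mathbb{E}\|A_4\|_\mathcal{L})^{1/2}(\mathbb{E}\|\widehat{C}\|_\mathcal{L})^{1/2}$, and Markov's inequality then finishes directly.
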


\begin{proof}
The proofs for $A_2, A_3$ are identical and we focus on  $A_2$.
For this purpose, define
\[
\cc_\Pi := \mathbb{E}\Pi_{\lfloor N \gamma\rfloor}[X_i]\otimes \Pi_{\lfloor N \gamma\rfloor}[X_i], \qquad \cc_\Pi^c := \mathbb{E}\Pi^c_{\lfloor N \gamma\rfloor}[X_i]\otimes \Pi^c_{\lfloor N \gamma\rfloor}[X_i].
\]
Moreover, we need the the notion of effective rank that we have introduced previously but recall for convenience: For an operator $A$, it is defined as
    \[
    \mathfrak{r}[A] := \frac{\operatorname{Tr}[A]}{\|A\|_\mathcal{L}}.
    \]
    Finally, we define the constant  
    \begin{align*}
        \mathcal{E}_N(\cc_\Pi, \cc_\Pi^c):= & \bigg( \frac{\mathfrak{r}[\cc_\Pi]+\mathfrak{r}[\cc_\Pi^c]}{N}\bigg)^{1/2} + \frac{1}{N}\bigg(\{\mathfrak{r}[\cc_\Pi]+\log(N)\}\{\mathfrak{r}[\cc_\Pi^c]+\log(N)\}\bigg)^{1/2}.
    \end{align*}
    
    Theorem 2.1 in \cite{chen:sanzalonso:2025}  states that for some absolute constant $c>0$ 
    \begin{align} \label{e:th21}
         \mathbb{E}\|A_2(\lfloor N \gamma\rfloor) -\mathbb{E}A_2(\lfloor N \gamma\rfloor)\|_\mathcal{L} 
        \le  c \|\cc_\Pi\|_\mathcal{L}^{1/2 }\|\cc_\Pi^c\|_\mathcal{L}^{1/2 } \mathcal{E}_N(\cc_\Pi, \cc_\Pi^c).
    \end{align}
    To use this formula, we have to calculate all entities in our particular case. For the operator norms, this gives (for sufficiently large $N$ such that $\lfloor N\gamma \rfloor > K$)
    \begin{align*}
        \|\cc_\Pi\|_\mathcal{L} = s_1, \qquad    \|\cc_\Pi^c\|_\mathcal{L} =  b \lb \frac{\lfloor N\gamma \rfloor - K +1}{N}\rb  \stackrel{N \to \infty}{\longrightarrow} b(\gamma).
    \end{align*}
    Next, we come to the effective ranks, where we need to calculate the traces of $\cc_\Pi, \cc_\Pi^c$. Using integral approximations, we get
    \begin{align*}
        \frac{\operatorname{Tr}[\cc_\Pi]}{N} 
        = \frac{1}{N}\sum_{i=1}^{K} s_i
        + \frac{1}{N}\sum_{i=K+1}^{\lfloor N \gamma \rfloor}  b((i-K)/N) 
        \stackrel{N \to \infty}{\longrightarrow} \int_0^\gamma b(x) dx, \quad  
        \frac{\operatorname{Tr}[\cc_\Pi^c]}{N} \stackrel{N \to \infty}{\longrightarrow}  \int_{\gamma}^\infty b(x) dx.
    \end{align*}
    Combining these limits with the inequality \eqref{e:th21} gives (note that the logarithmic terms are vanishing as $N \to \infty$)
    \begin{align} \label{z1}
      \limsup_{N\to\infty}   \mathbb{E}\|A_2(\lfloor N \gamma\rfloor) -\mathbb{E}A_2(\lfloor N \gamma\rfloor)\|_\mathcal{L} 
      \leq c G(\gamma),
    \end{align}
    where
    \begin{align*}
    G(\gamma)& := \lb s_1 b(\gamma) \rb^{1/2} \bigg[
    \Big( \frac{1}{s_1} \int_0^\gamma b(x) dx + \frac{1}{b(\gamma)} \int_\gamma^\infty b(x) dx \Big)^{1/2} 
    \\ & \quad \quad \quad \quad \quad \quad \quad \quad 
    +  \bigg(\frac{1}{s_1} \int_0^\gamma b(x) dx \bigg)^{1/2}
    \bigg(\frac{1}{b(\gamma)} \int_\gamma^\infty b(x) dx \bigg)^{1/2}\bigg] \\ 
    & = \lb b(\gamma) \int_0^\gamma b(x) dx + s_1 \int_\gamma^\infty b(x) dx \rb ^{1/2} + \lb \int_0^\gamma b(x) dx \int_\gamma^\infty b(x) dx \rb^{1/2}.
     \end{align*}
    Since $b(\gamma)\to 0$ and $\int_\gamma^\infty b(x) dx \to 0$ as $\gamma \to \infty$, it also follows that $G(\gamma) \to 0$ as $\gamma \to \infty$.
    Now we note that 
    \[
    \mathbb{E}A_2(\lfloor N \gamma\rfloor) = \mathbb{E}\Pi_{\lfloor N \gamma\rfloor}^c[X_i] \otimes \Pi_{\lfloor N \gamma\rfloor}[X_i] = 0, 
    \]
     since $\Pi_{\lfloor N \gamma\rfloor}[X_i],\Pi_{\lfloor N \gamma\rfloor}^c[X_i]$ are independent.
    Using Markov's inequality and \eqref{z1}, we get for any $\epsilon>0$
    \[
    \limsup_{N\to\infty} \mathbb{P}(\|A_2(\lfloor N \gamma\rfloor) \|_\mathcal{L}>\epsilon) \le \frac{c G(\gamma)}{\epsilon} \stackrel{\gamma \to \infty}{\longrightarrow} 0.
    \]
   This finishes the proof for $A_2$ and thus also $A_3$.
    \\
    Finally, we turn to $ A_4$, which is itself an empirical covariance operator (of the projected data $\Pi_{\lfloor N \gamma \rfloor}^c[X_i]$). The proof can again be conducted using the above result from \cite{chen:sanzalonso:2025}, but it is even simpler to apply Theorem 4 by \cite{koltchinskii:lounici:2017a}, which states that
    \begin{align*}
        \mathbb{E}\|A_4(\lfloor N \gamma\rfloor) -\mathbb{E}A_4(\lfloor N \gamma\rfloor)\|_\mathcal{L} 
        \le  c' \|\cc_\Pi^c\|_\mathcal{L} \max\bigg\{\lb \frac{\mathfrak{r}[\cc_\Pi^c]}{N}\rb^{1/2},\frac{\mathfrak{r}[\cc_\Pi^c]}{N}\bigg\}.
    \end{align*}
    Here, $c'$ is independent of the distribution of the $X_i$.
    Using the above established fact that 
    \[
    \frac{\mathfrak{r}[\cc_\Pi^c]}{N} \stackrel{N \to \infty}{\longrightarrow} \frac{1}{b(\gamma)}\int_{\gamma}^\infty b(x) dx, 
    \]
    we get 
    \begin{align}
    & \limsup_{N\to\infty} \mathbb{E}\|A_4(\lfloor N \gamma\rfloor) -\mathbb{E}A_4(\lfloor N \gamma\rfloor)\|_\mathcal{L} \\
    \le & c' b(\gamma) \max\bigg\{\Big( 
    \frac{1}{b(\gamma)}\int_{\gamma}^\infty b(x) dx\Big)^{1/2},
    \frac{1}{b(\gamma)}
    \int_{\gamma}^\infty b(x) dx \bigg\} \nonumber \\
    & = c'  \max\bigg\{\Big( 
   b(\gamma)\int_{\gamma}^\infty b(x) dx\Big)^{1/2},
    \int_{\gamma}^\infty b(x) dx \bigg\}
    \stackrel{\gamma \to \infty}{\longrightarrow} 0. \label{z2}
    \end{align}
   Finally, we need to analyze $\mathbb{E}A_4(\lfloor N \gamma\rfloor)$ and show that it too converges to $0$ as $\gamma \to \infty$. For this purpose, simply notice that 
    \begin{align} \label{z3}
    \|\mathbb{E}A_4(\lfloor N \gamma\rfloor)\|_\mathcal{L} = 
    b \lb \frac{\lfloor N\gamma \rfloor - K +1}{N}\rb  \stackrel{N \to \infty}{\longrightarrow}  b(\gamma),
    \end{align}
    which converges to $0$ as $\gamma \to \infty.$
     Using Markov's inequality with \eqref{z2} and \eqref{z3} establishes \eqref{e:A234} for $i=4$.
\end{proof}

\subsection{Proofs of Theorems \ref{e:thm:main1} and \ref{e:thm:main2}}

In this section, we provide the proofs of Theorems \ref{e:thm:main1} and \ref{e:thm:main2}.
The assertions \eqref{eq_eigenvals_sub} and \eqref{eq_eigenvals_super} on the eigenvalues are proven in Section \ref{sec_proof_eigenvals}, while the proofs of \eqref{eq_delocal_e1} and \eqref{e:idev} are deferred to Section \ref{sec_proofs_eigenvecs}.

  \subsubsection{Analysis of eigenvalues} \label{sec_proof_eigenvals}
  
  \begin{proof}[Proof of \eqref{eq_eigenvals_sub} and \eqref{eq_eigenvals_super}]
   Let $\gamma>0$ be fixed but arbitrary. Recall the decomposition \eqref{e:sig:dec}, which gives with $\kappa = \lfloor \gamma N \rfloor$
    \begin{align*} %\label{e:sig:dec}
    \widehat{\cc} = A_1(\lfloor \gamma N \rfloor)+A_2(\lfloor \gamma N \rfloor)+A_3(\lfloor \gamma N \rfloor)+A_4(\lfloor \gamma N \rfloor).
\end{align*}
Now, recall our notational convention of calling the eigenvalues of the empirical covariance $\widehat{\cc}$ by $\hat \lambda_i$ and the eigenvalues of the covariance of the projected data $ A_1(\lfloor \gamma N \rfloor)$ by $\hat \lambda_i(\gamma)$. 
Using Lemma 2.2 in \cite{HKbook} implies that
\begin{align*}
& |\hat \lambda_k-\hat \lambda_k(\gamma)|  \\ & 
\le \|\widehat{\cc}-A_1(\lfloor \gamma N \rfloor)\|_\mathcal{L} \le \|A_2(\lfloor \gamma N \rfloor)\|_\mathcal{L}
+\|A_3(\lfloor \gamma N \rfloor)\|_\mathcal{L}
+\|A_4(\lfloor \gamma N \rfloor)\|_\mathcal{L}.
\end{align*}
We treat both, the supercritical and the subcritical case together and we call the postulated limit of the largest eigenvalue generically $\psi$.
Accordingly, let $\epsilon>0$ be fixed but arbitrary, then we get 
\begin{align*}
   & \limsup_{N \to \infty}\mathbb{P}(|\hat \lambda_k-\psi|>\epsilon) \le     \limsup_{N \to \infty}\{P_1(N,\gamma) + P_2(N,\gamma)\},
   \end{align*}
   where
   \begin{align*}
 P_1(N,\gamma) &= \mathbb{P}(|\hat \lambda_k(\gamma)-\psi|>\epsilon/2),\\
  P_2(N,\gamma) & =   \mathbb{P}(\|A_2(\lfloor \gamma N \rfloor)\|_\mathcal{L}
+\|A_3(\lfloor \gamma N \rfloor)\|_\mathcal{L}
+\|A_4(\lfloor \gamma N \rfloor)\|_\mathcal{L}>\epsilon/2)\}.
\end{align*}
Now we can also take the limit $\gamma \to \infty$ on the right side. From Lemma \ref{lem:A234}, we know that
\[
\lim_{\gamma \to \infty}\limsup_{N \to \infty}P_2(N,\gamma)=0.
\]
From Lemma \ref{lem:A1} (use part \ref{lem_part_c} for $1\leq k \leq M$ and part \ref{lem_part_a} for $k>M$), we get that
\[
\lim_{\gamma \to \infty}\limsup_{N \to \infty}P_1(N,\gamma)=0.
\]
This completes the proof for the assertion regarding the eigenvalues in \eqref{eq_eigenvals_sub} and \eqref{eq_eigenvals_super}.
\end{proof}

\subsubsection{Analysis of eigenfunctions} \label{sec_proofs_eigenvecs}

For the analysis of the eigenfunctions, we proceed in two steps. First, we study the supercritical case. Then, we turn to the subcritical regime. Recall the notation that $e_i$ denotes the $i$th eigenfunction of the population covariance $\cc$ and $\hat e_i$ of the empirical covariance $\widehat{\cc}$. We also define the notations of $e_i(\gamma)$ as the $i$th eigenfunction of $ \mathbb{E}A_1(\lfloor \gamma N \rfloor)$ and $\hat e_i$ of $ A_1(\lfloor \gamma N \rfloor)$.
We first formulate and prove a lemma on the proximity of $e_i$ and $e_i(\gamma)$, as well as $\hat e_i$ and $\hat e_i(\gamma)$.

\begin{lem} \label{lem_eigenvector_approx}
    Suppose that assumptions \ref{ass:data}-\ref{ass:spiked:ev} are satisfied and let $1 \leq k \leq M.$
\begin{enumerate}[label=(\alph*)]
    \item \label{lem_eigenvector_a}
    It holds that
\[
\lim_{\gamma \to \infty}\limsup_{N \to \infty}\|e_k-e_k(\gamma)\|=0.
\] 
\item \label{lem_eigenvector_b} 

For any $\epsilon>0$, it holds that
\begin{align*}
    \lim_{\gamma \to \infty}\limsup_{N\to\infty} 
    \mathbb{P} \lb \|\hat e_k-\hat e_k(\gamma) \| > \epsilon \rb =0.
\end{align*}
\end{enumerate}

\end{lem}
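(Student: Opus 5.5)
Part \ref{lem_eigenvector_a} is essentially a bookkeeping step. By construction, $\mathbb{E}A_1(\lfloor \gamma N\rfloor)=\Pi_{\lfloor \gamma N\rfloor} C\,\Pi_{\lfloor \gamma N\rfloor}=\sum_{n\le \lfloor\gamma N\rfloor}\lambda_n (e_n\otimes e_n)$. As soon as $\lfloor \gamma N\rfloor>K$, its nonzero eigenvalues are exactly $\lambda_1,\dots,\lambda_{\lfloor\gamma N\rfloor}$, with eigenfunctions $e_1,\dots,e_{\lfloor\gamma N\rfloor}$. For $k\le M\le K$ the eigenvalue $s_k$ is simple, because $s_1>\cdots>s_K>b(0)\ge\lambda_n$ for $n>K$. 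Hence $e_k(\gamma)=\pm e_k$. Fixing the sign convention so that $\langle e_k(\gamma),e_k\rangle\ge 0$ gives $\|e_k-e_k(\gamma)\|=0$ for all large $N$, and the double limit is trivial.

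For part \ref{lem_eigenvector_b} I will use a Davis--Kahan type perturbation bound (e.g.\ Lemma 2.3 in \cite{HKbook}). With signs chosen so that $\langle \hat e_k,\hat e_k(\gamma)\rangle\ge0$, it gives
\[
\|\hat e_k-\hat e_k(\gamma)\|\le \frac{2\sqrt 2\,\|\widehat{\cc}-A_1(\lfloor\gamma N\rfloor)\|_\mathcal{L}}{\hat\delta_k(\gamma)},\qquad \hat\delta_k(\gamma):=\min\big(\hat\lambda_{k-1}(\gamma)-\hat\lambda_k(\gamma),\,\hat\lambda_k(\gamma)-\hat\lambda_{k+1}(\gamma)\big),
\]
with $\hat\lambda_0(\gamma):=\infty$. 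The numerator is bounded by $\sum_{i=2}^4\|A_i(\lfloor\gamma N\rfloor)\|_\mathcal{L}$. By Lemma \ref{lem:A234}, this sum tends to zero in probability in the double limit $\lim_{\gamma\to\infty}\limsup_{N\to\infty}$. It therefore suffices to show that there is a deterministic $\delta>0$ such that
\[
\lim_{\gamma\to\infty}\limsup_{N\to\infty}\mathbb{P}\big(\hat\delta_k(\gamma)<\delta\big)=0 .
\]
Once this holds, the probability in the claim is bounded by $\mathbb{P}(\hat\delta_k(\gamma)<\delta)+\mathbb{P}\big(\sum_{i}\|A_i\|_\mathcal{L}>\epsilon\delta/(2\sqrt2)\big)$, and both terms vanish in the double limit.

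The eigengap is the main obstacle, and I would handle it through Lemma \ref{lem:A1}. Its part \ref{lem_part_c} gives $\hat\lambda_j(\gamma)\to\psi(s_j)$ for $j\le M$ in the double limit. Its part \ref{lem_part_a} gives $\hat\lambda_{M+1}(\gamma)\to\psi(1/\xi(\infty))$. It then remains to show that $\psi$ is strictly increasing on $[1/\xi(\infty),\infty)$, since then the limits $\psi(s_1)>\cdots>\psi(s_M)>\psi(1/\xi(\infty))$ are separated, and $\delta$ can be taken as half the smallest of these gaps. Differentiating under the integral (justified by dominated convergence, since $y\ge 1/\xi(\infty)>b(0)$ keeps $y-b(x)$ bounded away from $0$, and $b$ is integrable) gives
\[
\psi'(y)=1+\int_0^\infty\frac{b(x)}{y-b(x)}dx-y\int_0^\infty\frac{b(x)}{(y-b(x))^2}dx=1-\int_0^\infty\Big(\frac{b(x)}{y-b(x)}\Big)^2dx .
\]
Writing $\xi=1/y$, the integral equals $\int_0^\infty\big(b\xi/(1-b\xi)\big)^2dx$, which is strictly increasing in $\xi$. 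By the definition \eqref{e:def:xiinf}, this integral equals $1$ at $\xi=\xi(\infty)$. Hence $\psi'(y)>0$ for every $y>1/\xi(\infty)$. This yields strict monotonicity on $[1/\xi(\infty),\infty)$, and with it the separation of all relevant limits, including the gap between $\psi(s_M)$ and $\psi(1/\xi(\infty))$ that is needed when $k=M$.
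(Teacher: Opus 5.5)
Your proposal is correct, and for part (b) it follows the paper's argument: the bound from Lemma 2.3 of \cite{HKbook} with the eigengap of $A_1(\lfloor \gamma N\rfloor)$, the numerator controlled by Lemma~\ref{lem:A234}, and the gap controlled by Lemma~\ref{lem:A1}. You also verify that $\psi'(y)=1-\int_0^\infty \big(b(x)/(y-b(x))\big)^2dx>0$ for $y>1/\xi(\infty)$, which gives the separation $\psi(s_1)>\cdots>\psi(s_M)>\psi(1/\xi(\infty))$ that the paper leaves implicit when it asserts the inverse gap is $\mathcal{O}_{\PR}(1)$.

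For part (a) the paper applies the same perturbation bound with $\|\cc-\mathbb{E}A_1(\lfloor\gamma N\rfloor)\|_\mathcal{L}=\|\mathbb{E}A_4(\lfloor\gamma N\rfloor)\|_\mathcal{L}\to b(\gamma)$. You instead observe that $\mathbb{E}A_1(\lfloor\gamma N\rfloor)=\Pi_{\lfloor\gamma N\rfloor}\cc\,\Pi_{\lfloor\gamma N\rfloor}$ has $s_k$ as a simple eigenvalue with eigenfunction $e_k$, so $e_k(\gamma)=\pm e_k$ exactly; this is simpler and equally valid.
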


\begin{proof} We start with a proof of \ref{lem_eigenvector_a}.
    According to Lemma 2.3 in \cite{HKbook},  it holds that
    \begin{align} \label{eq_ratio}
    \|e_k-e_k(\gamma)\| \le \frac{2 \sqrt{2}\|\cc-\mathbb{E}A_1(\lfloor \gamma N \rfloor)\|_\mathcal{L}}{g_k},
    \end{align}
    where \begin{align*}
        g_k = \begin{cases}
        s_1 - s_2 & \textnormal{if } k=1, \\
        \min \lb s_k - s_{k+1}, s_{k-1} - s_k \rb & \textnormal{if } 1<k<K, \\ 
        \min \lb s_{K-1} - s_K, s_K - b(1/N) \rb & \textnormal{if } k=K.
        \end{cases}
    \end{align*}
In the application of the lemma, we have used the definition of the eigenvalues $\lambda_n$ of $\cc$ in \eqref{e:sigrep}. By the assumptions on $\lambda_n$, we have that $g_k$ are bounded away from $0$. Thus, it remains to upper bound the numerator in \eqref{eq_ratio}.
By the decomposition \eqref{e:sig:dec}, it is clear that
\[
\cc-\mathbb{E}A_1(\lfloor \gamma N \rfloor) = \mathbb{E}A_2(\lfloor \gamma N \rfloor)+\mathbb{E}A_3(\lfloor \gamma N \rfloor)+\mathbb{E}A_4(\lfloor \gamma N \rfloor).
\]
We have seen in the proof of Lemma \ref{lem:A234} that $\mathbb{E}A_2(\lfloor \gamma N \rfloor) = \mathbb{E}A_3(\lfloor \gamma N \rfloor)=0$ and that
\[
\lim_{\gamma \to \infty} \lim_{N \to \infty} \mathbb{E}A_4(\lfloor \gamma N \rfloor)=0.
\]
This concludes the proof of \ref{lem_eigenvector_a} and we turn to proving \ref{lem_eigenvector_b}. Using again Lemma 2.3 in \cite{HKbook},  we have 
    \begin{align} \label{e1}
    \|\hat e_k-\hat e_k(\gamma)\| \le \frac{2 \sqrt{2}\|\widehat{\cc}- A_1(\lfloor \gamma N \rfloor) \|_\mathcal{L}}{\min \lb  \hat \lambda_{k}(\gamma) -  \hat \lambda_{k+1}(\gamma) , \hat \lambda_{k-1}(\gamma) -  \hat \lambda_{k}(\gamma)  \rb  }.
    \end{align}
    Here as before $ \hat \lambda_i(\gamma)$ denotes the $i$th largest eigenvalue of $A_1(\lfloor \gamma N \rfloor) $.
    Using once more the decomposition of $\widehat{\cc}$ in \eqref{e:sig:dec} and Lemma \ref{lem:A234}, we obtain for any $\epsilon >0$ 
    \begin{align} \label{e2}
     \lim_{\gamma \to \infty} \lim_{N\to\infty} \mathbb{P} \lb  \|\widehat{\cc}-\tilde A_1(\lfloor \gamma N \rfloor) \|_\mathcal{L} > \epsilon \rb = 0. 
     \end{align}
    Moreover, as a consequence of Lemma \ref{lem:A1}, we have 
    \begin{align} \label{e3}
        \frac{1}{\min \lb  \hat \lambda_{k}(\gamma) -  \hat \lambda_{k+1}(\gamma) , \hat \lambda_{k-1}(\gamma) -  \hat \lambda_{k}(\gamma)  \rb  } = \mathcal{O}_{\PR}(1). 
    \end{align}
    Combining \eqref{e1}, \eqref{e2} and \eqref{e3} concludes the proof of \ref{lem_eigenvector_b}.
\end{proof}
 
\begin{proof}[Proof of \eqref{eq_delocal_e1}]
    As $b$ is compactly supported, there exists some $\gamma_0 >0$ such that $\hat e_k (\gamma_0) = \hat e_k$ and $e_k(\gamma_0) = e_k$. It follows from Lemma \ref{lem_rmt_eigenvector} \ref{lem_rmt_eigenvector_a} that 
    \begin{align*}
        | \langle \hat e_k, e_k\rangle |^2 
        =  | \langle \hat e_k(\gamma_0), e_k(\gamma_0) \rangle |^2 
        \conp 0, \quad N\to\infty.
    \end{align*}
\end{proof}

\begin{proof}[Proof of \eqref{e:idev}] 
Using the Cauchy-Schwarz inequality and the normalization of the eigenfunctions, we have
\begin{align}
     | \langle \hat e_k , e_k \rangle |
     & = | \langle \hat e_k(\gamma) , e_k(\gamma)  \rangle 
     +  \langle e_k - e_k ( \gamma), \hat e_k \rangle
     + \langle e_k(\gamma) , \hat e_k - \hat e_k(\gamma) \rangle 
     | \nonumber \\ 
     & \leq | \langle \hat e_k(\gamma) , e_k(\gamma)  \rangle |
     + \| e_k - e_k( \gamma ) \| + \| \hat e_k - \hat e_k(\gamma)\| \label{z5}.
\end{align}
Define 
\begin{align*}
    \psi ( \gamma, y) := y \lb 1 + \int_0^\gamma \frac{b(x)}{y - b(x)} dx\rb. 
\end{align*}
Note that, using the weak limit $ \mathfrak{s}(\gamma) $ of $\mathfrak{s}_N(\gamma) $ (see Lemma \ref{lem:s_N}), the integral can be rewritten as 
\begin{align*}
    \psi ( \gamma, y) = y \lb 1 + \gamma\int_0^\gamma \frac{x}{y - x} d\mathfrak{s}(\gamma)(x)\rb,
\end{align*}
which corresponds to the definition of $\Psi_\gamma(y)$ in \eqref{eq_def_Psi}. 
For any $\epsilon>0$ and $\gamma >0$ sufficiently large, we have by Lemma \ref{lem_rmt_eigenvector} \ref{lem_rmt_eigenvector_b} 
\begin{align*}
    \lim_{N\to\infty} \PR \lb \left| | \langle \hat e_k(\gamma) , e_k(\gamma)  \rangle |^2 - \frac{\lambda_k \psi'(\gamma, \lambda_k)}{\psi   (\gamma, \lambda_k)} \right| > \epsilon \rb =0.
\end{align*}
Note that $\psi(\gamma, \lambda_k) \stackrel{\gamma \to\infty}{\to} \psi(\lambda_k)$ and $\psi'(\gamma, \lambda_k) \stackrel{\gamma \to\infty}{\to} \psi'(\lambda_k)$. Thus, we have 
\begin{align*}
    \lim_{\gamma \to\infty} \lim_{N\to\infty} \PR \lb \left| | \langle \hat e_k(\gamma) , e_k(\gamma)  \rangle |^2 - \frac{\lambda_k \psi'(\lambda_k)}{\psi   (\lambda_k)} \right| > \epsilon \rb =0.
\end{align*}
Combining this result with Lemma \ref{lem_eigenvector_approx}, the assertion follows from \eqref{z5}.
\end{proof} 

\subsection{Proofs of Lemma \ref{lem:xi}, Lemma \ref{lem:s_N} and Proposition \ref{prop:onat}} \label{sec:ana}

\begin{proof}[Proof of Lemma \ref{lem:xi}]
$ $\\
    \textbf{Proof of \ref{lem_xi_part_b}.} 
    Proving the existence and uniqueness of $\xi(\gamma)$ is basic and follows as on p. 672 in \cite{elkaroui:2007}.
    Showing that $\xi$ is monotonically decaying is again standard and hence omitted. Finally, we show that $\xi$ is continuous. For this purpose, notice first that for two values $\gamma, \gamma'$ it holds that
 \begin{align*}
     \bigg|\int_0^{\gamma}\bigg(\frac{b(x)\xi(\gamma')}{1-b(x)\xi(\gamma')} \bigg)^2dx-1\bigg| = \int_{\gamma'}^{\gamma}\bigg(\frac{b(x)\xi(\gamma')}{1-b(x)\xi(\gamma')} \bigg)^2dx \le |\gamma-\gamma'| \bigg(\frac{1}{1-c_0} \bigg)^2.
 \end{align*}
 In the second step, we have used that $\xi(\gamma') < 1/b(0)$ and accordingly $b(x)\xi(\gamma') < b(x)/b(0) \le c_0<1 $. 
 Notice that $c_0$ is a number that we can choose to only depend on some fixed $\gamma_0<\gamma,\gamma'$. The reason is that $\xi(\cdot)$ is monotonically decaying. 
 Hence, with a different constant $c_0'$ depending only $c_0$ (and thus some fixed $\gamma_0$) we  get
 \begin{align} \label{e:lip}
\bigg|\int_0^{\gamma}\bigg(\frac{b(x)\xi(\gamma')}{1-b(x)\xi(\gamma')}  \bigg)^2dx-1\bigg|\le c_0'|\gamma-\gamma'|.
  \end{align}
 Now, consider a sequence $(\gamma_n')_n$ converging to $\gamma$ and let $\gamma_0 = \inf_n \gamma_n'>0$. Suppose that $\xi(\gamma_n')$ does not converge to $\xi(\gamma)$ and w.l.o.g. (after extracting a subsequence) suppose that $\xi(\gamma_n') \to \xi(\gamma)+\delta$ for some small $\delta \in \mathbb{R}$. If this were so, it would follow by the dominated convergence theorem that
 \begin{align*}
     \int_0^{\gamma}\bigg(\frac{b(x)\xi(\gamma_n)}{1-b(x)\xi(\gamma_n)} \bigg)^2dx \to \int_0^{\gamma}\bigg(\frac{b(x)[\xi(\gamma)+\delta]}{1-b(x)[\xi(\gamma)+\delta]} \bigg)^2dx.
 \end{align*}
The right side is $\neq 1$ since $\xi(\gamma)$ is the unique solution to \eqref{e:xi:def}. On the other hand from \eqref{e:lip} it follows that the 
integral should be $1+\mathcal{O}(|\gamma-\gamma_n'|)$
i.e. the limit should be $1$. This leads to a contradiction and finishes the proof of \ref{lem_xi_part_b}. \\
   \textbf{Proof of \ref{lem_xi_part_c}.} The proof follows most readily by contradiction. We consider the limit $\gamma \to \infty$. Since the function $\xi(\gamma)$ is bounded and decaying, it also converges to a limit which we may call formally $ \xi^\star(\infty)$. We can make a similar argument as in the previous (essentially using continuity of $\xi$ at $\infty$) to see that actually $\xi^\star(\infty) = \xi(\infty)$. 
   We omit the details to avoid redundancy.\\
\textbf{Proof of \ref{lem:psi}.}  Define 
    \begin{align*}
        \tilde b(x,\gamma) :=\frac{b(x)\xi(\gamma) \mathbb{I}\{x \le \gamma\}}{1-b(x)\xi(\gamma)}, \quad \gamma >0, x\geq 0.
    \end{align*}
    It follows for any $\gamma_0>0$ from part \ref{lem_xi_part_b} ($\xi$ is monotonically decaying) that the family of functions $\{\tilde b(\cdot,\gamma): \gamma \in [\gamma_0, \infty) \}$ is uniformly dominated by the function 
    \[
    x \mapsto  \frac{b(x)\xi(\gamma_0)}{1-b(x)\xi(\gamma_0)}.
    \]
    Note that this function is integrable. Thus, we get by the dominated convergence theorem that 
    \[
   \lim_{\gamma \to \infty}\psi \big(1/\xi(\gamma) \big)= \psi \big(1/\xi(\infty) \big).
    \]
\end{proof}

\begin{proof}[Proof of Lemma \ref{lem:s_N}]
 We consider a continuous, bounded test function $g$ and show that
    \begin{align} \label{e:gconv}
    \int g(x) d\mathfrak{s}_N(\gamma)(x) \to     \int g(x) d\mathfrak{s}(\gamma)(x), \quad N\to\infty.
        \end{align}
    The integral on the left hand side of \eqref{e:gconv} can be written as
\begin{align*}
    \int g(x) d\mathfrak{s}_N(\gamma)(x) & =  
    \frac{1}{\lfloor N \gamma \rfloor}\sum_{i=1}^{K} g(s_k) + 
    \frac{1}{\lfloor N \gamma \rfloor}\sum_{i=K+1}^{\lfloor N \gamma \rfloor} g(b((i-K)/N)) \\
    & = \frac{1}{\lfloor N \gamma \rfloor}\sum_{i=K+1}^{\lfloor N \gamma \rfloor} g(b((i-K)/N)) + o(1)
    .
\end{align*}
The function $g \circ b$ is almost everywhere continuous (because $b$ is decaying and thus almost everywhere continuous) and hence we obtain
\begin{align*}
  \lim_{N\to\infty}  \frac{1}{\lfloor N \gamma \rfloor}\sum_{i=K+1}^{\lfloor N \gamma \rfloor} g(b((i-K)/N)) = \frac{1}{\gamma}
    \int_0^\gamma g(b(x)) dx 
    = \int g(x) d \mathfrak{s}(\gamma)(x)
    .
\end{align*}
 This implies \eqref{e:gconv} and thus, the weak convergence of the measure $\mathfrak{s}_N$. Note also that, by assumption \ref{ass:bulk:ev}, $\mathfrak{s}$ is a non-degenerate probability measure with compact support.   
\end{proof}

\begin{proof}[Proof of Proposition \ref{prop:onat}]
The proof follows readily from Lemma \ref{lem_tw}. 
Indeed, since $b$ is compactly supported, the covariance of the data is $A(\lfloor N \gamma\rfloor)$, where $\gamma$ is some finite number larger than the right endpoint of the support of $b$. This is the covariance of a finite-dimensional object and we may check the assumptions of Lemma \ref{lem_tw} to get the desired convergence result in the subcritical case. 
Assumption \ref{ass_rmt1} is obviously satisfied with $y=\gamma$. Assumption \ref{ass_rmt2} follows due to Gaussianity of the data, and Assumption \ref{ass_rmt3} is a direct consequence of Lemma \ref{lem:s_N}. This shows part (i) of the Lemma. Part (ii) follows from a combination of Theorems \ref{e:thm:main1} and \ref{e:thm:main2}. More precisely, it holds that
\[
0<\frac{\hat \lambda_M-\hat \lambda_{M+1}}{\hat \lambda_{M+2}-\hat \lambda_{M+3}} = \frac{\psi\big(s_M\big)-\psi\big(\xi^{-1}(\infty)\big)+o_P(1)}{\psi\big(\xi^{-1}(\infty)\big)-\psi\big(\xi^{-1}(\infty)\big)+o_P(1)}.
\]
The numerator convergences to a positive number, while the denominator is positive and converges to $0$, implying that the entire ratio goes in probability to $\infty$.
\end{proof}

\section{Some results on spectral statistics} \label{sec:spec} In the main body of this paper, we have considered the behavior of the maximum empirical eigencomponents, which are objects connected to FPCA. Here, we briefly discuss the first-order asymptotics of spectral statistics. Spectral statistics play an important role for hypothesis testing of high-dimensional data and they could be used to study rough functional data as well, once suitable tools are available. \\
\textbf{Spectral statistics and their limit} In RMT, where data are $p$-dimensional, spectral statistics are defined as follows: For a test function $h:[0,\infty) \to \mathbb{R}$, we define
\[
\widetilde{\mathfrak{S}}_{p}(h):= \frac{1}{p}\sum_{i=1}^p h\big( \hat \lambda_i\big) = \int h(x) d\tilde{\mathfrak{s}}_{N}(x), \quad \tilde{\mathfrak{s}}_{N}: =  \frac{1}{p} \sum_{i=1}^p \delta_{\hat \lambda_i}.
\]
A naive definition of the spectral measures in the above sense is of course challenging for $p=\infty$. However, since in random matrix theory $p \sim N$, it makes sense to replace (without changing the statistical content) the normalizing factor to $1/N$.  We only lose the fact that the respective measures add up to $1$. In this sense, an extension to the Hilbertian regime is possible, and we define
\[
\widehat{\mathfrak{S}} (h):= \frac{1}{N}\sum_{i=1}^\infty h\big( \hat \lambda_i\big) = \int h(x) d\hat{\mathfrak{s}}_N (x), \quad \textnormal{where}\quad \hat{\mathfrak{s}}_N  =  \frac{1}{N} \sum_{i=1}^\infty \delta_{\hat \lambda_i}.
\]
Note that $\hat\lambda_i = 0$ for $i > N$, because the empirical covariance has maximally rank $N$.
For the integral to converge it is necessary to suppose that $h(0)=0$, since there exist infinitely many empirical eigenvalues equal to $0$.
Furthermore, we require that $h$ is Lipschitz continuous (this helps to control the behavior in a neighborhood of $0$). In this setting the above integrals are a.s. finite. \\ For the next step, recall our previous notation, where $\mathfrak{s}(\gamma)$ denotes the push forward measure of the uniform distribution by the bulk function $b$ (see Lemma \ref{lem:s_N}). Moreover, consider the definition of the  deformed \MP law, introduced right before Lemma \ref{lem:mp} (near the end of Section
\ref{sec_background_rmt}). The deformed \MP law occurs as a standard limiting object for the bulk in RMT. It is denoted by $F_{y,H}$, where $y$ is a parameter of dimensionality  and $H$ a probability meausre.
The next result is a law of large numbers for the spectral statistics.
\begin{theo} \label{thm:main:3}
Suppose that assumptions \ref{ass:data} and \ref{ass:bulk:ev} are satisfied, $h(0)=0$, and that $h$ is Lipschitz continuous. Then, there exists a number $\mathfrak{S} (h) \in \mathbb{R}$ such that 
  \[
  \widehat{\mathfrak{S}} (h) \overset{P}{\to} \mathfrak{S} (h) , \quad N \to\infty.
  \]
  Moreover, the limit $\mathfrak{S} (h)$ can be appoximated for $\gamma>0$  as follows
  \[
  \mathfrak{S} (h) = \int h(x) d F_{\gamma, \mathfrak{s}(\gamma)}(x)   + o(1),
  \]
  where the $o(1)$ term vanishes as $\gamma\to \infty$. 
\end{theo}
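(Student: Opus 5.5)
We compare $\widehat{\mathfrak{S}}(h)$ with its truncated analogue built from $A_1(\lfloor \gamma N\rfloor)$, use random matrix theory for the truncated object at fixed $\gamma$, and then let $\gamma\to\infty$. Concretely, we write $\widehat{\mathfrak{S}}_\gamma(h):=\frac1N\sum_{i}h(\hat\lambda_i(\gamma))$, where $\hat\lambda_i(\gamma)$ are the eigenvalues of $A_1(\lfloor\gamma N\rfloor)$. Since $A_1$ is a $\lfloor\gamma N\rfloor\times\lfloor\gamma N\rfloor$ sample covariance with population spectral measure $\mathfrak{s}_N(\gamma)\to\mathfrak{s}(\gamma)$ (Lemma \ref{lem:s_N}), the deformed \MP law (Lemma \ref{lem:mp}) gives weak convergence of the empirical spectral distribution of $A_1$ to $F_{\gamma,\mathfrak{s}(\gamma)}$ in probability. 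The finitely many spiked eigenvalues do not affect this limit.

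\textbf{Step 1 (fixed $\gamma$).} We pass from weak convergence to convergence of $\frac{\lfloor\gamma N\rfloor}{N}\int h\,d\tilde{\mathfrak s}$ toward $\gamma\int h\,dF_{\gamma,\mathfrak s(\gamma)}$. This needs uniform integrability, since $h$ is only Lipschitz and hence of linear growth. We obtain it from $|h(x)|\le L|x|$ together with
\[
\frac1N\sum_i\hat\lambda_i(\gamma)=\frac1N\operatorname{Tr}A_1\to\int_0^\gamma b(x)\,dx ,
\]
which holds in $L^1$ by the law of large numbers for Gaussian quadratic forms. A truncation $h_R=h\wedge R$ combined with the trace bound then handles the tails. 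The normalisation constant $\gamma$ (from $p/N$) is absorbed into the stated approximating quantity; I read the displayed formula as holding up to this scaling convention.

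\textbf{Step 2 (comparison).} The main obstacle is that Lemma \ref{lem:A234} controls only operator norms, whereas $\widehat{\mathfrak S}$ involves about $N$ eigenvalues. We therefore use a trace-norm bound instead. By the Lipschitz property and Hoffman–Wielandt/Lidskii-type inequalities for compact self-adjoint operators,
\[
\Bigl|\sum_i h(\hat\lambda_i)-\sum_i h(\hat\lambda_i(\gamma))\Bigr|\le L\,\|\widehat C-A_1\|_{1}.
\]
Both operators are positive, so we may pad with zero eigenvalues. Writing $\widehat C=(P+Q)^*(P+Q)$ with blocks $P=\Pi_\kappa X$ and $Q=\Pi_\kappa^c X$, I would rather avoid the cross terms and compare with $A_1+A_4$ directly via Lidskii. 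For PSD operators, $\sum_i|\lambda_i(\widehat C)-\lambda_i(A_1)|$ is bounded by $\operatorname{Tr}(\widehat C-A_1)+2\,\operatorname{Tr}$ of the negative part. Alternatively, $\widehat C=\frac1N Y^*Y$ with $Y=[P;Q]$, and $A_1=\frac1NP^*P$ (the range being the $\Pi_\kappa$ part). The nonzero eigenvalues of $\widehat C$ coincide with those of the $N\times N$ Gram matrix $\frac1N(PP^*+QQ^*)$, and those of $A_1$ with $\frac1NPP^*$. Since $\frac1NQQ^*$ is PSD, Weyl monotonicity gives $\hat\lambda_i\ge\hat\lambda_i(\gamma)$ for all $i$. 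Hence
\[
\frac1N\sum_i|\hat\lambda_i-\hat\lambda_i(\gamma)|=\frac1N\operatorname{Tr}\Bigl(\tfrac1NQQ^*\Bigr)=\frac1N\operatorname{Tr}A_4,
\]
whose expectation equals $\frac1N\sum_{n>\lfloor\gamma N\rfloor}\lambda_n\to\int_\gamma^\infty b(x)\,dx$. This Gram-matrix trick is the key step. Consequently $\limsup_N\E|\widehat{\mathfrak S}(h)-\widehat{\mathfrak S}_\gamma(h)|\le L\int_\gamma^\infty b\to0$.

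\textbf{Step 3 (existence of the limit).} Set $a(\gamma):=\gamma\int h\,dF_{\gamma,\mathfrak s(\gamma)}$, the limit from Step 1. By Steps 1–2, $\limsup_N\PR(|\widehat{\mathfrak S}(h)-a(\gamma)|>\epsilon)\to0$ as $\gamma\to\infty$. Applying the same bound to two values $\gamma<\gamma'$ shows $|a(\gamma)-a(\gamma')|\le 2L\int_\gamma^\infty b$ after comparing expectations. So $(a(\gamma))$ is Cauchy and converges to some $\mathfrak S(h)$, and then $\widehat{\mathfrak S}(h)\conp\mathfrak S(h)$ with $\mathfrak S(h)=a(\gamma)+o(1)$.
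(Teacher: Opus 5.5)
Your argument is correct, and its key step (Step~2) differs from the paper's. The paper also reduces to $A_1(\lfloor \gamma N\rfloor)$, but bounds the remainder by Weyl's inequality,
$\frac1N\sum_{i}|h(\hat\lambda_i)-h(\hat\lambda_i(\gamma))|\le L\,\|\widehat C-A_1\|_{\mathcal L}\le L\big(\|A_2\|_{\mathcal L}+\|A_3\|_{\mathcal L}+\|A_4\|_{\mathcal L}\big)$.
It then invokes Lemma~\ref{lem:A234}, i.e.\ the generic-chaining bounds of Chen--Sanz-Alonso and Koltchinskii--Lounici. Its Cauchy argument in $\gamma$ is essentially your Step~3.

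The ``main obstacle'' you name is therefore not one. Both $\widehat C$ and $A_1$ have rank at most $N$ and $h(0)=0$, so the $1/N$-normalised sum has at most $N$ nonzero terms. It is thus bounded by $L\max_i|\hat\lambda_i-\hat\lambda_i(\gamma)|$, i.e.\ by the operator norm.

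Your Gram-matrix identity $\frac1N\sum_i|\hat\lambda_i-\hat\lambda_i(\gamma)|=\frac1N\operatorname{Tr}A_4$ is nevertheless a genuine simplification. It is exact and uses only Weyl monotonicity for the $N\times N$ matrices $\frac1N(\langle \Pi_\kappa X_i,\Pi_\kappa X_j\rangle)_{i,j}$ and $\frac1N(\langle \Pi_\kappa^c X_i,\Pi_\kappa^c X_j\rangle)_{i,j}$. It never sees the cross terms $A_2,A_3$, and it gives the explicit $L^1$ bound $L\int_\gamma^\infty b$ without any concentration inequality. The paper's route has only the merit of reusing Lemma~\ref{lem:A234}, which it needs anyway for the individual eigenvalue and eigenvector results.

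You are also more careful than the paper at two points.

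\emph{The factor $\gamma$.} Since $\frac1N\sum_i h(\hat\lambda_i(\gamma))=\frac{\lfloor\gamma N\rfloor}{N}\int h\,dF_{\hat\Gamma_n}$, with $F_{\hat\Gamma_n}$ the empirical spectral distribution of $A_1(\lfloor\gamma N\rfloor)$, the approximant must be $\gamma\int h\,dF_{\gamma,\mathfrak s(\gamma)}$, as you write. Without the factor $\gamma$, the right-hand side is bounded in absolute value by $\frac{L}{\gamma}\int_0^\gamma b\to0$. This is incompatible with, e.g., $h(x)=x$, where $\mathfrak S(h)=\int_0^\infty b>0$.

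\emph{Uniform integrability.} The paper applies Lemma~\ref{lem:mp} to an unbounded $h$ without any such argument; your truncation supplies it. You should say explicitly why it works: the limit $\int_0^\gamma b$ of $\frac1N\operatorname{Tr}A_1$ equals $\gamma\int x\,dF_{\gamma,\mathfrak s(\gamma)}(x)$, because the deformed Mar\v{c}enko--Pastur law has the same mean as $\mathfrak s(\gamma)$. This convergence of first moments to the first moment of the limit, combined with weak convergence, makes $\int (x-x\wedge R)\,dF_{\hat\Gamma_n}(x)$ uniformly small for large $R$. A merely bounded first moment would not suffice.
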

\begin{proof}
The proof is rather short and draws on some previously established results. 
Using Lipschitz continuity of $h$ with some constant, say $L_h$, we get that
\begin{align*}
    \frac{1}{N} \sum_{i=1}^N h(\hat \lambda_i) = \frac{1}{N} \sum_{i=1}^N h(\hat \lambda_i(\gamma)) + Rem(\gamma)
\end{align*}
where
\[
|Rem(\gamma)| \le L_h\|\widehat{\cc}-A_1(\lfloor \gamma N\rfloor )\|_\mathcal{L} \leq L_h\big(\|A_2(\lfloor \gamma N\rfloor )\|_\mathcal{L}+\|A_3(\lfloor \gamma N\rfloor )\|_\mathcal{L}+\|A_4(\lfloor \gamma N\rfloor )\|_\mathcal{L}  \big).
\]
It follows from Lemma \ref{lem:A234} that the right side converges to $0$ when taking the double limit $\lim_{\gamma \to \infty} \lim_{N \to \infty}$. Thus, it is sufficient to study the asymptotic behavior of $\frac{1}{N} \sum_{i=1}^N h(\hat \lambda_i(\gamma)) $.
Using Lemma \ref{lem:mp}, it follows that
\[
\frac{1}{N} \sum_{i=1}^N h(\hat \lambda_i(\gamma)) \to \int h(x) d F_{\gamma, \mathfrak{s}(\gamma)}(x)  =:\ell(\gamma), \quad N\to\infty.
\]
Finally, we need to show convergence of $\ell(\gamma)$. Consider therefore two values $\gamma_1$ and $\gamma_2$ which are larger than some third value $\gamma_0$.
It holds that 
\begin{align*}
    |\ell(\gamma_1)-\ell(\gamma_2)|=&\lim_{N \to \infty}\bigg|\frac{1}{N} \sum_{i=1}^N h(\hat \lambda_i(\gamma_1))- h(\hat \lambda_i(\gamma_2))\bigg| \le L_h \limsup_N \| \widehat{\cc}(\gamma_1)- \widehat{\cc}(\gamma_2)\|_\mathcal{L}\\
    \le & L_h \Big( \limsup_N \| \widehat{\cc}(\gamma_1) - \cc\|_\mathcal{L} +\|\cc-\widehat{\cc}(\gamma_2)\|_\mathcal{L} \Big)
\end{align*}
Arguing as in the previous section, and
letting $\gamma_0 \to \infty$, the right side goes to $0$ (regardless of how $\gamma_1, \gamma_2$ are chosen). This implies that $\ell(\cdot)$ satisfies a Cauchy property and hence converges to some limit in the real numbers that we may denote by $\mathfrak{S} (h)$.
\end{proof}

\section{Background results from random matrix theory} \label{sec_background_rmt}

In section, we consider the $p\times n$ dimensional data matrix $Y$ with covariance matrix $\Gamma_n$ and review the necessary background results from random matrix theory for the spectral analysis of the sample covariance matrix 
\begin{align*}
    \hat \Gamma_n = \frac{1}{n} Y Y^\top .
\end{align*}

 We place emphasis on clarity and a unified framework rather than on achieving the weakest possible assumptions.
The dimension $p=p(n)$ is assumed to be a function of the sample size $n$, and all subsequent limits are taken with respect to $n\to\infty.$

\begin{enumerate}[label=(C-\arabic*)]
  \item \label{ass_rmt1} The ratio $p/n$ satisfies $\lim_{n\to\infty} p/n = y$ for some $y\in (0,\infty).$
   \item \label{ass_rmt2}  The data matrix $ Y\in\R^{p \times n}$ has the form $ Y=  \Gamma_n^{1/2} Z$, where the matrix $ Z$ is the upper-left $n\times p$ block of a doubly-infinite array of i.i.d.~random variables $\{z_{ij}:i\geq 1, j\geq 1\}$  such that $\E ( z_{11})=0, \E ( z_{11}^2) = 1$ and $\sup_{q\geq 1} \E[|z_{11}|^q]<\infty$.
   %The data matrix $ Y\in\R^{p \times n}$ has the form $ Y=  \Gamma_n^{1/2} Z$, where the matrix $ Z$ is the upper-left $n\times p$ block of a doubly-infinite array of i.i.d.~random variables $\{z_{ij}:i\geq 1, j\geq 1\}$  such that $\E ( z_{11})=0, \E ( z_{11}^2) = 1$ and $\sup_{q\geq 1} \E[|x_{11}|^q]<\infty$.
  %$\sup_{q\geq 1} q^{-1/\beta}(\E|x_{11}|^q)^{1/q}<\infty$ for some $\beta>0$.
%
  \item \label{ass_rmt3} As $n\to\infty$, the probability measure $H_n:=(1/p)\sum_{i=1}^p \delta_{\lambda_i(\Gamma_n)}$ has a non-degenerate weak limit $H$ with compact support. 
 \end{enumerate}

Next, we describe the  critical threshold, say $1/\xi$, between the subcritical and supercritical regime for the first bulk component \citep{knowles:yin:2014, leeschnelli2016, ding:yang:2021}. Let $h_r$ denote the rightmost endpoint of the support of $H.$
Then, $\xi$ denotes the unique value in $(0,1/h_r)$ that solves the equation
\begin{align*}
    \int \lb \frac{t \xi}{1 - t\xi} \rb^2 d H(t) = \frac{1}{y}.
\end{align*}
We divide the spectrum of $\Gamma_n$ into subcritical and supercritical eigenvalues, as follows. 
\begin{enumerate}[resume, label=(C-\arabic*)]
\item \label{ass_rmt4_sub_super} 
There exists a fixed $\tau \in (0,1)$ and integer $M\geq 0$ such that $\lambda_M(\Gamma_n) \xi > 1$ and $\lambda_{M+1}(\Gamma_n) \xi < 1-\tau$ for all large $n.$ Moreover, $\lambda_1(\Gamma_n), \ldots \lambda_M(\Gamma_n)$ are fixed with respect to $n$ and satisfy $\lambda_1(\Gamma_n) > \ldots > \lambda_M(\Gamma_n).$  
\end{enumerate}
\begin{rem}
The aforementioned works characterize the separation between the subcritical and supercritical regime through the threshold $1/\xi_n$, defined later in \eqref{eq:def:xin}, which is a finite-sample version of $1/\xi$. Under assumptions \ref{ass_rmt1}-\ref{ass_rmt4_sub_super}, it holds $1/\xi_n \to 1/\xi$ as $n\to\infty$ (see Lemma 5 in \cite{doernemann:lopes:2025}).  Therefore, Assumption 2.6 and Assumption 3.3 of \cite{ding:yang:2021} are satisfied in our setting. 
\end{rem}

We now turn to the asymptotic behavior of the leading sample eigenvalues, which are biased upwards compared to their population counterparts. To this end, we need to define the function
\begin{align} \label{eq_def_Psi}
    \Psi_{y} (\beta) :=  \beta + y \int \frac{t \beta}{\beta-t} dH(t), \quad \beta \notin \operatorname{supp}(H).
\end{align}
with derivative
\[
\Psi_{y}'(\beta) = 1 -y \int \frac{t^2}{(\beta - t)^2} \, dH(t).
\]
This function is important to describe the limit of the leading sample eigenvalues in the sub- and supercritical regime. 
Moreover, some authors use alternatively the derivative $\Psi_{y}'$ to write the supercritical condition for $\lambda_1(\Gamma_n)$ as $\Psi_{y}'(\lambda_1(\Gamma_n))>0$ \citep{zhang2022asymptotic, bai2012sample, li2020asymptotic}.  

\begin{lem}[Inconsistency of eigenvalues] \label{lem_rmt_eigenvalues}
Suppose that assumptions \ref{ass_rmt1}-\ref{ass_rmt4_sub_super} are satisfied. 
    For $1 \leq m_u \leq M$, it holds 
      \begin{align*}
            \lambda_{m_u}(\hat\Gamma_n) - \Psi_{y} (\lambda_{m_u}(\Gamma_n)) \conp 0, \quad  %\lambda_2(\hat\Gamma_n) \conp \Psi_{y} (1/\xi), 
            \quad n\to\infty,
        \end{align*}
        while for any fixed $m_l > M$, it holds 
         \begin{align*}
            \lambda_{m_l}(\hat\Gamma_n) \conp \Psi_{y} (1/\xi), %\quad  \lambda_2(\hat\Gamma_n) \conp \Psi_{y} (1/\xi),
            \quad n\to\infty. 
        \end{align*}
\end{lem}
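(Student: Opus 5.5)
This is the background RMT fact (Lemma \ref{lem_rmt_eigenvalues}) on the first-order behaviour of outlier and edge eigenvalues of spiked sample covariance matrices. Rather than prove it from scratch, I would reduce it to known results, chiefly \cite{ding:yang:2021} and \cite{knowles:yin:2014}, after checking that their hypotheses hold under \ref{ass_rmt1}--\ref{ass_rmt4_sub_super}.

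\textbf{Step 1: check the hypotheses and fix the critical point.} Write $\Gamma_n = \Gamma_n^{0} + \sum_{j\le M} (\lambda_j(\Gamma_n)-\lambda_j(\Gamma_n^0)) v_jv_j^\top$, where the non-spiked part $\Gamma_n^0$ has spectrum essentially governed by $H_n \to H$.
\begin{itemize}
\item By \ref{ass_rmt1}, \ref{ass_rmt2} and \ref{ass_rmt3}, together with the Remark preceding the lemma, Assumptions 2.6 and 3.3 of \cite{ding:yang:2021} hold. These are the regularity near the right edge and the separation of subcritical eigenvalues from the threshold.
\item The finite-sample threshold $1/\xi_n$ converges to $1/\xi$ (Lemma 5 in \cite{doernemann:lopes:2025}).
\item Hence, for all large $n$, $\lambda_{M+1}(\Gamma_n)\xi_n<1-\tau/2$ and $\lambda_M(\Gamma_n)\xi_n>1$.
\end{itemize}

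\textbf{Step 2: supercritical spikes, $m_u\le M$.} The outlier location theorem of \cite{ding:yang:2021} gives
\[
\lambda_{m_u}(\hat\Gamma_n) - \Psi_{y_n,H_n}(\lambda_{m_u}(\Gamma_n)) \conp 0,
\]
where $\Psi_{y_n,H_n}$ is the finite-$n$ version of $\Psi_y$. The population spikes are fixed and lie strictly to the right of $1/\xi$, so they stay a fixed distance from $\operatorname{supp}(H)$. On that region the integrand $t\beta/(\beta-t)$ is bounded and continuous in $t$, so \ref{ass_rmt1} and \ref{ass_rmt3} give $\Psi_{y_n,H_n}\to\Psi_y$ uniformly near $\lambda_{m_u}$.

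\textbf{Step 3: subcritical indices, $m_l>M$.} The relevant results are eigenvalue rigidity at the right edge, together with the sticking result for non-outliers (\cite{ding:yang:2021}, and \cite{knowles:yin:2014} in the Gaussian case). They give that for fixed $m_l>M$, $\lambda_{m_l}(\hat\Gamma_n)$ lies within $o_P(1)$ of the right edge $\lambda_+^{(n)}$ of the deformed \MP law $F_{y_n,H_n}$. This uses that the subcritical spikes are separated from the threshold by $\tau$, so they create no outliers. It remains to identify the edge.
\begin{itemize}
\item In the inverse Stieltjes parametrisation, $\lambda_+^{(n)} = \Psi_{y_n,H_n}(1/\xi_n)$, because $1/\xi_n$ is the critical point where $\Psi'_{y_n,H_n}$ vanishes.
\item The derivative $\Psi_y'(\beta)=1-y\int t^2(\beta-t)^{-2}dH$ vanishes exactly when $\int (t\xi/(1-t\xi))^2dH=1/y$, which is the defining equation of $\xi$. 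So the two descriptions agree.
\end{itemize}

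\textbf{Main obstacle: the limit $n\to\infty$ at the edge.} The difficulty is showing $\Psi_{y_n,H_n}(1/\xi_n)\to\Psi_y(1/\xi)$, since $1/\xi$ may approach the support edge $h_r$ where the integrand is unbounded. I would handle it as follows.
\begin{itemize}
\item First note $1/\xi > h_r$ strictly. The defining integral is finite and strictly increasing in $\xi$, and it tends to a value $\ge 1/y$ (possibly infinite) as $\xi\uparrow 1/h_r$.
\item Because $1/\xi>h_r$ strictly, a neighbourhood of $1/\xi$ stays away from the support of $H_n$ for all large $n$. Here I use that the non-spiked eigenvalues converge to $h_r$, which follows from Assumption 2.6 of \cite{ding:yang:2021}.
\item On that neighbourhood, weak convergence $H_n\to H$ together with $\xi_n\to\xi$ gives the convergence of $\Psi_{y_n,H_n}(1/\xi_n)$ by dominated convergence.
\end{itemize}
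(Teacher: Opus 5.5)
Your proposal follows the paper's route. There the lemma is obtained purely by citing Theorem 3.6 and Corollary 3.19 of Ding and Yang (2021), with their Assumptions 2.6 and 3.3 verified via Lemma 5 of D\"ornemann and Lopes ($1/\xi_n\to 1/\xi$). Your Steps 2--3 spell out the passage from the finite-$n$ quantities $\Psi_{y_n,H_n}$, $1/\xi_n$ to $\Psi_y$, $1/\xi$, which the paper leaves implicit.

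One justification is wrong and should be replaced. The eigenvalues of $\Gamma_n^0$ (indices $>M$) need not converge to $h_r$. Condition (C-4) permits subcritical spikes in $(h_r,(1-\tau)/\xi)$, and the paper's own application has exactly these, since $s_{M+1},\dots,s_K>b(0)$. What you actually need is a fixed neighbourhood of $1/\xi$ and of each $\lambda_{m_u}(\Gamma_n)$ that contains no eigenvalue of $\Gamma_n^0$. This follows directly from $\lambda_{M+1}(\Gamma_n)\xi<1-\tau$ and $\lambda_M(\Gamma_n)\xi>1$ with $\lambda_M$ fixed. After that, your ``main obstacle'' is plain bounded convergence.

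Relatedly, take $H_n$ inside $\Psi_{y_n,H_n}$ to be the spectral measure of $\Gamma_n^0$, not the paper's $H_n$. The paper's $H_n$ has an atom at $\lambda_{m_u}(\Gamma_n)$, which would make $\Psi_{y_n,H_n}(\lambda_{m_u}(\Gamma_n))$ undefined.
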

Lemma \ref{lem_rmt_eigenvalues} follows from Theorem 3.6 and Corollary 3.19 in \cite{ding:yang:2021}.

To specify the fluctuations in the subcritical case, we need to define the location parameter $r_n$, the scale parameter $\sigma_n$ and a finite-sample analogon $\xi_n$ of $\xi$ as follows. $\xi_n$ denotes the unique value in $(0,1/\lambda_1(\Gamma_n))$ that solves the equation
\begin{align} \label{eq:def:xin}
    \int \lb \frac{t \xi_n}{1 - t\xi} \rb^2 d H_n(t) = \frac{n}{p},
\end{align}
and we define
\begin{align*}
        r_n & := \frac{1}{\xi_{n}} \lb 1 + y_n \int \frac{\lambda \xi_{n}}{1 - \lambda \xi_{n}} dH_n(\lambda) \rb,
        \\ 
        \sigma_n^3 & :=  \frac{1}{\xi_{n}^3} \bigg( 1 + y_n \int \Big( \frac{\lambda \xi_{n}}{1 - \lambda\xi_{n}}\Big)^3 dH_n(\lambda) \bigg).
    \end{align*}
Moreover, we introduce the multivariate Tracy-Widom distribution \citep{tracy:widom:1994} as follows. 
Let $\mathbf{G}\in\mathbb{R}^{N\times N}$ be a random matrix with independent standard normal entries, and define
\[
\mathbf{W}=\frac{1}{\sqrt{2}}\bigl(\mathbf{G}+\mathbf{G}^\top\bigr),
\]
which is referred to as a Wigner matrix, or equivalently a GOE$(N)$ matrix.
For any fixed integer $k\geq 1$, it is well known that as $N\to\infty$, the rescaled eigenvalues
\[
N^{2/3}\bigl(\lambda_j(\mathbf{W})-2\bigr)_{1\le j\le k}
\]
converge in distribution to a limiting random vector known as the $k$-dimensional Tracy--Widom distribution.
As a consequence, this formulation shows that the $k$-dimensional Tracy--Widom distribution can be approximated numerically by simulating GOE matrices of sufficiently large dimension. 
    
\begin{lem}
    \label{lem_tw}
    Suppose that assumptions \ref{ass_rmt1}-\ref{ass_rmt3} and assumption \ref{ass_rmt4_sub_super} with $M=0$ are satisfied. 
 Then, for every $k\in\N$, it holds
    \begin{align*} 
       \frac{n^{2/3}}{\sigma_n} \lb \lambda_i (\hat\Gamma_n) - r_n \rb _{1 \leq i \leq k}
       \cond (\zeta_1,\dots,\zeta_k)  , \quad n\to\infty,
    \end{align*}
    where $(\zeta_1,\dots,\zeta_k)$ follows a $k$-dimensional Tracy-Widom distribution.

\end{lem}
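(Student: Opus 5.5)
The plan is to reduce the statement to known edge-universality results for sample covariance matrices with general population covariance, and not to re-derive Tracy--Widom fluctuations from scratch. There are two ingredients. The first is a ``spike-free'' Tracy--Widom law for the top $k$ eigenvalues of $\hat\Gamma_n$ when every population eigenvalue stays strictly below the threshold $1/\xi$, with the edge location and scale given by the finite-sample quantities $r_n,\sigma_n$ built from $\xi_n$ and $H_n$. For Gaussian data this is due to \cite{elkaroui:2007}. In the general moment setting of \ref{ass_rmt2} it is due to \citet{leeschnelli2016} and \citet{knowles:yin:2014}, where the multi-eigenvalue (joint) version follows from the Green function comparison argument. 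The second ingredient is the eigenvalue sticking and outlier result of \cite{ding:yang:2021}.

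First I check that \ref{ass_rmt1}--\ref{ass_rmt3} together with \ref{ass_rmt4_sub_super} for $M=0$ put us in the regime of these papers. The aspect ratio converges to $y\in(0,\infty)$, the entries are standardized with all moments bounded, and $H_n\to H$ is non-degenerate and compactly supported. Condition \ref{ass_rmt4_sub_super} with $M=0$ gives $\lambda_1(\Gamma_n)\xi<1-\tau$. Using $\xi_n\to\xi$ (Lemma 5 of \cite{doernemann:lopes:2025}, as in the preceding remark), this yields the uniform edge-regularity condition $\lambda_1(\Gamma_n)\xi_n\le 1-\tau/2$ for large $n$. This is exactly the ``regular edge'' hypothesis of \citet{leeschnelli2016}, and Assumptions 2.6 and 3.3 of \cite{ding:yang:2021}. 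In particular no population eigenvalue generates an outlier, and the rightmost edge of the deformed Marčenko--Pastur law $F_{y_n,H_n}$ is a soft square-root edge located at $r_n$.

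Next I invoke the edge result. Under a regular edge, the rescaled eigenvalues $n^{2/3}\sigma_n^{-1}(\lambda_i(\hat\Gamma_n)-r_n)$, $1\le i\le k$, converge jointly to the $k$-dimensional Tracy--Widom (GOE) law. The key checks are that the normalization constants in those papers coincide with $r_n$ and $\sigma_n$ as defined above. I would verify this by expressing the edge through the critical point of the inverse Stieltjes transform $z(m)=-1/m+y_n\int t/(1+tm)\,dH_n(t)$ and substituting $m=-\xi_n$. The critical point equation $z'(m)=0$ is then exactly the defining equation \eqref{eq:def:xin} for $\xi_n$, so $z(-\xi_n)=r_n$. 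A second-order Taylor expansion produces $\sigma_n^3$, up to routine algebra.

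The main obstacle is matching hypotheses precisely, in two respects. Some of the references allow a fixed number of spikes but require that they stay away from the threshold; \ref{ass_rmt4_sub_super} provides exactly this margin $\tau$. If a reference assumes no spikes at all, I would argue through the sticking estimate of \cite{ding:yang:2021} (their Theorem 3.6). That estimate shows that the top $k$ eigenvalues of $\hat\Gamma_n$ differ from those of the matrix with all subcritical spikes pulled into the bulk by $o_P(n^{-2/3})$, and this does not affect the limit. The other point is that the constants are the $n$-dependent $r_n,\sigma_n$ rather than their limits. This is necessary, since $r_n-r=o(1)$ need not be $o(n^{-2/3})$, and it is consistent with how the cited results are stated.
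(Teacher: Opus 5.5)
Your proposal is correct and follows essentially the same route as the paper, which also proves the lemma purely by citation (Proposition 3 of Dörnemann and Lopes (2025) together with Corollary 3.19 of Knowles and Yin (2017)). Both arguments rest on the observation that (C-4) with $M=0$, combined with $\xi_n\to\xi$, yields a regular soft edge; your explicit check that $r_n=z(-\xi_n)$ and $\sigma_n^3=z''(-\xi_n)/2$ is a useful addition that the paper omits.

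Two small corrections. First, El Karoui (2007) treats complex Gaussian data (GUE Tracy--Widom), so for real data the Gaussian input should be attributed to Lee and Schnelli (2016). Second, your sticking fallback is unnecessary, because the regular-edge condition, phrased through $\xi_n$ and $H_n$, already allows subcritical eigenvalues above $\operatorname{supp}(H)$. The fallback would also be delicate if there were unboundedly many such eigenvalues.
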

Lemma \ref{lem_tw} follows from Proposition 3 in \cite{doernemann:lopes:2025} and Corollary 3.19 in \cite{Knowles2017}. Note that by Lemma \ref{lem_rmt_eigenvalues}, we have $r_n \to \Psi_y (1/\xi)$ for $n\to\infty.$

Next, we quantify the information carried by the sample eigenvectors. To this end, let $\hat v_k$ and $v_k$ denote a unit eigenvector corresponding to the eigenvalue $\lambda_k(\hat \Gamma_n)$ and $\lambda_k(\Gamma_n),$ respectively, for $1\leq k \leq p$. 

\begin{lem} \label{lem_rmt_eigenvector}
Suppose that assumptions \ref{ass_rmt1}-\ref{ass_rmt4_sub_super} are satisfied. 
\begin{enumerate}[label=(\alph*)]
\item \label{lem_rmt_eigenvector_a} 
Let $m_l > M$ be fixed. 
It holds for any fixed (small) $\varepsilon>0$ and unit vector $u \in \R^p$
\begin{align*}
    | \langle u, \hat v_{m_l} \rangle |^2 = o_{P}\lb n^{-1+\varepsilon}  \rb .
\end{align*}
    \item \label{lem_rmt_eigenvector_b} 
    For $1 \leq m_u \leq M$, it holds
      \begin{align*}
        | \langle v_{m_u}, \hat v_{m_u} \rangle |^2  = \frac{\lambda_{m_u}(\Gamma_n) \Psi_{y}'(\lambda_{m_u}(\Gamma_n))}{ \Psi_{y}(\lambda_{m_u}(\Gamma_n))} + o_{P}(1), \quad n\to\infty. 
    \end{align*}
\end{enumerate}
  
\end{lem}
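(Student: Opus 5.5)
The plan is not to reprove these facts from scratch but to read them off the existing spiked-covariance literature, in the same way as Lemma \ref{lem_rmt_eigenvalues}. The work lies in checking that assumptions \ref{ass_rmt1}--\ref{ass_rmt4_sub_super} place us in the setting of \cite{ding:yang:2021} and \cite{Knowles2017}. It also lies in translating their parametrization of the outlier locations and eigenvector overlaps into our function $\Psi_y$ from \eqref{eq_def_Psi}.

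\textbf{Verifying the framework.} First I would check the standing conditions of \cite{ding:yang:2021}.
\begin{itemize}
\item The proportional regime follows from \ref{ass_rmt1}, and the moment conditions from \ref{ass_rmt2}; Gaussian data are a special case.
\item Compact support of the limiting population spectral measure $H$ follows from \ref{ass_rmt3}.
\item A regular right edge of the bulk holds because, by the Remark after \ref{ass_rmt4_sub_super}, $1/\xi_n \to 1/\xi$ and their Assumptions 2.6 and 3.3 hold.
\item The spikes are separated from the critical threshold by the fixed margin $\tau$ in \ref{ass_rmt4_sub_super}, and the supercritical ones are fixed and simple, $\lambda_1(\Gamma_n)>\dots>\lambda_M(\Gamma_n)$.
\end{itemize}
Simplicity matters because it means each outlier eigenvector is a single vector rather than a projection onto a multidimensional eigenspace.

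\textbf{Part (a).} I would invoke the non-outlier eigenvector delocalization result of \cite{ding:yang:2021}, which rests on the anisotropic local law of \cite{Knowles2017}. Its bound is
\[
|\langle u,\hat v_{m_l}\rangle|^2 \prec n^{-1} + \text{(spike-dependent term)}.
\]
For a fixed index $m_l>M$, the eigenvalue $\lambda_{m_l}(\hat\Gamma_n)$ sits at distance $O_P(n^{-2/3})$ from the edge. The spike-dependent term involves $(\lambda_i(\Gamma_n)\xi - 1)^{-2}$ for subcritical $i$, and this is bounded by $\tau^{-2}$ by \ref{ass_rmt4_sub_super}. Hence the whole bound is $O_\prec(n^{-1})$. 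Stochastic domination $\prec n^{-1}$ means that for each $\varepsilon>0$ the bound holds with $n^{-1+\varepsilon/2}$ with overwhelming probability, which gives $o_P(n^{-1+\varepsilon})$.

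\textbf{Part (b).} I would quote the outlier-eigenvector theorem of \cite{ding:yang:2021}. Alternatively, I would rederive the deterministic limit by the classical residue argument. One writes $|\langle v,\hat v\rangle|^2$ as the contour integral of $\langle v,(\hat\Gamma_n - z)^{-1} v\rangle$ around the outlier $\Psi_y(\lambda_{m_u})$. One then replaces the resolvent by its isotropic deterministic equivalent, which is expressed through the Stieltjes transform of the deformed \MP law $F_{y,H}$. Computing the residue yields the first-order formula
\[
\frac{\lambda\,\Psi_y'(\lambda)}{\Psi_y(\lambda)}, \qquad \lambda=\lambda_{m_u}(\Gamma_n).
\]
This is the same form as in \cite{bai2012sample} and \cite{zhang2022asymptotic}, where the supercritical condition reads $\Psi_y'(\lambda)>0$.

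\textbf{Expected main obstacle.} The main obstacle is bookkeeping rather than analysis. \cite{ding:yang:2021} state their results with $n$-dependent quantities such as $H_n$ and $\xi_n$, and in their own normalization of the outlier map. I expect to have to show that their $\theta$-function coincides with $\Psi_y$, and that replacing $H_n, y_n$ by $H, y$ costs only $o(1)$. That second point follows from weak convergence $H_n\to H$ together with the uniform separation of $\lambda_{m_u}$ from $\operatorname{supp}(H)$: this separation makes the integrands $t/(\beta-t)$ and $t^2/(\beta-t)^2$ bounded and continuous on the support, so the integrals converge.
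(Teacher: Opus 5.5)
Your proposal is correct and follows the same route as the paper. The paper proves the lemma simply by citing Theorem 3.14 (non-outlier eigenvector delocalization) and Example 3.12 (outlier overlap formula) of \cite{ding:yang:2021}, with applicability of that framework justified by the Remark after \ref{ass_rmt4_sub_super}. Your extra bookkeeping (turning $\prec n^{-1}$ into $o_P(n^{-1+\varepsilon})$, and passing from $n$-dependent quantities to $H,y$) is a sound elaboration, with one small caveat: take the limit $H_n\to H$ using the spectral measure of the unperturbed part, i.e.\ with the supercritical spike atoms removed, because $H_n$ itself has an atom at $\lambda_{m_u}(\Gamma_n)$, where $t/(\beta-t)$ is singular.
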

Part \ref{lem_rmt_eigenvector_a} follows from Theorem 3.14 in \cite{ding:yang:2021}, while Part \ref{lem_rmt_eigenvector_b} follows from Example 3.12 in the same work. The result in \ref{lem_rmt_eigenvector_a} is often referred to complete delocalization of $\hat v_{m_l}$, meaning that it spreads its mass roughly equally across all coordinates. In particular, it does not carry information about the principal component $v_{m_u}$.

Next, we review the asymptotic behavior of the empirical spectral measure
\begin{align*}
    F_{\hat\Gamma_n} = \frac{1}{p} \sum_{i=1}^p \delta_{\lambda_i(\hat\Gamma)}.
\end{align*}

The generalized or deformed \MP $F_{y,H}$ law depends on two parameters, the scale $y>0$ and a compactly supported probability measure $H$. $F_{y,H}$ is uniquely characterized through its Stieltjes transform $s_{\gamma,H}$ which is the unique solution to the equation
\begin{align*}
    s_{y,H}(z) = \int \frac{1}{t ( 1 - y - y z s_{y,H}(z)) - z} dH(t), \quad z \in \mathbb{C}^+. 
\end{align*}
In general, there is no closed-form solution, unless in the case $H=\delta_{\sigma^2}$ for some $\sigma^2>0$. 
\begin{rem}
A few comments on the support of the \MP distribution are in place \citep{Knowles2017, yao:zheng:bai:2015, bai:silverstein:2010}, which consists of a finite union of compact intervals. The inconsistency of the sample eigenvalues is reflected in the fact that the support of the \MP distribution is a proper superset of the support of $H.$
Moreover, the limit of the leading sample eigenvalue in the subcritical case equals the rightmost endpoint of the support of the \MP distribution, while, in the supercritical regime, $\lambda_1(\hat\Gamma)$ falls asymptotically outside this support (see Lemma \ref{lem_rmt_eigenvalues}).  
\end{rem}

\begin{lem}[Theorem 2.14 in \cite{yao:zheng:bai:2015}] \label{lem:mp}
Suppose that assumptions \ref{ass_rmt1}-\ref{ass_rmt3} are satisfied. Then, 
\begin{align*}
    F_{\hat\Gamma_n} \cond F_{y,H} \quad \textnormal{ almost surely }, \quad n\to \infty.
\end{align*}
\end{lem}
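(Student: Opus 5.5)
This statement, Lemma \ref{lem:mp}, is cited from Theorem 2.14 in \cite{yao:zheng:bai:2015}, so the plan is to check that our assumptions match the hypotheses there and then recall the standard Stieltjes transform argument behind it. I would first reduce from general $H$ to a truncated setting and then carry out the three classical steps.

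\emph{Reduction.} Using \ref{ass_rmt2}, I would truncate and recentre the entries $z_{ij}$ at level $\eta_n\sqrt n$ with $\eta_n\downarrow 0$. The rank inequality together with the finite moments shows that this changes $F_{\hat\Gamma_n}$ by $o(1)$ in Lévy distance almost surely. I would also replace $\Gamma_n$ by a version whose spectrum is bounded by $h_r+1$. Since only finitely many eigenvalues, or an $o(p)$ proportion of them, are affected by \ref{ass_rmt3}, this again costs $o(1)$ in Kolmogorov distance.

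\emph{Stieltjes transform equation.} Fix $z\in\mathbb{C}^+$ and set $s_n(z)=p^{-1}\operatorname{tr}(\hat\Gamma_n-z)^{-1}$. I would use the companion matrix $\underline{\hat\Gamma}_n=n^{-1}Y^\top Y$, whose transform is $\underline s_n=-(1-p/n)/z+(p/n)s_n$. The rank-one Sherman--Morrison formula, applied column by column, expresses $\underline s_n$ through quadratic forms $n^{-1}z_j^\top\Gamma_n^{1/2}(\hat\Gamma_{n,-j}-z)^{-1}\Gamma_n^{1/2}z_j$. These concentrate around $n^{-1}\operatorname{tr}\Gamma_n(\hat\Gamma_{n,-j}-z)^{-1}$ by the moment bound for quadratic forms, which uses $\sup_q\E|z_{11}|^q<\infty$. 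A martingale difference decomposition with Burkholder's inequality then gives $s_n-\E s_n\to0$ almost surely via Borel--Cantelli. Combining these steps yields the approximate equation
\[
\E s_n(z)=\int\frac{dH_n(t)}{t\,(1-y_n-y_nz\,\E s_n(z))-z}+o(1).
\]

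\emph{Limit and uniqueness.} The family $\{\E s_n\}$ is analytic and bounded by $1/\Im z$ on compacts of $\mathbb{C}^+$, so it is normal. Any subsequential limit satisfies the deformed equation with $H$ and $y$, using $H_n\Rightarrow H$ and $y_n\to y$. The main obstacle is uniqueness of the solution in the class of Stieltjes transforms, meaning $\Im s>0$ and $\Im(zs)\ge0$. I would handle this by the classical argument: subtract the equations for two solutions and show that the resulting factor has modulus $<1$ via Cauchy--Schwarz on the imaginary parts. This identifies the limit as $s_{y,H}$. Convergence on a countable dense set of $z$ almost surely, combined with tightness from bounded $\int t\,dF_{\hat\Gamma_n}$, finally gives $F_{\hat\Gamma_n}\Rightarrow F_{y,H}$ almost surely.
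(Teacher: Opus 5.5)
Your proposal is correct and follows the same route as the paper. The paper simply invokes Theorem 2.14 of Yao, Zheng and Bai (2015), whose hypotheses (i.i.d.\ standardized entries, $p/n\to y$, $H_n\Rightarrow H$) are exactly \ref{ass_rmt1}--\ref{ass_rmt3}, and your truncation, companion-transform/Sherman--Morrison, martingale-concentration and uniqueness steps are the standard Silverstein argument behind that theorem (in the paper's Gaussian, bounded-spectrum application the truncation steps are not even needed).
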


\section{Additional numerical results}\label{sec_supp_numeric}

We provide some more empirical results for our simulation study and from our data analysis. 

\subsection{Additional simulation results}

We consider the simulation setting described in Section \ref{sec:simulations}. There, we showed the distribution of the largest eigenvalue $\hat \lambda_1$ compared to the theoretical limit in the sub- and supercritical regime, for the bulk function $b_1$. The results were displayed in Figure \ref{Fig:eigenW1} and here, we provide analogous results for $b_2$ and $b_3$ in Figure \ref{Fig:eigenW2}. We see that the approximations are similar to those for $b_1$. Next, we consider results for the bulk functions $b_2,b_3$ and the angle between the leading empirical eigenfunction $\hat e_1$ and the population eigenfunction $e_1$ (see Figure \ref{Fig:eigenV2}). This is an analogue of the setting displayed in Figure \ref{Fig:eigenV1}. Once more, we see that the approximations are very similar to those for $b_1$.

\subsection{Additional details on smoothing}

Temperature curves, as displayed in Figure \ref{fig:1} are often smoothed before use in statistical estimation or inference. As we have mentioned before, smoothing is associated with potential costs and benefits. A cost could be information loss, either about extreme events (e.g., severe heatwaves) or about the speed of change, which is smoothed out.
Another problem might be loss of local information for regression, in particular for function-on-function type. Maybe this latter effect can be countered by smoothing to the same extent regressor and response, but more investigation is needed. Potential advantages of smoothing, in the context of PCA, is certainly that data will concentrate more closely around a low-dimensional subspace, at least if sufficiently much smoothing is applied. This does not, however, mean that the phase transitions described by our theory will immediately vanish for a reasonable amount of smoothing. We illustrate this by briefly revisiting Figure \ref{fig:1} for smoothed data. We consider smoothing by taking a rolling average over one week in Figure \ref{fig:1_7} and over one month  (30 days) in Figure \ref{fig:1_30}. Depending on the context and aim of the analysis, smoothing over a week may be sensible. For most contexts, smoothing over a month may be quite extreme -- we might wonder how much extra information such a curve really entails compared to a vector of monthly averages. Regardless of these considerations, Figures \ref{fig:1_7} and \ref{fig:1_30} are perfectly in line with our theory and show that the effect of biased PCA only recedes very gradually. In the unsmoothed data (Figure \ref{fig:1}), the first eigenfunctions enclosed an angle at $N=110$ of about $40$°. This angle decreases to $30$° for $7$ day smoothing and then to $20$° for 30 day smoothing. So the function is biased but informative and the bias decreases if data are smoothed. In contrast, for the third eigenfunction, we see very little progress, even for $N=110$, going from approximately $75$° to $65$° after smoothing, and we cannot see a great difference between smoothing over a week or over a month. As  $65$° is a fairly large angle, we might say that this function is barely informative at all (it is presumably subcritical).

\begin{figure}[H]
  \centering
  \begin{tabular}{cc}
    % Row 2
    \begin{subfigure}{0.48\textwidth}
      \centering
      \includegraphics[width=\linewidth]{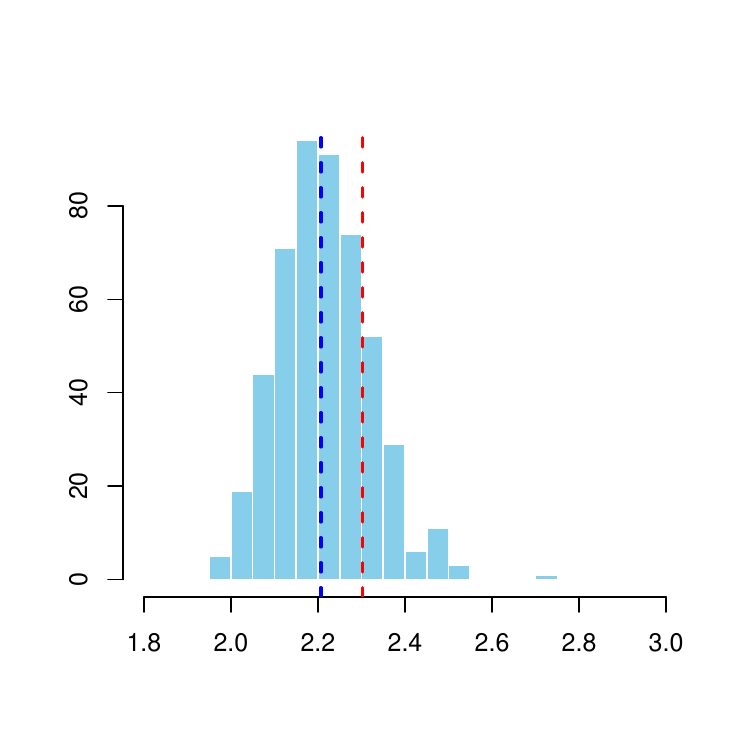}
   
    \end{subfigure} &
    \begin{subfigure}{0.48\textwidth}
      \centering
      \includegraphics[width=\linewidth]{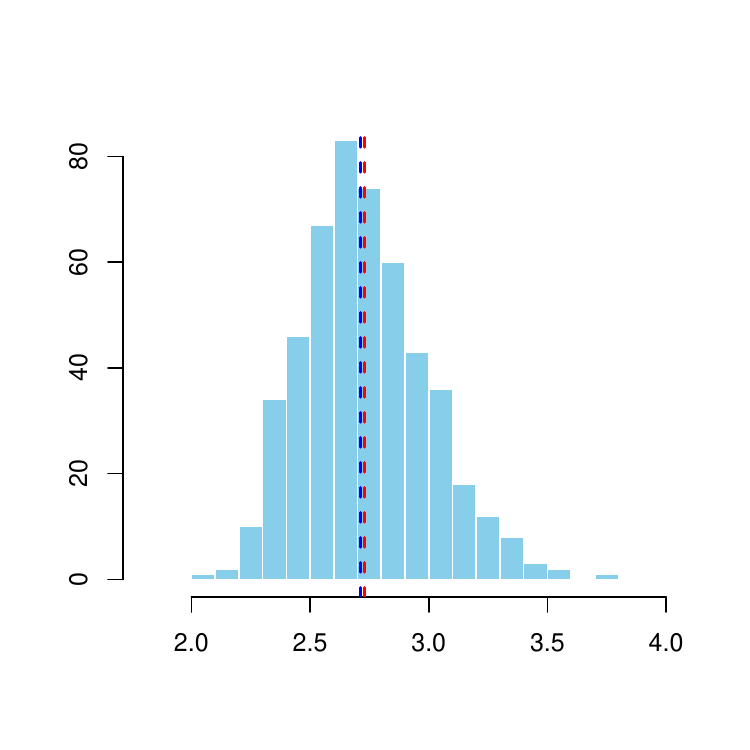}
   
    \end{subfigure}
    \\[0em]
    % Row 3
    \begin{subfigure}{0.48\textwidth}
      \centering
      \includegraphics[width=\linewidth]{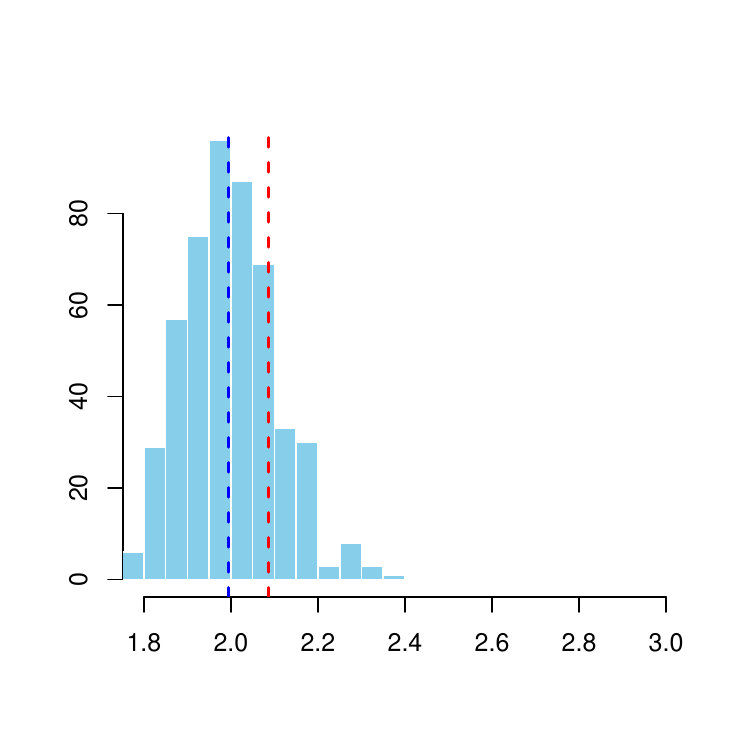}
   
    \end{subfigure} &
    \begin{subfigure}{0.48\textwidth}
      \centering
      \includegraphics[width=\linewidth]{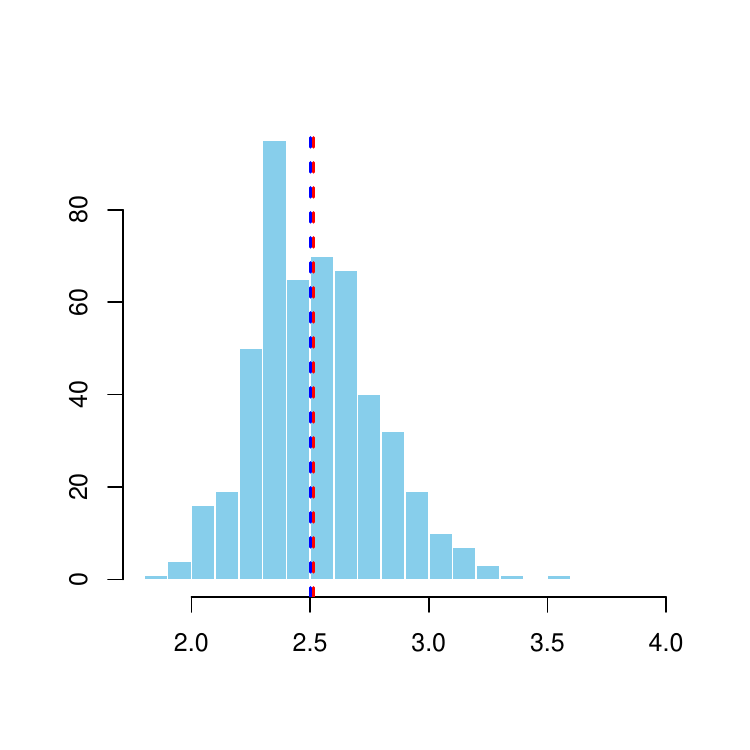}
    \end{subfigure}
  \end{tabular}
  \caption{\label{Fig:eigenW2} Distribution of largest eigenvalue $\hat \lambda_1$ in $500$ simulation runs. Blue vertical line indicates median of the values and red vertical line the theoretical limit. Subcritical cases are left, supercritical cases right. We have considered as bulk function $b_2$ in the first row and $b_3$ in the second row.}
\end{figure}

\begin{figure}[H]
  \centering
  \begin{tabular}{cc}
    % Row 2
    \begin{subfigure}{0.48\textwidth}
      \centering
      \includegraphics[width=\linewidth]{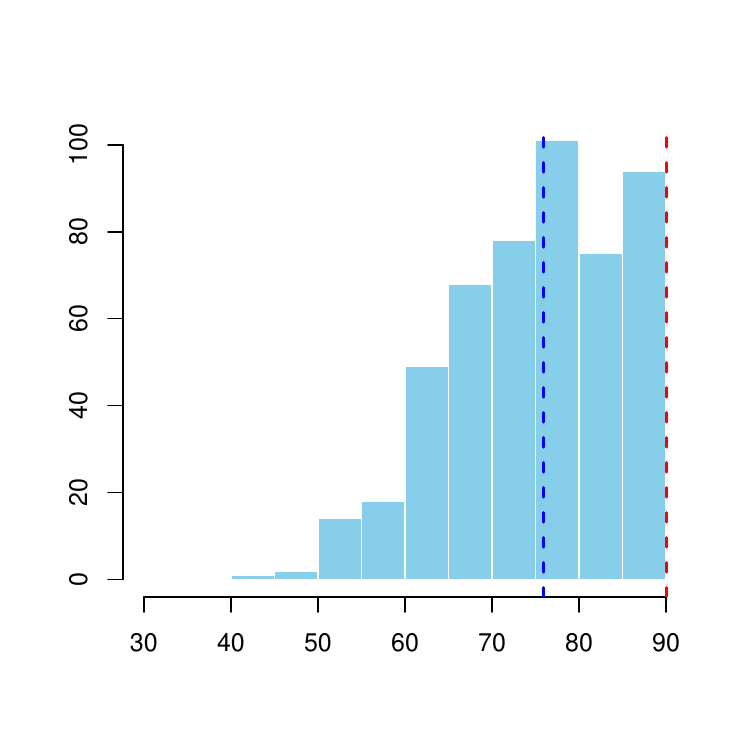}
   
    \end{subfigure} &
    \begin{subfigure}{0.48\textwidth}
      \centering
      \includegraphics[width=\linewidth]{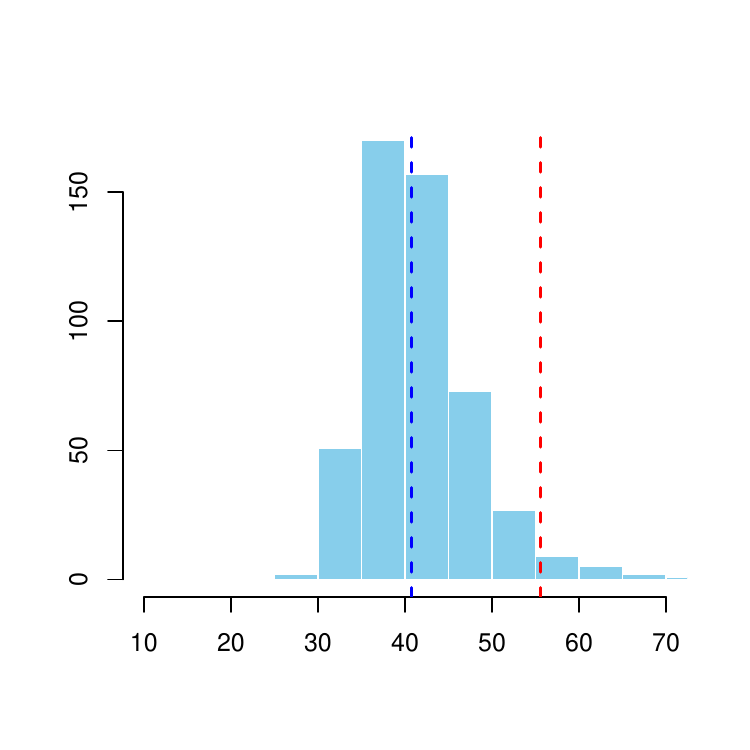}
   
    \end{subfigure}
    \\[0em]
    % Row 3
    \begin{subfigure}{0.48\textwidth}
      \centering
      \includegraphics[width=\linewidth]{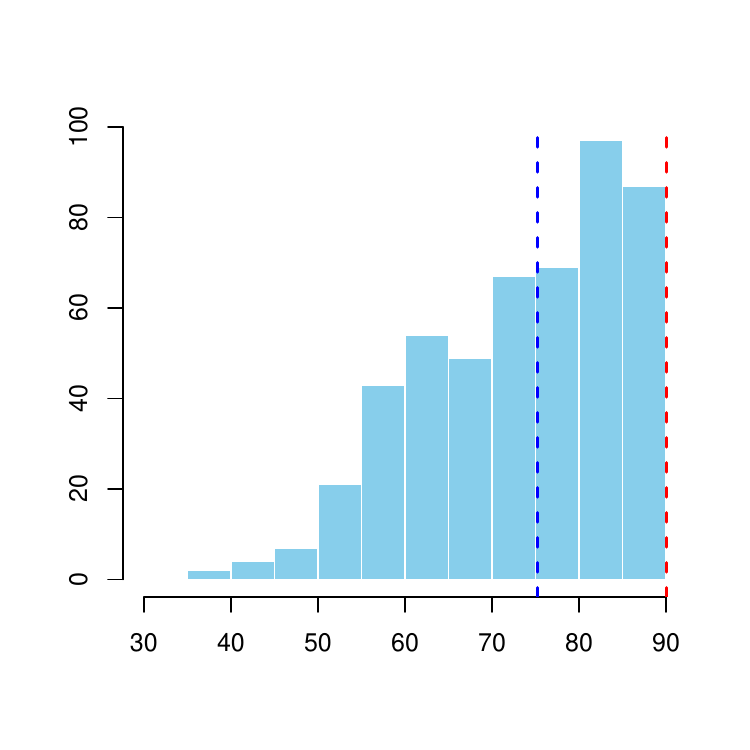}
   
    \end{subfigure} &
    \begin{subfigure}{0.48\textwidth}
      \centering
      \includegraphics[width=\linewidth]{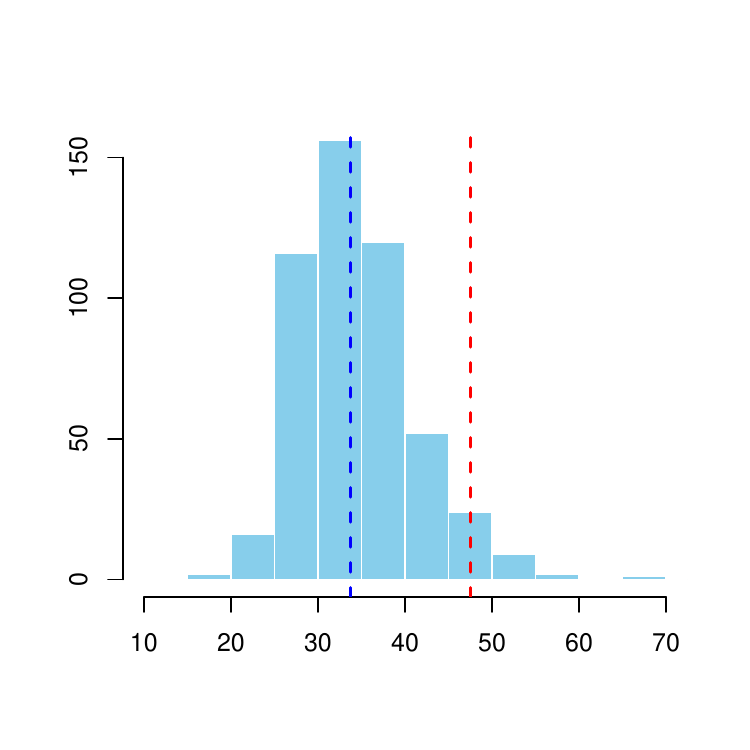}
    \end{subfigure}
  \end{tabular}
  \caption{\label{Fig:eigenV2} Distribution of the angle between $\hat e_1$ and $e_1$ in $500$ simulation runs. Blue vertical line indicates median of the values and red vertical line the theoretical limit. Subcritical cases are left, supercritical cases right. We have considered as bulk function $b_2$ in the first row and $b_3$ in the second row.}
\end{figure}

\newpage

\begin{figure}[H]
  \centering
  \begin{subfigure}{0.49\textwidth}
    \centering
    \includegraphics[width=\linewidth]{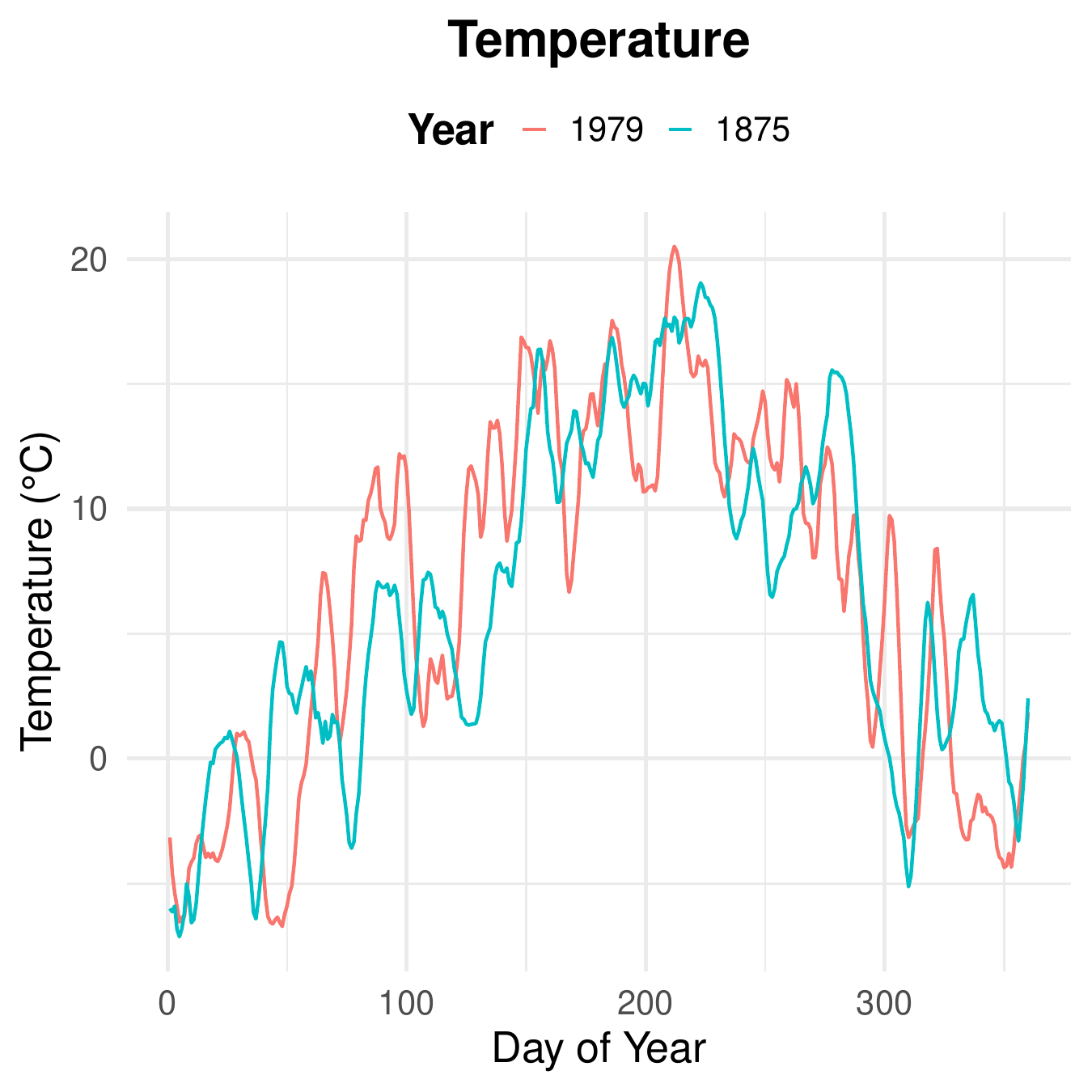} %9x9 pdf
  \end{subfigure}
  \hfill
  \begin{subfigure}{0.49\textwidth}
    \centering
    \includegraphics[width=\linewidth]{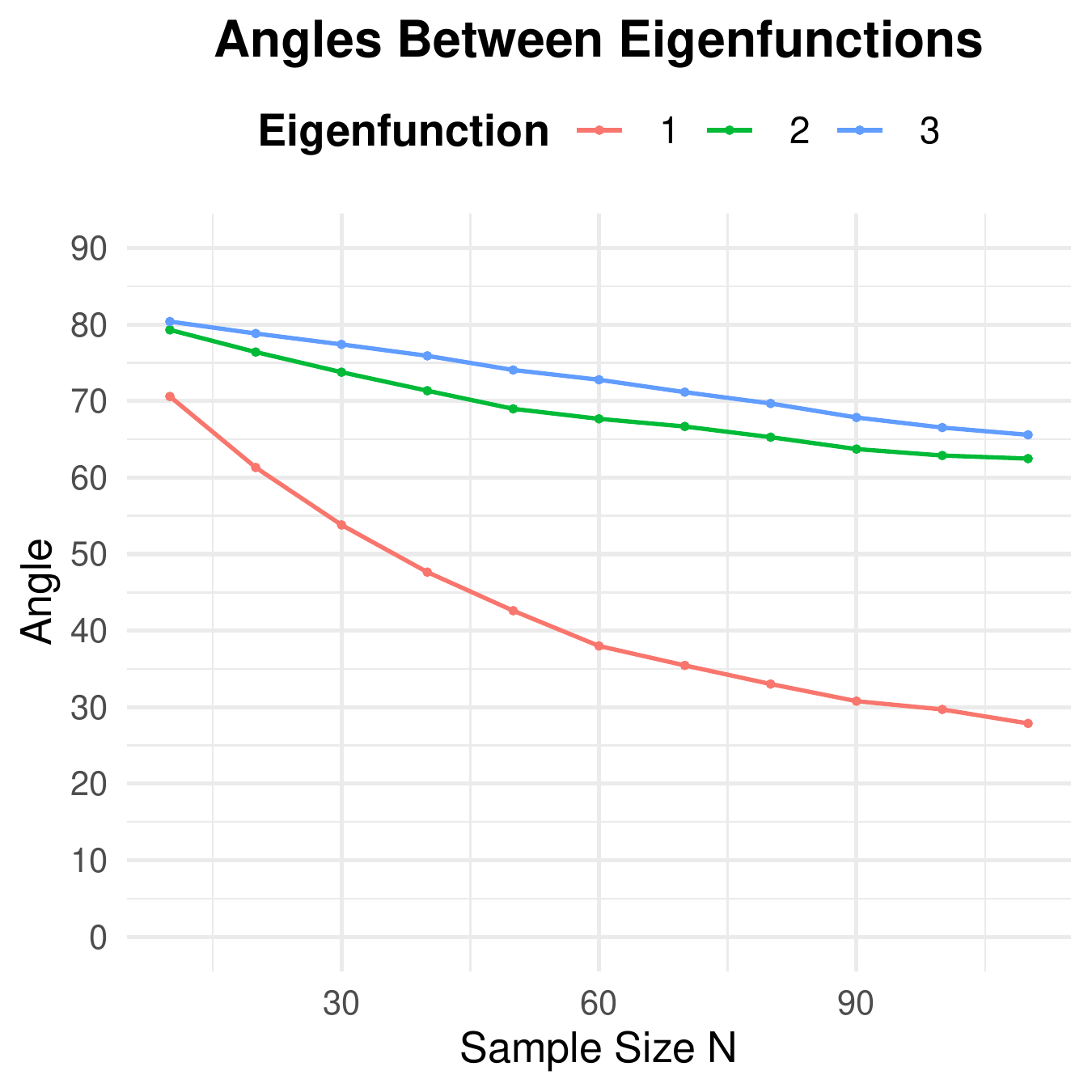}
  \end{subfigure}
  \caption{\label{fig:1_7} Left: Selected temperature profiles from Hohenpeissenberg, smoothed over 7 days. Right:  Average angles between estimated $k$th principal components, $k = 1, 2, 3$, computed from
two random samples of $N$ temperature curves smoothed over 7 days. }
\end{figure}

\begin{figure}[H]
  \centering
  \begin{subfigure}{0.49\textwidth}
    \centering
    \includegraphics[width=\linewidth]{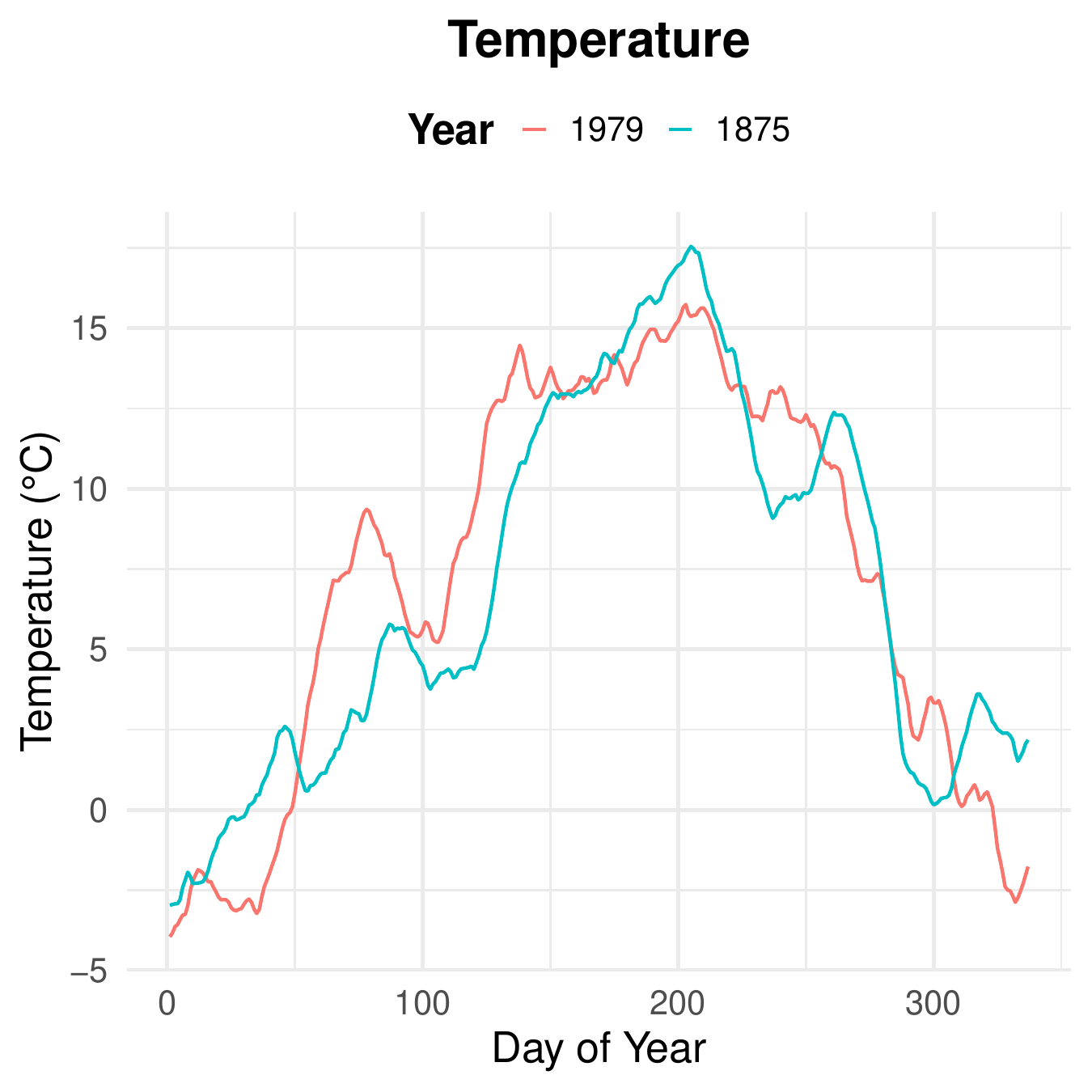} %9x9 pdf
  \end{subfigure}
  \hfill
  \begin{subfigure}{0.49\textwidth}
    \centering
    \includegraphics[width=\linewidth]{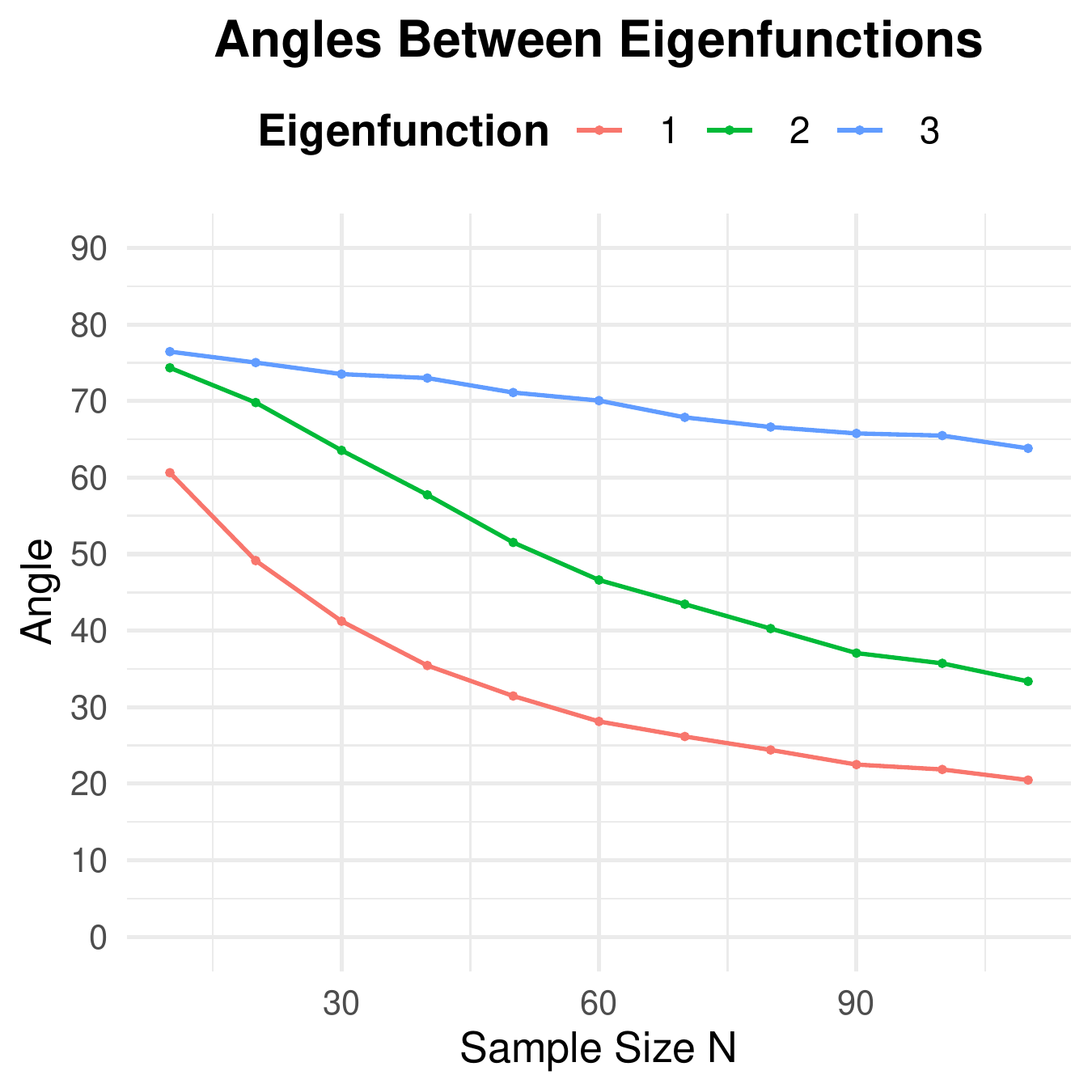}
  \end{subfigure}
  \caption{\label{fig:1_30} Left: Selected temperature profiles from Hohenpeissenberg, smoothed over 30 days. Right:  Average angles between estimated $k$th principal components, $k = 1, 2, 3$, computed from
two random samples of $N$ temperature curves smoothed over 30 days. }
\end{figure}

The second eigenfunction is still barely informative even after smoothing over a week (also about $65$°), but the angle drops below $35$° after smoothing over a month. So, for this eigenfunction, a moderate amount of smoothing was not really helpful, but substantial smoothing lead to it being informative. Since smoothing over a month may often be quite extreme (notice the smoothness of curves in Figure \ref{fig:1_30}, left, which is really far removed from actual temperature profiles), it is hard to say whether the eventual proximity of second eigenfunctions is very meaningful.

\subsection{Additional results for the data analysis}

We present some additional illustrations supplementing the data analysis in the main part of this text. Details are given in Section \ref{sec:data}.
First, we want to provide further evidence of the irregularity of higher order eigenfunctions. For this purpose, we plot the eigenfunctions of order $k=50,100$ of the temperature dataset, as well as the corresponding ACF of the discretized data in Figure \ref{ef:late}. This is an analogue of Figure \ref{fig:stacked_plots}. We observe that the eigenfunctions are extremely irregular and look visually like a stationary and weakly dependent time series. The ACFs strengthen this impression because they are fairly fast decaying. 
Next, we provide an analogue of Figure \ref{fig:EMPEF} in Figure \ref{fig:manyreps}, again for the eigenfunctions of order $k=50$ (left) and $k=100$ (right). We depict estimated eigenfunctions from $50$ random samples of size $N=100$ as light blue lines and the average as a thick blue line. In both cases, a large amount of variability is visible and the average is close to the uninformative line $y=0$. 
Finally, we come to the approximation of the mean parameter by a finite-dimensional projection; this is an analogue of Figure \ref{fig:means} in Section \ref{sec:data}. Here, the mean parameter for the temperature and river discharge data (blue solid line) are projected on the Fourier basis (green dotted line) and the eigenbasis (red dashed line). Approximations by $k=10$ Fourier basis elements were already close to the unprojected parameter, as seen in Figure \ref{fig:means}, while approximations using $k=10$ eigenfunctions were very poor. We want to find out how strong the improvement is when using more eigenfunctions and we use $k=50$ and $k=100$ -- both very large numbers for projections.\footnote{There is a small subtle point here: The river discharge data has sample size $N=86$. Eigenfunctions of order $k>N$ are still produced by our software (\texttt{R}) but are essentially arbitrary directions. Also notice that the projected mean does not have to equal the true mean for $k>N$, of course, because the it may lie orthogonal to (some of) the observed variation.} Our results are displayed in Figure \ref{fig:means:k}. They show that, while there is some improvement for larger $k$, this improvement is fairly slow, especially for the temperature data. Even a value of $k=50$ would usually be deemed large for the purpose of projection, and the parameter projection is a very rough, but not particularly precise approximation.

\begin{figure}[H]
    \centering
    % Upper row
    \begin{subfigure}[b]{0.48\textwidth}
        \centering
        \includegraphics[width=\linewidth]{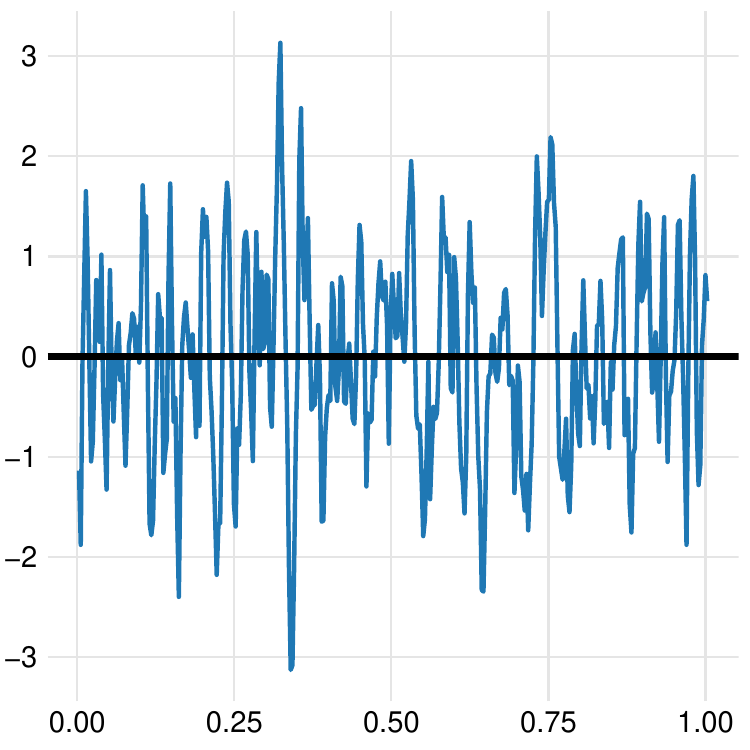}
        \caption{$50$th empirical eigenfunction for temperature data.}
    \end{subfigure}
    \hfill
    \begin{subfigure}[b]{0.48\textwidth}
        \centering
        \includegraphics[width=\linewidth]{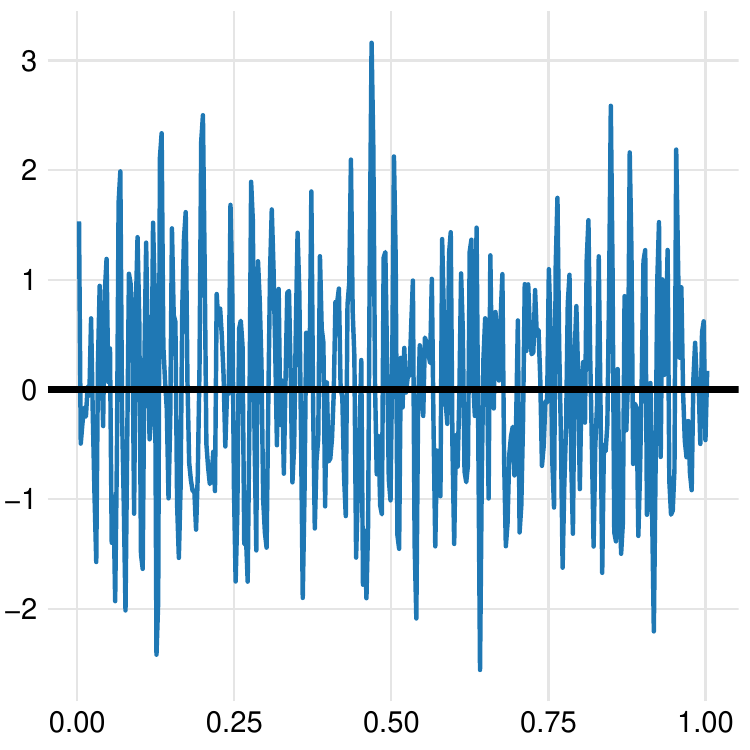}
        \caption{$100$th empirical eigenfunction for temperature data.}
    \end{subfigure}
    
    \vspace{0.5cm} % optional vertical spacing between rows
    
    % Lower row
    \begin{subfigure}[b]{0.48\textwidth}
        \centering
        \includegraphics[width=\linewidth]{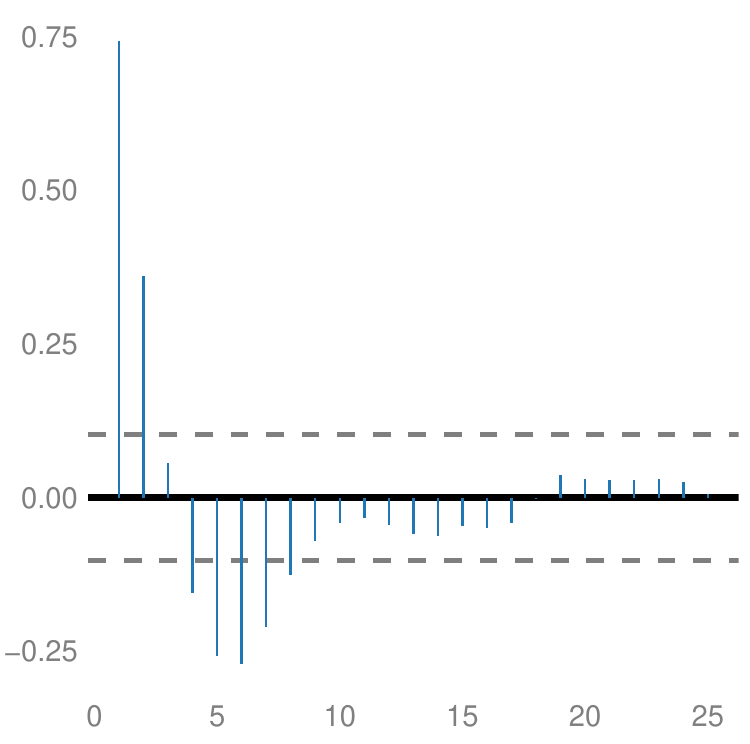}
        \caption{ACF for $50$th eigenfunction.}
    \end{subfigure}
    \hfill
    \begin{subfigure}[b]{0.48\textwidth}
        \centering
        \includegraphics[width=\linewidth]{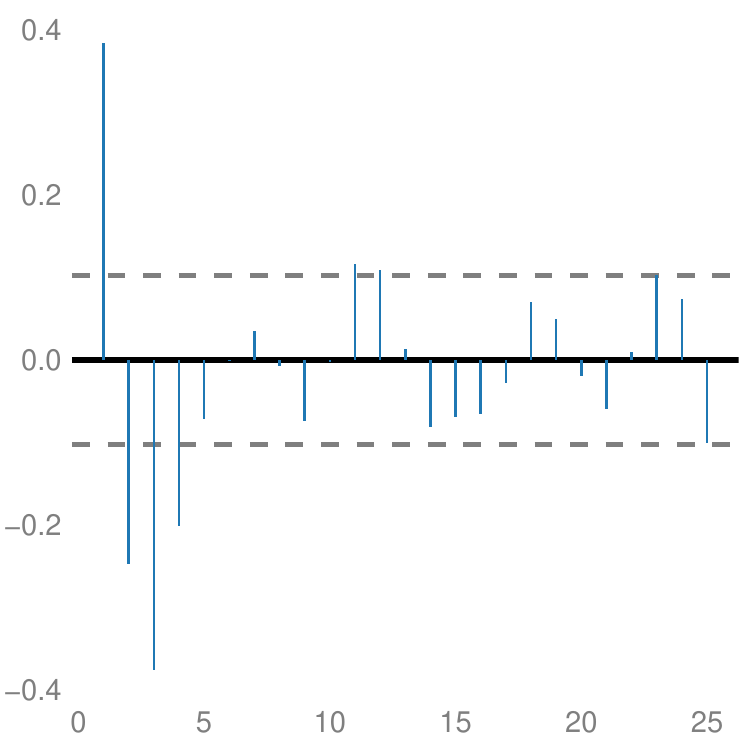}
        \caption{ACF for $100$th eigenfunction.}
    \end{subfigure}
    \caption{Comparison of eigenfunctions of order $k=50$ (left) and $k=100$ (right). ACFs are calculated (lower) for a daily discretization of these functions. Grey horizontal lines indicate the standard $95\%$ confidence interval around a correlation of $0$. \label{ef:late}}

\end{figure}

\begin{figure}[H]
    \centering
    % Upper row
    \begin{subfigure}[b]{0.49\textwidth}
        \centering
        \includegraphics[width=\linewidth]{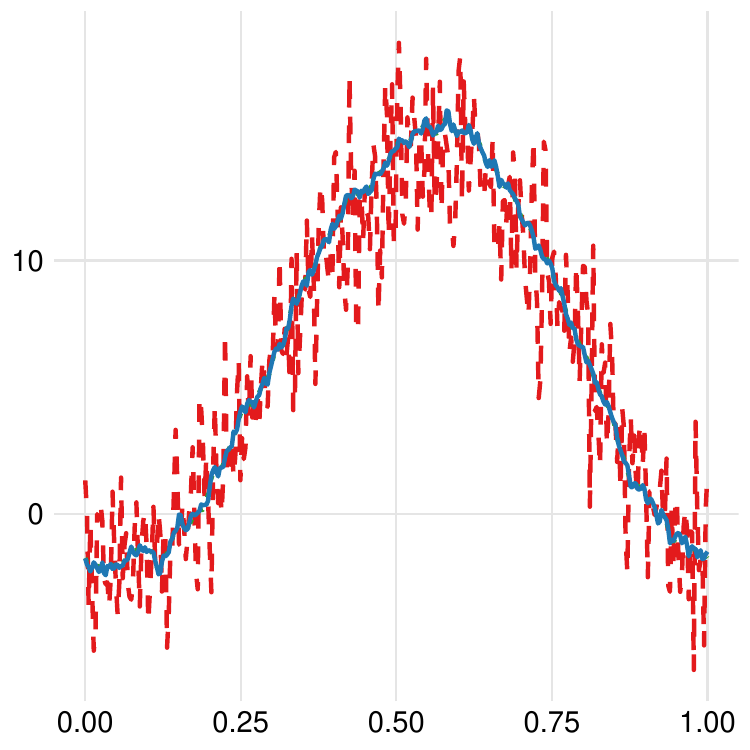}
        \caption{Temperature mean and projections for $k=50$ basis functions.}
    \end{subfigure}
    \hfill
    \begin{subfigure}[b]{0.49\textwidth}
        \centering
        \includegraphics[width=\linewidth]{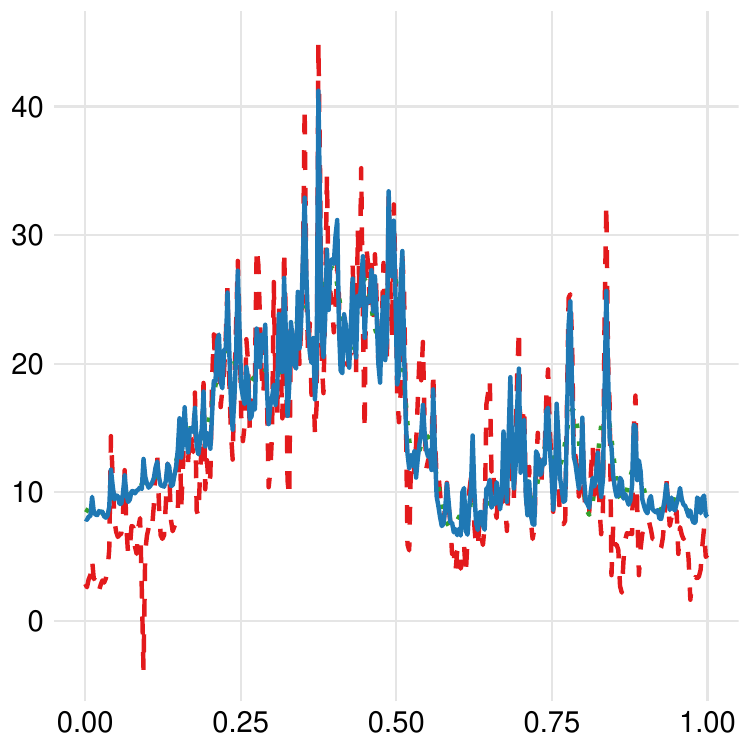}
        \caption{River discharge mean and projections for $k=50$ basis functions.}
    \end{subfigure}
    
    \vspace{0.5cm} % optional vertical spacing between rows
    
    % Lower row
    \begin{subfigure}[b]{0.48\textwidth}
        \centering
        \includegraphics[width=\linewidth]{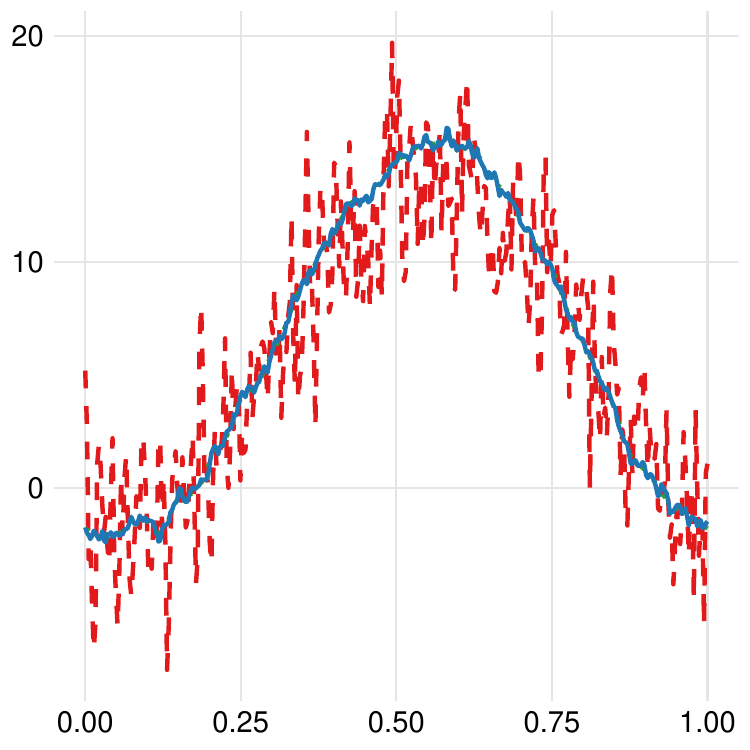}
        \caption{Temperature mean and projections for $k=100$ basis functions.}
    \end{subfigure}
    \hfill
    \begin{subfigure}[b]{0.48\textwidth}
        \centering
        \includegraphics[width=\linewidth]{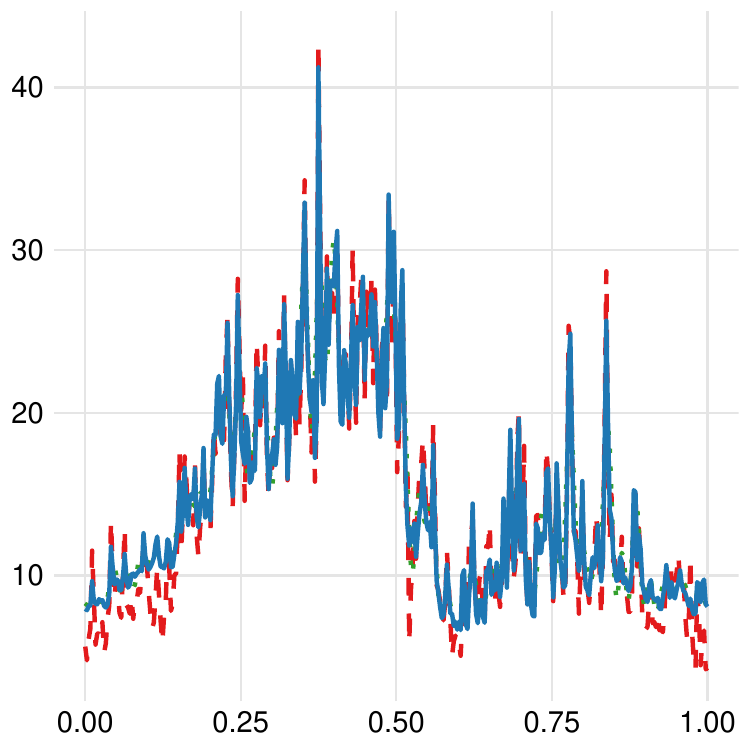}
        \caption{River discharge mean and projections for $k=100$ basis functions.}
    \end{subfigure}
    \caption{Mean function and its projections for temperature data (left) and river discharge data (right), using a basis expansions with $k=50$ (upper) and $k=100$ (lower) components. The unprojected mean is the blue solid line. The projection on the Fourier basis is the dotted green line and the projection on the empirical eigenfunctions is the red dashed line. \label{fig:means:k}}

\end{figure}

\begin{figure}[H]
  \centering
  \begin{subfigure}{0.49\textwidth}
    \centering
    \includegraphics[width=\linewidth]{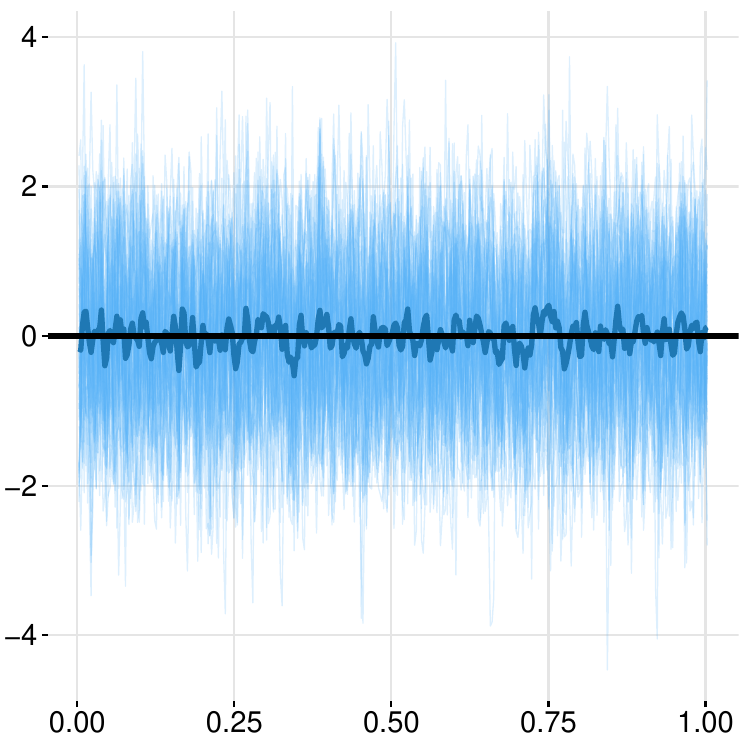}
  \end{subfigure}
  \hfill
  \begin{subfigure}{0.49\textwidth}
    \centering
    \includegraphics[width=\linewidth]{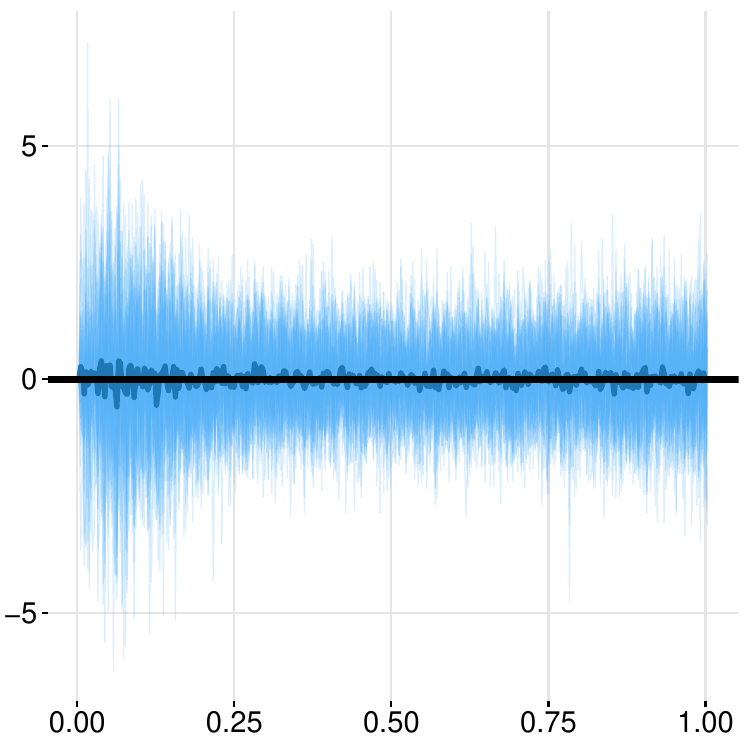}
  \end{subfigure}
  \caption{\label{fig:manyreps}  $50$ realizations of empirical eigenfunctions of order $k=50$ (left) and order $k=100$ (right). Individuals estimates are depicted as light blue lines and the mean as a bold blue line.
  All estimates are based on a sample size of $N=100$.}
\end{figure}

\putbib  % uses the default .bib file
\end{bibunit}
\end{document}